\pdfoutput=1
\documentclass{article}
\usepackage{paper}
\usepackage{subcaption}
\usepackage{multirow}
\newcounter{num}
\setcounter{num}{1}

\makeatletter
\newcommand{\TODO}[1]{\textcolor{red}{[TODO\@ifnotempty{#1}{: #1}]}}
\newcommand{\sandeep}[1]{\textcolor{orange}{[sandeep\@ifnotempty{#1}{: #1}]}}
\newcommand{\piotr}[1]{\textcolor{blue}{[piotr\@ifnotempty{#1}{: #1}]}}

\makeatother

\usepackage{algorithm}
\usepackage{algpseudocode}
\usepackage{wrapfig}
\usepackage{thmtools}
\usepackage{thm-restate}

\title{Faster Linear Algebra for Distance Matrices}
\author{Piotr Indyk \\ MIT \\ \texttt{indyk@mit.edu}  \and Sandeep Silwal\\ MIT  \\ \texttt{silwal@mit.edu}}
\date{}

\begin{document}
\maketitle
\begin{abstract}
   The distance matrix of a dataset $X$ of $n$ points with respect to a distance function $f$ represents all pairwise distances between points in $X$ induced by $f$. Due to their wide applicability, distance matrices and related families of matrices have been the focus of many recent algorithmic works. We continue this line of research and take a broad view of algorithm design for distance matrices with the goal of designing fast algorithms, which are specifically tailored for distance matrices, for fundamental linear algebraic primitives. Our results include efficient algorithms for computing matrix-vector products for a wide class of distance matrices, such as the $\ell_1$ metric for which we get a linear runtime, as well as an $\Omega(n^2)$ lower bound for any algorithm which computes a matrix-vector product for the $\ell_{\infty}$ case, showing a separation between the $\ell_1$ and the $\ell_{\infty}$ metrics. Our upper bound results, in conjunction with recent works on the matrix-vector query model, have many further downstream applications, including the fastest algorithm for computing a relative error low-rank approximation for the distance matrix induced by $\ell_1$ and $\ell_2^2$ functions and the fastest algorithm for computing an additive error low-rank approximation for the $\ell_2$ metric, in addition to applications for fast matrix multiplication among others. We also give algorithms for constructing distance matrices and show that one can construct an approximate $\ell_2$ distance matrix in time faster than the bound implied by the Johnson-Lindenstrauss lemma.
\end{abstract}

\section{Introduction}

Given a set of $n$ points $X = \{x_1, \ldots, x_n\}$, the distance matrix of $X$ with respect to a distance function $f$ is defined as the $n \times n$ matrix $A$ satisfying $A_{i,j} = f(x_i, x_j)$. Distances matrices are ubiquitous objects arising in various applications ranging from learning image manifolds \cite{tenenbaum2000global, weinberger2006unsupervised}, signal processing \cite{so2007theory}, biological analysis \cite{holm1993protein}, and non-linear dimensionality reduction \cite{kruskal1964multidimensional, kruskal1978multidimensional, tenenbaum2000global, cox2008multidimensional}, to name a few\footnote{We refer the reader to the survey \cite{dokmanic2015euclidean} for a more thorough discussion of applications of distance matrices.}. Unfortunately, explicitly computing and storing $A$ requires at least $\Omega(n^2)$ time and space. Such complexities are prohibitive for scaling to large datasets.

A silver lining is that in many settings, the matrix $A$ is not explicitly required. Indeed in many applications, it suffices to compute some underlying function or property of $A$, such as the eigenvalues and eigenvectors of $A$ or a low-rank approximation of $A$. Thus an algorithm designer can hope to use the special geometric structure encoded by $A$ to design faster algorithms tailored for such tasks.

Therefore, it is not surprising that many recent works explicitly take advantage of the underlying geometric structure of distance matrices, and other related families of matrices, to design fast algorithms (see Section \ref{sec:related_works} for a thorough discussion of prior works). In this work, we continue this line of research and take a broad view of algorithm design for distance matrices. Our main motivating question is the following:
\begin{quote}
     \textit{Can we design algorithms for fundamental linear algebraic primitives which are specifically tailored for distance matrices and related families of matrices?}
\end{quote}

We make progress towards the motivating question by studying three of the most fundamental primitives in algorithmic linear algebra. Specifically:
\begin{enumerate}
    \item We study upper and lower bounds for computing matrix-vector products for a wide array of distance matrices,
    \item We give algorithms for multiplying distance matrices faster than general matrices,  and,
    \item We give fast algorithms for constructing distance matrices.
\end{enumerate}

\subsection{Our Results}
We now describe our contributions in more detail.  
\begin{quote}
    \hspace{-2mm} \textit{1. We study upper and lower bounds for constructing matrix-vector queries for a wide array of distance matrices.}
\end{quote}
A matrix-vector query algorithm accepts a vector $z$ as input and outputs the vector $Az$. There is substantial motivation for studying such queries. Indeed, there is now a rich literature for fundamental linear algebra algorithms which are in the ``matrix free" or ``implicit" model. These algorithms only assume access to the underlying matrix via matrix-vector queries. Some well known algorithms in this model include the power method for computing eigenvalues and the conjugate gradient descent method for solving a system of linear equations. For many fundamental functions of $A$, nearly optimal bounds in terms of the number of queries have been achieved  \cite{musco2015randomized, braverman2020gradient, bakshi2022low}. Furthermore, having access to matrix-vector queries also allows the simulation of any randomized sketching algorithm, a well studied algorithmic paradigm in its own right \cite{woodruff2014sketching}. This is because randomized sketching algorithms operate on the matrix $\Pi A$ or $A \Pi$ where $\Pi$ is a suitably chosen random matrix, such as a Gaussian matrix. Typically, $\Pi$ is chosen so that the sketches $\Pi A$ or $A \Pi$ have significantly smaller row or column dimension compared to $A$. If $A$ is symmetric, we can easily acquire both types of matrix sketches via a small number of matrix-vector queries.

Therefore, creating efficient versions of matrix-vector queries for distance matrices automatically lends itself to many further downstream applications. We remark that our algorithms can access to the set of input points but \emph{do not} explicitly create the distance matrix. A canonical example of our upper bound results is the construction of matrix-vector queries for the function $f(x,y) = \|x-y\|_p^p$.

\begin{theorem}
 Let $p \ge 1$ be an integer. Suppose we are given a dataset of $n$ points $X = \{x_1, \ldots, x_n \} \subset \R^d$. $X$ implicitly defines the matrix $A_{i,j} = \|x_i - x_j\|_p^p$. Given a query $z \in \R^n$, we can compute $Az$ exactly in time $O(ndp)$. If $p$ is odd, we also require $O(nd \log n)$ preprocessing time.
\end{theorem}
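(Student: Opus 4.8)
The plan is to exploit the coordinate-wise decomposability of the function $f(x,y)=\|x-y\|_p^p = \sum_{k=1}^d |x_k-y_k|^p$. Write $A^{(k)}$ for the $n\times n$ matrix with $A^{(k)}_{i,j} = |x_{i,k}-x_{j,k}|^p$, so that $A = \sum_{k=1}^d A^{(k)}$ and hence $Az = \sum_{k=1}^d A^{(k)} z$. It therefore suffices to compute each product $A^{(k)} z$ in time $O(np)$; summing over the $d$ coordinates yields the claimed $O(ndp)$ bound. This reduces the problem to the one-dimensional task: given reals $a_1,\dots,a_n$ (the $k$-th coordinates of the points) and a vector $z\in\R^n$, output $b_i := \sum_{j=1}^n |a_i-a_j|^p z_j$ for every $i$.

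When $p$ is even, $|a_i-a_j|^p = (a_i-a_j)^p$, so the binomial theorem gives $b_i = \sum_{t=0}^p \binom{p}{t}(-1)^{p-t} a_i^{t}\big(\sum_{j=1}^n a_j^{p-t} z_j\big)$. I would precompute the $p+1$ scalars $S_t := \sum_j a_j^{p-t} z_j$ in $O(np)$ total time (maintaining the powers $a_j^0,\dots,a_j^p$ incrementally costs $O(p)$ per point, and the binomial coefficients are computed once in $O(p)$ time), and then evaluate each $b_i$ in $O(p)$ time from the $S_t$. This gives $O(np)$ overall with no preprocessing, which is why the even case needs none.

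When $p$ is odd the sign of $a_i-a_j$ matters, and this is the step requiring the extra idea. I would sort the coordinates once, $a_{\pi(1)}\le\cdots\le a_{\pi(n)}$, in $O(n\log n)$ time; summed over the $d$ coordinates this is the $O(nd\log n)$ preprocessing. Then for each $i$ split $b_i = \sum_{j:\,a_j\le a_i}(a_i-a_j)^p z_j + \sum_{j:\,a_j> a_i}(a_j-a_i)^p z_j$, where ties may be assigned to either group since $(a_i-a_j)^p=0$ there. Expanding both terms by the binomial theorem, $b_i$ becomes $\sum_{t=0}^p \binom{p}{t}(-1)^{p-t}\big(a_i^{t}P_t(i) + a_i^{p-t}Q_t(i)\big)$, where $P_t(i)=\sum_{j:\,a_j\le a_i} a_j^{p-t} z_j$ is a prefix sum and $Q_t(i)=\sum_{j:\,a_j> a_i} a_j^{t} z_j$ is a suffix sum, both in the sorted order. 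Sweeping through the sorted points while maintaining all $p+1$ prefix sums (and, in a second pass, all suffix sums) costs $O(np)$ total, after which each $b_i$ is assembled in $O(p)$ time.

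Combining the cases, one pass per coordinate costs $O(np)$, so $Az$ is computed in $O(ndp)$ time overall, with an additional $O(nd\log n)$ one-time sorting cost when $p$ is odd. The only genuine obstacle is the absolute value for odd $p$; once the points are sorted the sum separates cleanly into a "left" and a "right" contribution, each of which is a degree-$p$ polynomial in $a_i$ whose coefficients are prefix or suffix sums of $a_j$-powers weighted by $z_j$. Everything else is bookkeeping (incremental maintenance of the powers of the $a_j$'s and of the partial sums), and the only edge cases to verify are the parity split and repeated coordinate values, both of which are routine.
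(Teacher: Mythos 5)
Your proposal is correct and follows essentially the same route as the paper: binomial expansion with precomputed power sums per coordinate for even $p$ (the paper's Algorithm for even $p$), and coordinate-wise sorting with prefix/suffix sums of $z_j$-weighted powers to handle the absolute value for odd $p$ (the paper's $\ell_1$ algorithm, which the paper says to combine with the even-$p$ expansion for general odd $p$). In fact you explicitly work out the odd-$p$ combination that the paper states without proof, and it checks out.
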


We give similar guarantees for a wide array of functions $f$ and we refer the reader to Table \ref{tab:results} which summarizes our matrix-vector query upper bound results. Note that some of the functions $f$ we study in Table \ref{tab:results} do not necessarily induce a metric in the strict mathematical sense (for example the function $f(x,y) = \|x-y\|_2^2$ does not satisfy the triangle inequality). Nevertheless, we still refer to such functions under the broad umbrella term of ``distance functions" for ease of notation. We always explicitly state the function $f$ we are referring to.

Crucially, most of our bounds have a linear dependency on $n$ which allows for scalable computation as the size of the dataset $X$ grows. Our upper bounds are optimal in many cases, see Theorem \ref{thm:lb_for_ub}.

\begin{table}[!ht]
\centering
{\renewcommand{\arraystretch}{1.3}
\begin{tabular}{c|c|c|c|c}
Function                     & $f(x,y)$                                                                          & Preprocessing & Query Time     & Reference                                \\ \hline
$\ell_p^p$ for $p$ even      & $\|x-y\|_p^p$                                                                     & $-$                & $O(ndp)$       & Thms. \ref{thm:l2_squared_upper_bound} / \ref{thm:even_p}      \\
$\ell_p^p$ for $p$ odd       & $\|x-y\|_p^p$                                                                     & $O(nd \log n)$     & $O(nd p)$ & Thms. \ref{thm:l1_upper_bound} / \ref{thm:odd_p}       \\
Mixed $\ell_{\infty}$        & $\max_{i,j} |x_i - y_j|$                                                          & $O(nd \log n)$     & $O(n^2)$       & Thm. \ref{thm:mixed}        \\
Mahalanobis Distance$^2$ & $x^TMy$                                                                           & $O(nd^2)$          & $O(nd)$        & Thm. \ref{thm:maha}         \\
Polynomial Kernel            & $\langle x,y\rangle^p$                                                                         & $-$                & $O(nd^p)$      & Thm. \ref{thm:poly_kernel} \\
Total Variation Distance     & $\text{TV}(x,y)$                                                                  & $O(nd \log n)$     & $O(nd)$ & Thm. \ref{thm:tv}           \\
KL Divergence                & $\text{D}_{\text{KL}}(x \, \| \, y)$                                              & $-$                & $O(nd)$        & Thm. \ref{thm:kl}           \\
Symmetric Divergence    & $\text{D}_{\text{KL}}(x \, \| \, y)+\text{D}_{\text{KL}}(y \, \| \, x)$ & $-$                & $O(nd)$        & Thm. \ref{thm:all_kl}      \\
Cross Entropy                & $H(x,y)$                                                                          & $-$                & $O(nd)$        & Thm. \ref{thm:all_kl}      \\
Hellinger Distance$^2$   & $\sum_{i=1}^d \sqrt{x(i) y(i)}$                                                   & $-$                & $O(nd)$        & Thm. \ref{thm:hellinger}   
\end{tabular}
}
\caption{A summary of our results for exact matrix-vector queries.}
\label{tab:results}
\end{table}

Combining our upper bound results with optimized matrix-free methods, immediate corollaries of our results include faster algorithms for eigenvalue and singular value computations and low-rank approximations. Low-rank approximation is of special interest as it has been widely studied for distance matrices; for low-rank approximation, our bounds outperform prior results for specific distance functions. For example, for the $\ell_1$ and $\ell_2^2$ case (and in general PSD matrices), \cite{bakshi2020robust} showed that a rank-$k$ approximation can be found in time $O(ndk/\eps + nk^{w-1}/\eps^{w-1})$. This bound has extra $\text{poly}(1/\eps)$ overhead compared to our bound stated in Table \ref{tab:results2}. The work of \cite{indyk2019sample} has a worse $\text{poly}(k, 1/\eps)$ overhead for an additive error approximation for the $\ell_2$ case. See Section \ref{sec:related_works} for further discussion of prior works. The downstream applications of matrix-vector queries are summarized in Table \ref{tab:results2}.

We also study fundamental limits for any upper bound algorithms. In particular, we show that \emph{no algorithm} can compute a matrix-vector query for general inputs for the $\ell_{\infty}$ metric in subquadratic time, assuming a standard complexity-theory assumption called the {\em Strong Exponential Time Hypothesis (SETH)}~\cite{impagliazzo2001complexity,impagliazzo2001problems}. 
\begin{theorem}\label{thm:l_infinity_lower_bound}
 For any $\alpha>0$ and $d=\omega(\log n)$, any algorithm for exactly computing $Az$ for any input $z$, where $A$ is the $\ell_{\infty}$ distance matrix, requires $\Omega(n^{2-\alpha})$ time (assuming SETH).
\end{theorem}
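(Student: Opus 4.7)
The plan is to reduce the Orthogonal Vectors (OV) problem to a single matrix-vector query against an $\ell_\infty$ distance matrix. Under SETH, bichromatic OV on two sets of $n$ Boolean vectors in dimension $d = \omega(\log n)$ (indeed even at $d = n^{o(1)}$) requires $n^{2-o(1)}$ time. Given an OV instance with $\{a_1,\ldots,a_n\}, \{b_1,\ldots,b_n\} \subset \{0,1\}^d$, I will produce a dataset of $2n$ points in dimension $d+1$ whose $\ell_\infty$ distance matrix encodes the presence or absence of an orthogonal pair, and then decide OV with one query.

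The embedding I would use is $\tilde a_i := (a_i, 1) \in \R^{d+1}$ for $i \le n$ and $\tilde b_j := (-b_j, 0) \in \R^{d+1}$ for $j \le n$. Because the original coordinates lie in $\{0,1\}$, each coordinate of $\tilde a_i - \tilde b_j$ lies in $\{0,1,2\}$, and the appended coordinate is always $1$. Consequently $\|\tilde a_i - \tilde b_j\|_\infty = 1$ precisely when $\langle a_i, b_j\rangle = 0$, and equals $2$ otherwise. Now form the concatenated dataset $X = (\tilde a_1,\ldots,\tilde a_n,\tilde b_1,\ldots,\tilde b_n)$, let $A$ be the resulting $2n \times 2n$ $\ell_\infty$ distance matrix, and let $z \in \{0,1\}^{2n}$ be the indicator of the ``blue'' half. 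For $i \le n$, $(Az)_i = \sum_{j=1}^n \|\tilde a_i - \tilde b_j\|_\infty$, so in $O(n)$ additional time I compute $S := \sum_{i \le n}(Az)_i = \sum_{i,j} \|\tilde a_i - \tilde b_j\|_\infty$, which equals $2n^2$ iff no orthogonal pair exists and is at most $2n^2 - 1$ otherwise. Hence one exact query decides OV.

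To conclude, suppose for contradiction that one can compute $Az$ in $O(n^{2-\alpha})$ time for some $\alpha > 0$ whenever $d = \omega(\log n)$. Then, choosing a dimension that is simultaneously $\omega(\log n)$ and $n^{o(1)}$ (padding the OV vectors with zero coordinates if necessary), the reduction above yields an OV algorithm running in time $O(nd) + O(n^{2-\alpha}) + O(n) = O(n^{2-\alpha'})$ for some $\alpha' > 0$, contradicting SETH and giving the claimed $\Omega(n^{2-\alpha})$ lower bound.

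The main conceptual obstacle is that a matrix-vector query only returns an inner-product-like aggregate rather than individual distances, so it is not obvious a priori that one query can witness the existence of an orthogonal pair. The key trick above is that all pairwise $\ell_\infty$ distances are forced into the two-element set $\{1,2\}$; under this promise the global sum $S$ becomes a faithful monotone indicator of whether any distance equals $1$, and no local information is lost in the aggregation. If the reduction had to tolerate additional distance values---say from an approximation algorithm---this exact separation would collapse, which is consistent with the theorem's insistence on exact computation.
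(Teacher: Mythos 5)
Your proof is correct, and it follows the same overall strategy as the paper: reduce Orthogonal Vectors to a matrix-vector query on an $\ell_\infty$ distance matrix via an embedding that forces every cross distance into a two-element set, then detect an orthogonal pair from an aggregate sum. The implementations differ in two ways worth noting. The paper maps coordinates through $f(0)=g(0)=1/2$, $f(1)=0$, $g(1)=1$, so cross distances land in $\{1/2,1\}$; this requires the (harmless but explicit) assumption that all vectors are non-zero, and the cross-block sum is then extracted from \emph{three} all-ones queries on the matrices $M_A$, $M_B$, $M_{A\cup B}$ using the block identity $1^T M_{A\cup B} 1 = 1^T M_A 1 + 1^T M_B 1 + 2\cdot 1^T M 1$. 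Your embedding $\tilde a_i=(a_i,1)$, $\tilde b_j=(-b_j,0)$ instead forces cross distances into $\{1,2\}$ with the appended coordinate automatically handling zero vectors, and you need only \emph{one} query on the combined $2n$-point matrix with the indicator of the second half, reading off just the first $n$ output coordinates so the within-set distances never contaminate the sum. Both reductions cost $O(nd)$ plus $O(1)$ queries and yield the same contradiction with SETH, so the conclusions are identical; yours is slightly more economical in queries and avoids the non-degeneracy caveat, while the paper's version has the minor feature that all constructed points stay in $[0,1]^d$ with the original dimension.
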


This shows a separation between the functions listed in Table \ref{tab:results} and the $\ell_{\infty}$ metric. Surprisingly, we can create queries for the \emph{approximate} matrix-vector query in substantially faster time.

\begin{theorem}
 Suppose $X \subseteq \{0,1, \ldots, O(1)\}^d$.
 We can compute $By$ in time $O(n \cdot d^{O(\sqrt{d} \log (d/\eps))})$ where $\|A-B\|_{\infty} \le \eps$.
\end{theorem}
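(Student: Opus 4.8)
The plan is to replace the exact $\ell_\infty$ distance $f_\infty(x,y)=\|x-y\|_\infty$ by a low-degree polynomial $g(x,y)$ in the $2d$ coordinates of $x$ and $y$ jointly, and then exploit the fact that every monomial of $g$ factors as $\phi(x)\psi(y)$. Exactly as in the polynomial-kernel argument behind Theorem~\ref{thm:poly_kernel}, evaluating the $x$-part and $y$-part of each monomial at the $n$ data points turns the approximate matrix $B$, defined by $B_{i,j}=g(x_i,x_j)$, into a sum of rank-one matrices $B=\sum_m c_m\, u^{(m)}(v^{(m)})^T$ with one term per monomial, so $By=\sum_m c_m u^{(m)}\langle v^{(m)},y\rangle$ can be computed in time $O(n)$ per monomial. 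The entire question thus reduces to (i) building $g$ with entrywise error $\eps$, and (ii) bounding the number of its monomials.

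For (i), since $X\subseteq\{0,1,\ldots,C\}^d$ for a constant $C=O(1)$, each $|x_i-y_i|$ lies in $\{0,\ldots,C\}$ and $\|x-y\|_\infty$ is an integer in $\{0,\ldots,C\}$, so
$$\|x-y\|_\infty=\sum_{t=1}^{C}\mathbf{1}\!\left[\|x-y\|_\infty\ge t\right]=\sum_{t=1}^{C}\mathrm{OR}_{i=1}^{d}\!\left(\mathbf{1}\!\left[\,|x_i-y_i|\ge t\,\right]\right).$$
For fixed $t$ and $i$, the bit $\mathbf{1}[\,|x_i-y_i|\ge t\,]$ is a function of $(x_i,y_i)$ on the constant-size grid $\{0,\ldots,C\}^2$, hence equals a bivariate polynomial $q_t(x_i,y_i)$ of degree $O(C)=O(1)$ by interpolation. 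Next invoke the classical Chebyshev-polynomial construction (Nisan--Szegedy / Paturi): $\mathrm{OR}_d$ admits a symmetric polynomial $P$ of degree $k=O(\sqrt{d}\,\log(1/\eps'))$ with $|P(u)-\mathrm{OR}_d(u)|\le\eps'$ for all $u\in\{0,1\}^d$. Since each $q_t(x_i,y_i)$ is \emph{exactly} $0/1$-valued on the inputs, the composition $P_t(x,y):=P\big(q_t(x_1,y_1),\ldots,q_t(x_d,y_d)\big)$ is a polynomial in the $2d$ variables of total degree $O(k)$, and taking $\eps'=\eps/C$ and $g:=\sum_{t=1}^{C}P_t$ gives $|g(x,y)-\|x-y\|_\infty|\le\eps$ for all $x,y\in X$, i.e.\ $\|A-B\|_\infty\le\eps$ entrywise.

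For (ii), the total degree of $g$ is $D=O(k)=O(\sqrt{d}\,\log(1/\eps))$, so the number of distinct monomials (hence rank-one terms) is at most $\binom{2d+D}{D}\le\big(O(d/D)\big)^{D}\le d^{O(\sqrt{d}\log(d/\eps))}$. Precomputing the vectors $u^{(m)},v^{(m)}$ costs $O(nd)$ per monomial, and the matrix-vector product costs $O(n)$ per monomial, giving total time $O\!\big(n\cdot d^{O(\sqrt{d}\log(d/\eps))}\big)$ as claimed. (No nontrivial preprocessing is needed beyond setting up these vectors.)

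I expect the main obstacle to be pinning down the degree bookkeeping for the composed polynomial: one must check that substituting the constant-degree interpolants $q_t$ into the approximate-$\mathrm{OR}$ polynomial only inflates the degree by a constant factor, so that the $\sqrt{d}$ behavior — which is essentially forced, since $\widetilde{\deg}_\eps(\mathrm{OR}_d)=\Omega(\sqrt d)$ — is preserved, and that the monomial count $\binom{2d+D}{D}$ with $D=\Theta(\sqrt d\log(1/\eps))$ is indeed $d^{O(\sqrt d\log(d/\eps))}$ in the relevant parameter range. Given that, the approximation guarantee and the rank-one decomposition are routine.
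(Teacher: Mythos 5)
Your proposal is correct, and its skeleton matches the paper's: decompose $\|x-y\|_{\infty}$ over a constant-size alphabet as a sum of threshold indicators $\mathbf{1}[\|x-y\|_{\infty}\ge t]$, realize each indicator as an OR of per-coordinate bits, approximate the OR by a Chebyshev-type polynomial of degree $O(\sqrt{d}\log(1/\eps))$ (the paper's Lemma~\ref{lem:cheb}/\ref{lem:cheb2}, your Nisan--Szegedy citation), and then expand the resulting low-degree polynomial in the $2d$ coordinates into monomials so that $B$ becomes a sum of rank-one terms and $By$ costs $O(n)$ per monomial, exactly as in the polynomial-kernel linearization of Section~\ref{sec:poly}. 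Where you genuinely diverge is the inner, per-coordinate step: the paper approximates $\mathbf{1}[|x(j)-y(j)|\ge t]$ by applying a second Chebyshev polynomial $T_1$ to the amplified quantity $(x(j)-y(j))^k/(t^kM^k)$ with $k=O(M\log(Md))$, which forces the two-level error bookkeeping of Lemma~\ref{lem:cheb2} and inflates the degree by a factor depending on $M$ and $\log(Md)$; you instead exploit the bounded grid $\{0,\dots,O(1)\}^2$ to interpolate that indicator \emph{exactly} by a constant-degree bivariate polynomial, so the outer OR-approximation is evaluated on genuine Boolean inputs and the error analysis collapses to a single application of the $\eps'$-approximation plus a union over the $O(1)$ thresholds. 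For constant alphabet both routes land on $n\cdot d^{O(\sqrt{d}\log(d/\eps))}$; yours is somewhat cleaner in that regime (and would scale slightly better in $M$), while the paper's soft-thresholding technique does not rely on exact interpolation and so illustrates a method that extends beyond grid-valued inputs. Two small points to tidy up: the estimate $\binom{2d+D}{D}\le(O(d/D))^D$ only holds for $D=O(d)$, though the cruder bound $(2d+1)^D=d^{O(D)}$ covers all regimes and suffices; and you should note (as the paper implicitly does) that expanding the composed polynomial and computing its monomial coefficients takes time proportional to the monomial count times $\mathrm{poly}(d)$, which is absorbed into the stated bound.
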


To put the above result into context, the lower bound of Theorem \ref{thm:l_infinity_lower_bound} holds for points sets in $\{0,1,2\}^d$ in $d \approx \log n$ dimensions. 
In contrast, if we relax to an approximation guarantee, we can obtain a subquadratic-time algorithm for $d$ up to $\Theta(\log^2(n)/\log \log (n))$. 


Finally, we provide a general understanding of the limits of our upper bound techniques. In Theorem \ref{thm:meta_upperbound}, we show that essentially the only $f$ for which our upper bound techniques apply have a ``linear structure" after a suitable transformation. We refer to Appendix Section \ref{sec:meta} for details.

\begin{table}[!ht]
\centering
{\renewcommand{\arraystretch}{1.3}
\begin{tabular}{c|c|c|c}
Problem                                                                                                 & $f(x,y)$                         & Runtime                                                                                                                                                        & Prior Work                                                                           \\ \hline
\begin{tabular}[c]{@{}c@{}}$(1+\eps)$ Relative error rank $k$\\ low-rank approximation\end{tabular}     & $\ell_1, \ell_2^2$               & \begin{tabular}[c]{@{}c@{}}$\tilde{O}\left( \frac{ndk}{\eps^{1/3}} + \frac{nk^{w-1}}{\eps^{(w-1)/3}} \right)$\\ Theorem \ref{thm:opt_lowrank_p}\end{tabular} & \begin{tabular}[c]{@{}c@{}}$O\left( \frac{ndk}{\eps} + \frac{nk^{w-1}}{\eps^{w-1}} \right)$ \\ \cite{bakshi2020robust} \end{tabular}                    \\ \hline
\begin{tabular}[c]{@{}c@{}}Additive error $\eps \|A\|_F$ rank $k$\\ low-rank approximation\end{tabular} & $\ell_2$                         & \begin{tabular}[c]{@{}c@{}}$\tilde{O}\left( \frac{ndk}{\eps^{1/3}} + \frac{nk^{w-1}}{\eps^{(w-1)/3}} \right)$\\ Theorem \ref{thm:approx_low_rank}\end{tabular} & \begin{tabular}[c]{@{}c@{}} $\tilde{O}(nd \cdot \text{poly}(k, 1/\eps))$ \\ \cite{indyk2019sample} \end{tabular}                                        \\ \hline
\begin{tabular}[c]{@{}c@{}}$(1+\eps)$ Relative error rank $k$\\ low-rank approximation\end{tabular}     & Any in Table \ref{tab:results} &  \begin{tabular}[c]{@{}c@{}}$\tilde{O}\left( \frac{Tk}{\eps^{1/3}} + \frac{nk^{w-1}}{\eps^{(w-1)/3}} \right)$    \\ Theorem \ref{thm:low_rank_meta}\end{tabular}                                                                          & \begin{tabular}[c]{@{}c@{}}$\tilde{O}\left( \frac{n^2dk}{\eps^{1/3}} + \frac{nk^{w-1}}{\eps^{(w-1)/3}} \right)$ \\ \cite{bakshi2022low} \end{tabular} \\ \hline
\begin{tabular}[c]{@{}c@{}}$(1\pm \eps)$ Approximation to\\ top $k$ singular values\end{tabular}        & Any in Table \ref{tab:results} & \begin{tabular}[c]{@{}c@{}}$\tilde{O}\left(\frac{Tk}{\eps^{1/2}} + \frac{nk^2}\eps + \frac{k^3}{\eps^{3/2}}\right)$ \\ Theorem \ref{thm:singular_values_system} \end{tabular}                                                                                                      & \begin{tabular}[c]{@{}c@{}} $\tilde{O}\left(\frac{n^2dk}{\eps^{1/2}} + \frac{nk^2}\eps + \frac{k^3}\eps^{3/2}\right)$ \\ \cite{musco2015randomized}   \end{tabular}                       \\ \hline
\begin{tabular}[c]{@{}c@{}}Multiply distance matrix $A$\\ with any $B \in \R^{n \times n}$\end{tabular} & Any in Table \ref{tab:results} & \begin{tabular}[c]{@{}c@{}}$O(Tn)$ \\ Lemma \ref{lem:matrixmult} \end{tabular}                                                                                                                                                    & $O(n^w)$                                                         \\ \hline
\begin{tabular}[c]{@{}c@{}}Multiply two distance \\ matrices $A$ and $B$\end{tabular}                   & $\ell_2^2$                       & \begin{tabular}[c]{@{}c@{}} $O(n^2d^{w-2})$  \\ Lemma \ref{lem:matrixmultltwo} \end{tabular}                                                                                                                                             & $O(n^w)$                                                                            
\end{tabular}
}
\caption{Applications of our matrix-vector query results. $T$ denotes the matrix-vector query time, given in Table \ref{tab:results}. $w \approx 2.37$ is the matrix multiplication constant \cite{alman2021refined}.}
\label{tab:results2}
\end{table}

\begin{quote} 
\hspace{-2mm}\textit{2. We give algorithms for multiplying distance matrices faster than general matrices.}
\end{quote}

Fast matrix-vector queries also automatically imply fast matrix multiplication, which can be reduced to a series of matrix-vector queries. For concreteness, if $f$ is the $\ell_p^p$ function which induces $A$, and $B$ is any $n \times n$ matrix, we can compute $AB$ in time $O(n^2dp)$. This is substantially faster than the general matrix multiplication bound of $n^w \approx n^{2.37}$. We also give an improvement of this result for the case where we are multiplying two distance matrices arising from $\ell_2^2$. See Table \ref{tab:results2} for summary.

\begin{quote}
    \hspace{-2mm}\textit{3. We give fast algorithms for constructing distance matrices.}
\end{quote}

Finally, we give fast algorithms for constructing approximate distance matrices. To establish some context, recall the classical Johnson-Lindenstrauss (JL) lemma which (roughly)
states that a random projection of a dataset $X \subset \R^d$ of
size $n$ onto a dimension of size $O(\log n)$ approximately
preserves all pairwise distances \cite{originalJL}. A common applications of this lemma is to \emph{instantiate} the $\ell_2$ distance matrix. A naive algorithm which computes the distance matrix after performing the JL projection requires approximately $O(n^2 \log n)$ time. Surprisingly, we show that the JL lemma is not tight with respect to creating an approximate $\ell_2$ distance matrix; we show that one can initialize the $\ell_2$ distance in an asymptotically better runtime.

\begin{theorem}[Informal; See Theorem \ref{thm:metric_compression} ]
 We can calculate a $n \times n$ matrix $B$ such that each $(i,j)$ entry $B_{ij}$ of $B$ satisfies $(1-\eps)\|x_i - x_j\|_2 \le B_{ij} \le   (1+\eps)\|x_i - x_j\|_2$
 in time $O( \eps^{-2}n^2 \, \log^2(\eps^{-1} \log n))$.
\end{theorem}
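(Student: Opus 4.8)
The plan is to combine the Johnson--Lindenstrauss lemma with a hierarchical, quadtree-style compression of the projected point set, so that afterwards each pairwise distance can be \emph{decoded} in time polylogarithmic in the working dimension rather than linear in it. The reason the naive $O(\eps^{-2}n^2\log n)$ bound is not tight is that, although one needs $\Theta(\eps^{-2}\log n)$ coordinates to preserve all $\binom n2$ distances at once, one does not need to \emph{re-read} all of those coordinates when estimating any single distance.

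First I would apply a fast JL map $\Pi$ (a sparse embedding or a subsampled randomized Hadamard transform) to obtain $y_i=\Pi x_i\in\R^m$ with $m=O(\eps^{-2}\log n)$ and, with high probability, $(1-\eps/3)\|x_i-x_j\|_2\le\|y_i-y_j\|_2\le(1+\eps/3)\|x_i-x_j\|_2$ for all $i,j$ simultaneously; using a fast transform this costs $\tilde O(nd+nm)$, which is within budget (one may first project to dimension $O(m)$, or fold this term in since the interesting regime has $d\le n$). From now on only the $y_i$ matter, and it suffices to output $B_{ij}$ within an extra $(1\pm\eps/3)$ factor of $\|y_i-y_j\|_2$. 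Next I would reduce to aspect ratio $\Phi=\mathrm{poly}(n)$ by the standard trick of bucketing pairs into $O(\log n)$ distance scales (or rescaling/clipping), adjusting constants. Then impose on $\R^m$ a randomly shifted nested family of grids with geometrically decreasing side lengths and build the resulting path-compressed quadtree on $y_1,\dots,y_n$; it has $O(n)$ internal nodes. For each point I store (i) a pointer to its leaf and (ii) for the $O(\log m)$ grid levels around each relevant scale, a short $O(\eps^{-2}\log^2 m)$-size sketch of the point's offset within its current cell, truncated to the precision dictated by that level. Finally, preprocess the quadtree for $O(1)$-time lowest-common-ancestor queries.

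To decode a pair $(i,j)$: an LCA query returns in $O(1)$ time the level $s$ at which $y_i,y_j$ first fall in distinct cells, which by the random-shift property pins $\|y_i-y_j\|_2$ down to a constant factor. I then descend a few extra levels and write $y_i-y_j=(\text{difference of cell centers})+(\text{difference of fine offsets})$; the first term is known exactly and on the grid, and once we have descended enough levels the second term has norm at most a constant fraction of the first, so estimating its squared norm and its inner product with the (known) first term to additive error $\eps\|y_i-y_j\|_2^2$ requires only the stored sketches of dimension $O(\eps^{-2}\log^2 m)$ at the $O(\log m)$ surrounding levels. This gives $B_{ij}$ with the desired multiplicative guarantee in $O(\eps^{-2}\log^2 m)$ time per pair, hence total time $O(\eps^{-2}n^2\log^2 m)=O(\eps^{-2}n^2\log^2(\eps^{-1}\log n))$ since $m=O(\eps^{-2}\log n)$; the preprocessing (building grids, the compressed quadtree, LCA tables, and the per-point sketches) is a lower-order term.

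The hard part will be Steps 2--3's accuracy/time tradeoff: making do with a sketch of only $\mathrm{poly}(\log m)/\eps^2$ size instead of the $\Theta(\eps^{-2}\log n)$ that a blunt union bound over all $\binom n2$ pairs would force. This hinges on showing that the randomly shifted hierarchical grid isolates, for each pair, a residual ``fine part'' whose contribution is both small relative to the (exactly known) coarse part and of low enough complexity that one only needs the sketches to succeed against a $\mathrm{poly}(m)$-sized family of configurations rather than against $n^2$ arbitrary vectors --- and that the linear sketches at the $O(\log m)$ consecutive levels compose correctly when summed. Establishing this localization and composition, together with the aspect-ratio reduction and the routine LCA/quadtree bookkeeping, is where essentially all the work lies.
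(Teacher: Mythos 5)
Your high-level skeleton (JL projection to $m=O(\eps^{-2}\log n)$ dimensions, a randomly shifted compressed quadtree, and truncating each pair's representation to the $O(\log(m/\eps))$ levels below the separating cell) matches the paper's route, but the step you yourself flag as ``where essentially all the work lies'' is a genuine gap, and the mechanism you propose for it does not work. You want to estimate the residual ``fine part'' of each pair from per-point randomized sketches of dimension $O(\eps^{-2}\log^2 m)$, arguing that the sketches only need to succeed against a $\mathrm{poly}(m)$-sized family of configurations. But the fine offsets of the points inside their cells are $n$ arbitrary, data-dependent vectors in $\R^m$ (a continuum, not a discrete family induced by the grid), and the quantity you need is the norm of the difference of two such offsets to relative error $O(\eps)$ for \emph{all} $\binom{n}{2}$ pairs simultaneously; the fine part can be a constant fraction of the distance, so the accuracy requirement is not weakened. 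A randomized linear sketch of dimension $k$ fails on a fixed vector with probability $e^{-\Omega(\eps^2 k)}$, so surviving the union bound over $n^2$ pairs forces $k=\Omega(\eps^{-2}\log n)$ --- i.e.\ you are back to reading essentially all $m$ coordinates per pair, and no derandomization or ``small family'' reduction is available. A secondary inaccuracy: with a single randomly shifted grid the LCA level does \emph{not} pin the distance to a constant factor for all pairs (nearby points can be cut at much coarser scales with non-negligible probability); the paper handles this with the notion of padded points and by taking $O(\log n)$ independent trees so that every point is padded in at least one.

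The paper's actual source of the speedup is different in kind: it never reduces the \emph{number} of coordinates examined per pair, only their \emph{precision}. After embedding $\ell_2$ into $\ell_1$ (Theorem~\ref{thm:l2embed}) and building the metric compression trees of \cite{indyk2017practical}, the paddedness lemma shows deterministically that only $2\Lambda=O(\log(\eps^{-1}\log n))$ bits per coordinate below the LCA level matter. The time bound then comes from the Word-RAM model: $\Theta(\log n/\Lambda)$ truncated coordinates are packed into each machine word during a DFS of the tree, and a precomputed $O(\sqrt n)\times O(\sqrt n)$ lookup table returns the block sum of coordinate-wise absolute differences for a pair of half-words in $O(1)$ time, giving $O(\eps^{-2}\Lambda)$ word operations per pair and $O(\eps^{-2}n^2\log^2(\eps^{-1}\log n))$ overall. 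To repair your argument you would need to either adopt this bit-packing/table-lookup mechanism (making the ``sketch'' a deterministic truncation rather than a random projection) or supply a fundamentally new idea for beating the $\Omega(\eps^{-2}\log n)$ union-bound barrier, which your current proposal does not provide.
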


Our result can be viewed as the natural runtime bound which would follow if the JL lemma implied an embedding dimension bound of $O(\text{poly}(\log \log n))$. While this is impossible, as it would imply an exponential improvement over the JL bound which is tight \cite{larsen2017optimality}, we achieve our speedup by carefully reusing distance calculations via tools from metric compression \cite{indyk2017practical}. Our results also extend to the $\ell_1$ distance matrix; see Theorem \ref{thm:metric_compression} for details.

\paragraph{Notation.}
Our dataset will be the $n$ points $X = \{x_1 ,\ldots, x_n\} \subset \R^d$. For points in $X$, we denote $x_i(j)$ to be the $j$th coordinate of point $x_i$ for clarity. For all other vectors $v$, $v_i$ denotes the $i$th coordinate. We are interested in matrices of the form $A_{i,j} = f(x_i, x_j)$ for $f: \R^d \times \R^d \rightarrow \R$ which measures the similarity between any pair of points. $f$ might not necessarily be a distance function but we use the terminology ``distance function" for ease of notation. We will always explicitly state the function $f$ as needed. $w \approx 2.37$ denotes the matrix multiplication constant, i.e., the exponent of $n$ in the time required to compute the product of two $n \times n$ matrix \cite{alman2021refined}.

\subsection{Related Works}\label{sec:related_works}
\paragraph{Matrix-Vector Products Queries.}
Our work can be understood as being part of a long line of classical works on the matrix free or implicit model as well as the active recent line of works on the matrix-vector query model. Many widely used linear algebraic algorithms such as the power method, the Lanczos algorithm \cite{lanczos1950iteration}, conjugate gradient descent \cite{shewchuk1994introduction}, and Wiedemann's coordinate recurrence algorithm \cite{wiedemann1986solving}, to name a few, all fall into this paradigm. Recent works such as \cite{musco2015randomized, braverman2020gradient, bakshi2022low} have succeeded in precisely nailing down the query complexity of these classical algorithms in addition to various other algorithmic tasks such as low-rank approximation \cite{ bakshi2022low}, trace estimation \cite{meyer2021hutch++}, and other linear-algebraic functions \cite{sun2021querying, rashtchian2020vector}. There is also a rich literature on query based algorithms in other contexts with the goal of minimizing the number of queries used. Examples include graph queries \cite{goldreich2017introduction}, distribution queries \cite{canonne2020survey}, and constraint based queries \cite{epstein2020property} in property testing, inner product queries in compressed sensing \cite{eldar2012compressed}, and quantum queries \cite{lee2021quantum, childs2021quantum}.

Most prior works on query based models assume black-box access to matrix-vector queries. While this is a natural model which allows for the design non-trivial algorithms and lower bounds, it is not always clear how such queries can be initialized. In contrast, the focus of our work is not on obtaining query complexity bounds, but rather complementing prior works by creating an efficient matrix-vector query for a natural class of matrices. 

\paragraph{Subquadratic Algorithms for Distance Matrices.} Most work on subquadratic algorithms for distance matrices have focused on the problem of computing a low-rank approximation. \cite{bakshi2018sublinear, indyk2019sample} both obtain an additive error low-rank approximation applicable for all distance matrices. These works only assume access to the entries of the distance matrix whereas we assume we also have access to the underlying dataset. \cite{bakshi2020robust} study the problem of computing the low-rank approximation of PSD matrices with also sample access to the entries of the matrix. Their results extend to low-rank approximation for the $\ell_1$ and $\ell_2^2$ distance matrices in addition to other more specialized metrics such as spherical metrics. Table \ref{tab:results2} lists the runtime comparisons between their results and ours.

Practically, the algorithm of \cite{indyk2019sample} is the easiest to implement and has outstanding empirical performance. We note that we can easily simulate their algorithm with no overall asymptotic runtime overhead using $O(\log n)$ vector queries. Indeed, their algorithm proceeds by sampling rows of the matrix according to their $\ell_2^2$ value and then post-processing these rows. The sampling probabilities only need to be accurate up to a factor of two. We can acquire these sampling probabilities by performing $O(\log n)$ matrix-vector queries which sketches the rows onto dimension $O(\log n)$ and preserves all row-norms up to a factor of two with high probability due to the Johnson-Lindenstrauss lemma \cite{originalJL}. This procedure only incurs an additional runtime of $O(T\log n)$ where $T$ is the time required to perform a matrix-vector query.

The paper \cite{indyk2004closest} shows that the exact $L_1$ distance matrix can be created in time $O(n^{(w+3)/2}) \approx n^{2.69}$ in the case of $d = n$, which is asymptotically faster than the naive bound of $O(n^2d) = O(n^3)$. In contrast, we focus on creating an (entry-wise) approximate distance matrices for all values of $d$.

We also compare to the paper of \cite{AlmanCS020}. In summary, their main upper bounds are approximation algorithms while we mainly focus on exact algorithms. Concretely, they study matrix vector products for matrices of the form $A_{i,j} = f(\|x_i - x_j \|_2^2)$
for some function $f: \R \rightarrow \R$. They present results on approximating the matrix vector product of $A$ where the approximation error is additive. They also consider a wide range of $f$, including polynomials and other kernels, but the input to is always the $\ell_2$ distance squared. In contrast, we also present exact algorithms, i.e., with no approximation errors. For example one of our main upper bounds is an exact algorithm when $A_{i,j} = \|x_i - x_j \|_1$ (see Table 1 for the full list). Since it is possible to approximately embed the $\ell_1$ distance into $\ell_2^2$, their methods could be used to derive approximate algorithms for $\ell_1$, but not the exact ones. Furthermore, we also study a wide variety of other distance functions such as $\ell_{\infty}$ and $\ell_p^p$ (and others listed in Table 1) which are not studied in Alman et al. In terms of technique, the main upper bound technique of Alman et al. is to expand
$f(\|x_i - x_j \|_2^2)$ and approximate the resulting quantity via a polynomial. This is related to our upper bound results for $\ell_p^p$ for even $p$ where we also use polynomials. However, our results are exact, while theirs are approximate. Our $\ell_1$ upper bound technique is orthogonal to the polynomial approximation techniques used in Alman et al. We also employ polynomial techniques to give upper bounds for the approximate $\ell_{\infty}$ distance function which is not studied in Alman et al. Lastly, Alman et al. also focus on the Laplacian matrix of the weighted graph represented by the distance matrix, such as spectral sparsification and Laplacian system solving. In contrast, we study different problems including low-rank approximations, eigenvalue estimation, and the task of initializing an approximate distance matrix. We do not consider the distance matrix as a graph or consider the associated Laplacian matrix. 

It is also easy to verify the ``folklore" fact that for a gram matrix $AA^T$, we can compute $AA^Tv$ in time $O(nd)$ if $A \in \R^{n \times d}$ by computing $A^Tv$ first and then $A(A^Tv)$. Our upper bound for the $\ell_2^2$ function can be reduced to this folklore fact by noting that $\|x-y\|_2^2 = \|x\|_2^2 + \|y\|_2^2 - 2\langle x,y\rangle$. Thus the $\ell_2^2$ matrix can be decomposed into two rank one components due to the terms $\|x\|_2^2$ and $\|y\|_2^2 $, and a gram matrix due to the term $\langle x,y\rangle$. This decomposition of the $\ell_2^2$ matrix is well-known (see Section $2$ in \cite{dokmanic2015euclidean}). Hence, a matrix-vector query for the $\ell_2^2$ matrix easily reduces to the gram matrix case. Nevertheless, we explicitly state the $\ell_2^2$ upper bound for completeness since we also consider all $\ell_p^p$ functions for any integer $p \ge 1$.

\paragraph{Polynomial Kernels.}
There have also been works on faster algorithms for approximating a kernel matrix $K$ defined as the $n \times n$ matrix with entries $K_{i,j} = k(x_i, x_j)$ for a kernel function $k$. Specifically for the polynomial kernel $k(x_i, x_j) = \langle x_i , x_j \rangle^p$, recent works such as \cite{avron2014subspace, ahle2020oblivious, woodruff2020near, song2021fast} have shown how to find a sketch $K'$ of $K$ which approximately satisfies $\|K'z\|_2 \approx \|Kz\|_2$ for all $z$. In contrast, we can exactly simulate the matrix-vector product $Kz$. Our runtime is $O(nd^p)$ which has a linear dependence on $n$ but an exponential dependence on $p$ while the aforementioned works have at least a quadratic dependence on $n$ but a polynomial dependence on $p$. Thus our results are mostly applicable to the setting where our dataset is large, i.e. $n \gg d$ and $p$ is a small constant. For example, $p = 2$ is a common choice in practice \cite{chang2010training}. Algorithms with polynomial dependence in $d$ and $p$ but quadratic dependence in $n$ are suited for smaller datasets which have very large $d$ and large $p$. Note that a large $p$ might arise if approximates a non-polynomial kernel using a polynomial kernel via a taylor expansion. We refer to the references within \cite{avron2014subspace, ahle2020oblivious, woodruff2020near, song2021fast} for additional related work. There is also work on kernel density estimation (KDE) data structures which upon query $y$, allow for estimation of the sum $\sum_{x \in X} k(x,y)$ in time sublinear in $|X|$ after some preprocessing on the dataset $X$. For widely used kernels such as the Gaussian and Laplacian kernels, KDE data structures were used in \cite{backurs2021} to create a matrix-vector query algorithm for kernel matrices in time subquadratic in $|X|$ for input vectors which are entry wise non-negative. We refer the reader to \cite{charikar2017hashing,backurs2018, siminelakis2019rehashing, backurs2019space,charikar2020} for prior works on KDE data structures.

\section{Faster Matrix-Vector Product Queries for \texorpdfstring{$\ell_1$}{L1}}
We derive faster matrix-vector queries for distance matrices for a wide array of distance metrics. First we consider the case of the $\ell_1$ metric such that $A_{i,j} = f(x_i, x_j)$ where $f(x,y) = \|x-y\|_1 = \sum_{i = 1}^d |x_i - y_i|$.

\begin{algorithm}[H]
\caption{\label{alg:preprocessing_1}Preprocessing}
\begin{algorithmic}[1]
\State \textbf{Input:} Dataset $X \subset \R^d$
\Procedure{Preprocessing}{}
\For{$i \in [d]$}
\State $T_i \gets $ sorted array of the $i$th coordinates of all $x \in X$.
\EndFor
\EndProcedure
\end{algorithmic}
\end{algorithm}

We first analyze the correctness of Algorithm \ref{alg:query_1}. 

\begin{theorem}\label{thm:p=1}
Let $A_{i,j} = \|x_i-x_j\|_1$. Algorithm \ref{alg:query_1} computes $Ay$ exactly.
\end{theorem}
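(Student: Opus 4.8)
The plan is to verify, coordinate by coordinate, that the $i$-th output entry produced by Algorithm~\ref{alg:query_1} equals $(Ay)_i$. The starting point is the identity
\[
(Ay)_i = \sum_{j=1}^n \|x_i - x_j\|_1\, y_j = \sum_{\ell=1}^d \left( \sum_{j=1}^n |x_i(\ell) - x_j(\ell)|\, y_j \right),
\]
obtained by expanding the $\ell_1$ norm as a sum over coordinates and swapping the order of summation. Thus it suffices to show that, for each fixed coordinate $\ell$, the algorithm correctly accumulates the inner sum $S_i(\ell) := \sum_{j} |x_i(\ell) - x_j(\ell)|\, y_j$ into the $i$-th entry, and then adds these contributions over $\ell$.

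Next I would fix $\ell$ and a point $x_i$ and split $S_i(\ell)$ according to the sorted array $T_\ell$ built in preprocessing: set $L = \{ j : x_j(\ell) \le x_i(\ell)\}$ and $R = \{ j : x_j(\ell) > x_i(\ell)\}$. On $L$ we have $|x_i(\ell) - x_j(\ell)| = x_i(\ell) - x_j(\ell)$ and on $R$ it equals $x_j(\ell) - x_i(\ell)$, so
\[
S_i(\ell) = x_i(\ell)\Big( \sum_{j \in L} y_j - \sum_{j \in R} y_j \Big) + \Big( \sum_{j \in R} x_j(\ell)\, y_j - \sum_{j \in L} x_j(\ell)\, y_j \Big).
\]
Each of the four sums is a prefix or suffix sum, taken in the order of $T_\ell$, of the arrays $(y_j)_j$ and $(x_j(\ell)\, y_j)_j$, with the split point determined by the rank of $x_i(\ell)$ in $T_\ell$. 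I would then argue that this is precisely what the algorithm evaluates: during the query it permutes $y$ into the order of $T_\ell$ (a permutation known from preprocessing), sweeps through $T_\ell$ maintaining running prefix sums of $y$ and of $x(\ell)\cdot y$ together with their totals (so suffixes are obtained by subtraction), and for each $i$ reads off the closed form above in $O(1)$ time. Summing over the $d$ coordinates recovers $(Ay)_i$ for every $i$; the cost is $O(n)$ per coordinate, i.e.\ $O(nd)$ total, after the $O(nd\log n)$ sort.

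The one point that genuinely needs care is ties among the $x_j(\ell)$: the partition into $L$ and $R$ must be consistent with whatever deterministic tie-breaking is implicit in $T_\ell$ and in the rank assigned to $x_i(\ell)$. However, a point $x_j$ with $x_j(\ell) = x_i(\ell)$ contributes $|x_i(\ell)-x_j(\ell)|\,y_j = 0$ to $S_i(\ell)$, so it may be assigned to either side without changing the accumulated value, provided the prefix-sum bookkeeping uses that same assignment. Once this is checked, the claim follows directly from the algebra above, the remainder being a routine induction verifying that the running prefix sums maintained by the algorithm have the stated values. The main (mild) obstacle is thus purely notational: matching the algorithm's rank-based indexing to the set-based sums $L$ and $R$ and confirming the sign at the boundary.
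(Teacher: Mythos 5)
Your proposal is correct and follows essentially the same route as the paper's proof: expand the $\ell_1$ norm, swap the order of summation, split each inner sum by the sorted order of the $\ell$-th coordinates, and identify the four resulting pieces with the prefix/suffix sums $S_1,\dots,S_4$ maintained via the arrays $B_i, C_i$. Your explicit remark that tied coordinates contribute zero and may be placed on either side of the split is a fine (and slightly more careful) treatment of a detail the paper's proof handles implicitly through the rank-based partition.
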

\begin{proof}
Consider any coordinate $k \in [n]$. We show that $(Ay)_k$ is computed exactly. We have
\[(Ay)(k) = \sum_{j = 1}^n y_j \| x_k - x_j \|_1 = \sum_{j=1}^n \sum_{i = 1}^d y_j | x_k(i) - x_j(i)| = \sum_{i=1}^d \sum_{j=1}^n y_j |x_k(i)-x_j(i)|. \]
Let $\pi^i$ denote the order of $[n]$ induced by $T_i$. We have 
\[ \sum_{i=1}^d \sum_{j=1}^n y_j |x_k(i)-x_j(i)| = \sum_{i=1}^d \left( \sum_{j: \pi^i(k) \le \pi^i(j) } y_j(x_j(i) - x_k(i)) + \sum_{j: \pi^i(k) > \pi^i(j) } y_j(x_k(i) - x_j(i))  \right).\]
We now consider the inner sum. It rearranges to the following:
\begin{align*}
    &x_k(i)\left(\sum_{j: \pi^i(k) > \pi^i(j) } y_j- \sum_{j: \pi^i(k) \le \pi^i(j) } y_j   \right) +  \sum_{j: \pi^i(k) \le \pi^i(j) } y_j x_j(i) - \sum_{j: \pi^i(k) > \pi^i(j) } y_j x_j(i) \\
    &= x_k(i) \cdot ( S_3 - S_4 ) + S_2 - S_1,
\end{align*}
where $S_1, S_2, S_3,$ and $S_4$ are defined in lines $15-18$ of Algorithm \ref{alg:query_1} and the last equality follows from the definition of the arrays $B_i$ and $C_i$.
Summing over all $i \in [d]$ gives us the desired result.
\end{proof}

The following theorem readily follows.
\begin{theorem}\label{thm:l1_upper_bound}
Suppose we are given a dataset $\{x_1, \ldots, x_n\}$ which implicitly defines the distance matrix $A_{i,j} = \|x_i - x_j\|_1$. Given a query $y \in \R^d$, we can compute $Ay$ exactly in $O(nd)$ query time. We also require a one time $O(nd\log n)$ preprocessing time which can be reused for all queries.
\end{theorem}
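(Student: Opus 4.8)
The plan is to show that Theorem~\ref{thm:l1_upper_bound} follows immediately from Theorem~\ref{thm:p=1} together with a runtime accounting of the preprocessing and query algorithms. Correctness of the output $Ay$ is already guaranteed by Theorem~\ref{thm:p=1}, so the only remaining work is to bound the two time complexities.

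First I would analyze the preprocessing step (Algorithm~\ref{alg:preprocessing_1}). For each coordinate $i \in [d]$ we extract the $n$ values $x_1(i), \ldots, x_n(i)$ and sort them, which costs $O(n \log n)$ per coordinate via any comparison sort, hence $O(nd \log n)$ total. Along the way we also record the permutation $\pi^i$ of $[n]$ realized by the sort (i.e.\ the rank of each point in coordinate $i$); this is free up to constants. This matches the claimed $O(nd\log n)$ preprocessing bound, and crucially it depends only on $X$, not on the query $y$, so it is amortized over all subsequent queries.

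Next I would analyze the query step (Algorithm~\ref{alg:query_1}). Fix a coordinate $i$. Using the sorted order $\pi^i$, one pass over the points in sorted order lets us compute, for each point $k$, the prefix/suffix aggregates $S_1 = \sum_{j:\pi^i(k)>\pi^i(j)} y_j x_j(i)$, $S_2 = \sum_{j:\pi^i(k)\le\pi^i(j)} y_j x_j(i)$, $S_3 = \sum_{j:\pi^i(k)>\pi^i(j)} y_j$, and $S_4 = \sum_{j:\pi^i(k)\le\pi^i(j)} y_j$ — these are exactly the quantities named in lines 15--18 and used in the proof of Theorem~\ref{thm:p=1}. The point is that the arrays $B_i$ (prefix sums of $y_j x_j(i)$ in sorted order) and $C_i$ (prefix sums of $y_j$ in sorted order) can each be built in $O(n)$ time for a fixed $i$, and then each $(S_1,S_2,S_3,S_4)$ is an $O(1)$-time lookup/subtraction. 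Thus the contribution of coordinate $i$ to every entry $(Ay)(k)$, namely $x_k(i)(S_3-S_4) + S_2 - S_1$, is computed in $O(n)$ time; summing over $k$ and accumulating into the output vector is also $O(n)$. Iterating over all $d$ coordinates gives $O(nd)$ query time.

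The main (minor) obstacle is bookkeeping: making sure that the sorted order $\pi^i$ computed during preprocessing is carried into the query phase so that, at query time, we never re-sort — otherwise we would pay $O(nd\log n)$ per query rather than $O(nd)$. Once that is in place, combining the $O(nd\log n)$ preprocessing bound, the $O(nd)$ per-query bound, and the exactness from Theorem~\ref{thm:p=1} yields the statement. I would close by noting that for a single query the total time is $O(nd\log n)$, but the preprocessing is a one-time cost reusable across arbitrarily many query vectors $y$, which is the regime the theorem is aimed at.
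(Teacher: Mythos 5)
Your proposal is correct and matches the paper's argument: the paper also derives Theorem~\ref{thm:l1_upper_bound} directly from the correctness statement of Theorem~\ref{thm:p=1} together with the (implicit) runtime accounting that sorting each of the $d$ coordinate arrays costs $O(nd\log n)$ once, while each query builds the prefix-sum arrays $B_i, C_i$ in $O(n)$ per coordinate and then does $O(1)$ work per $(k,i)$ pair for $O(nd)$ total. Your additional remark about carrying the sorted orders $\pi^i$ from preprocessing into the query phase is exactly the right bookkeeping point.
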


\begin{algorithm}[H]
\caption{\label{alg:query_1}matrix-vector Query for $p=1$}
\begin{algorithmic}[1]
\State \textbf{Input:} Dataset $X \subset \R^d$
\State \textbf{Output:} $z = Ay$
\Procedure{Query}{$\{T_i\}_{i \in [d]}$, $y$}
\State $y_1, \cdots, y_n \gets$ coordinates of $y$.
\State Associate every $x_i \in X$ with the scalar $y_i$
\For{$i \in [d]$}
\State Compute two arrays $B_i, C_i$ as follows.
\State $B_i$ contains the partial sums of $y_jx_j(i)$ computed in the order induced by $T_i$
\State $C_i$ contains the partial sums of $y_j$ computed in the order induced by $T_i$
\EndFor
\State $z \gets 0^n$
\For{$k \in [n]$}
\For{$i \in [d]$}
\State $q \gets$ position of $x_k(i)$ in the order of $T_i$
\State $S_1 \gets B_i[q]$
\State $S_2 \gets B_i[n] - B_i[q]$
\State $S_3 \gets C_i[q]$
\State $S_4 \gets C_i[n] - C_i[q]$
\State $z(k) += x_k(i) \cdot ( S_3 - S_4 ) + S_2 - S_1$
\EndFor
\EndFor
\EndProcedure
\end{algorithmic}
\end{algorithm}

\section{Lower and Upper bounds for \texorpdfstring{$\ell_{\infty}$}{L-Infinity}}\label{sec:l_infinity_lowerbound}
In this section we give a proof of Theorem~\ref{thm:l_infinity_lower_bound}. Specifically, we give a reduction from a so-called {\em Orthogonal Vector Problem} (OVP)~\cite{williams2005new} to the problem of computing matrix-vector product $Az$, where $A_{i,j}=\|x_i-x_j\|_{\infty}$, for a given set of points $X=\{x_1, \ldots, x_n\}$. The orthogonal vector problem is defined as follows: given two sets of vectors $A=\{a^1, \ldots ,a^n\}$ and $B=\{b^1, \ldots ,b^n\}$, $A,B  \subset \{0,1\}^d$, $|A| = |B| = n$, determine whether there exist $x \in A$ and $y \in B$ such that the dot product $x \cdot y = \sum_{j=1}^d x_j y_j$ (taken over reals) is equal to 0. It is known that if OVP can be solved in strongly subquadratic time $O(n^{2-\alpha})$ for any constant $\alpha>0$ and $d=\omega(\log n)$, then SETH is false. Thus, an efficient reduction from OVP to the matrix-vector product problem yields Theorem \ref{thm:l_infinity_lower_bound}.

\begin{lemma}
If the matrix-vector product problem for $\ell_{\infty}$ distance matrices induced by $n$ vectors of dimension $d$ can be solved in time $T(n,d)$, then OVP (with the same parameters) can be solved in time $O(T(n,d))$.
\end{lemma}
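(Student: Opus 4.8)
The plan is to reduce OVP to a constant number of matrix-vector product evaluations for the $\ell_\infty$ distance matrix on a point set of essentially the same size and dimension. The key observation I would exploit is the standard fact that one can encode a binary dot product into an $\ell_\infty$ distance: for vectors $a, b \in \{0,1\}^d$, if we form the point $u(a) \in \{0,1,2\}^{2d}$ whose first $d$ coordinates are $a$ and whose last $d$ coordinates are $\mathbf{1} - a$ (the bitwise complement), and similarly $v(b)$ whose first $d$ coordinates are $\mathbf{1} - b$ and whose last $d$ coordinates are $b$, then $|u(a)(j) - v(b)(j)| \in \{0,1,2\}$ and a coordinate contributes a $2$ exactly when $a_j = b_j = 1$. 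Hence $\|u(a) - v(b)\|_\infty = 2$ if $a \cdot b \ge 1$, and $\|u(a) - v(b)\|_\infty \le 1$ if $a \cdot b = 0$. (One can arrange, by appending a fixed pair of coordinates that always differ by exactly $1$, that the distance is exactly $1$ in the orthogonal case, so the two cases are cleanly separated as $1$ versus $2$.)

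Next I would assemble the full point set $X = \{u(a^1), \ldots, u(a^n), v(b^1), \ldots, v(b^n)\}$ of size $2n$ in dimension $2d + O(1)$, so that its $\ell_\infty$ distance matrix $M$ is $2n \times 2n$. Solving OVP amounts to deciding whether the $n \times n$ off-diagonal block of $M$ indexed by the $u$'s against the $v$'s contains an entry equal to $1$ (as opposed to all entries being $2$). I would extract this information with a single matrix-vector product: let $z \in \R^{2n}$ be the indicator vector of the $v$-block, i.e. $z_i = 1$ for $i$ corresponding to some $v(b^j)$ and $z_i = 0$ otherwise, and compute $Mz$. Then for a coordinate $k$ corresponding to $u(a^k)$, $(Mz)_k = \sum_j \|u(a^k) - v(b^j)\|_\infty$, which equals $2n$ if $a^k$ is orthogonal to no $b^j$, and is strictly less than $2n$ otherwise; so $a^k$ participates in an orthogonal pair iff $(Mz)_k < 2n$. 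Scanning the $n$ relevant coordinates of $Mz$ in $O(n)$ time then answers OVP. The distances among the $u$'s with each other and among the $v$'s with each other also contribute to $(Mz)_k$, but those are distances from a $u$-point to a $v$-point only in the sum above — wait, more carefully: since $z$ is supported on the $v$-block, $(Mz)_k$ for a $u$-index $k$ only sums distances $u(a^k)$-to-$v(b^j)$, so no spurious contributions arise; if one prefers, one can instead compute $M$ times the all-ones vector and $M$ times the $u$-indicator and subtract, which also isolates the cross-block sums.

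The reduction builds $X$ in $O(nd)$ time, makes $O(1)$ matrix-vector queries each costing $T(2n, 2d + O(1)) = O(T(n,d))$ (absorbing constant factors in $n$ and additive constants in $d$, which is legitimate since $d = \omega(\log n)$), and does $O(n)$ postprocessing, for a total of $O(T(n,d) + nd) = O(T(n,d))$ assuming, as one may, that $T(n,d) = \Omega(nd)$ — otherwise there is nothing to prove since reading the input already takes that long. The main thing to get right is the arithmetic of the distance gadget: ensuring the orthogonal and non-orthogonal cases map to two distinct fixed values (here $1$ and $2$) regardless of the Hamming weights of $a$ and $b$, and ensuring the summed query value cleanly distinguishes "some entry is $1$" from "all entries are $2$." Both are elementary once the padding coordinates are set up; there is no real obstacle beyond bookkeeping.
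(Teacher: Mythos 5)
The central gadget in your reduction is broken as written. With $u(a) = (a, \mathbf{1}-a)$ and $v(b) = (\mathbf{1}-b, b)$, every coordinate of both points lies in $\{0,1\}$, so $\|u(a)-v(b)\|_\infty \le 1$ always and the claimed value $2$ in the non-orthogonal case can never occur. Worse, what this gadget actually detects is whether $a$ and $b$ \emph{agree} on some coordinate: the $j$-th coordinate of the first block differs by $|a_j - (1-b_j)|$, which is $1$ exactly when $a_j = b_j$, and the second block behaves identically. For example $a = b = (1,0)$ (so $a \cdot b = 1$) and $a = (1,1)$, $b = (0,0)$ (orthogonal) both end up at $\ell_\infty$ distance $1$ once your always-differ-by-$1$ padding coordinate is appended, so the two cases you must separate collapse. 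The fix is exactly the paper's gadget (up to scaling): map the $a$-side coordinatewise by $a_j \mapsto a_j + 1 \in \{1,2\}$ and the $b$-side by $b_j \mapsto 1 - b_j \in \{0,1\}$ (the paper uses $f(0)=g(0)=1/2$, $f(1)=0$, $g(1)=1$), so the coordinate difference equals $a_j + b_j$, which is $2$ iff $a_j = b_j = 1$ and at most $1$ otherwise; for non-zero $a$ (zero vectors make OVP trivial, as the paper also notes) the distance is then exactly $2$ iff $a \cdot b > 0$ and exactly $1$ otherwise, with no extra padding needed.

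Once the gadget is repaired, the rest of your argument is sound and in fact a bit more economical than the paper's: you make a single matrix-vector query on the combined $2n$-point set with the indicator vector of the $B$-block, and decide OVP by checking whether some $u$-row entry of the product falls below $2n$; the paper instead queries three distance matrices ($M_A$, $M_B$, and the combined one) with all-ones vectors and subtracts to isolate the cross-block sums. Either way the cost is $O(T(n,d))$, and your accounting ($O(nd)$ to build the point set, $O(n)$ postprocessing, and $T(n,d) = \Omega(nd)$ without loss of generality) is correct. So the only genuine gap is the arithmetic of the encoding itself, which is the one step the reduction cannot do without.
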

\begin{proof}
Define two functions, $f,g:\{0,1\}^d \to [0,1]$, such that $f(0)=g(0)=1/2$, $f(1)=0$, $g(1)=1$. We extend both functions to vectors by applying $f$ and $g$ coordinate wise and to sets by letting $f(\{a^1, \ldots, a^n \}) = \{f(a^1), \ldots, f(a^n) \})$; the function $g$ is extended in the same way for $B$. Observe that, for any pair of non-zero vectors $a,b \in \{0,1\}^d$, we have $\|f(a)-g(b)\|_{\infty}=1$ if and only if $a\cdot b >0$, and $\|f(a)-g(b)\|_{\infty}=1/2$ otherwise.

Consider two sets of binary vectors $A$ and $B$. Without loss of generality we can assume that the vectors are non-zero, since otherwise the problem is trivial.  Define three distance matrices: matrix $M_A$ defined by the set $f(A)$, matrix $M_B$ defined by the set $g(B)$ and $M_{AB}$ defined by the set $f(A) \cup f(B)$. 
Furthermore, let $M$ be the ``cross-distance'' matrix, such that such that $M_{i,j}=\|f(a^i)-g(b^j)\|_{\infty}$. Observe that  the matrix $M_{AB}$ contains blocks $M_A$ and $M_B$ on its diagonal, and blocks $M$ and $M^T$ off-diagonal. Thus, $M_{AB} \cdot 1 = M_A\cdot 1 + M_B \cdot 1 + 2 M \cdot 1$, where $1$ denotes an all-ones vector of the appropriate dimension. Since $M \cdot 1 = (M_{AB} \cdot 1 -  M_A\cdot 1 - M_B \cdot 1)/2$, we can calculate $M \cdot 1$ in time $O(T(n,d))$. Since all  entries of $M$ are either $1$ or $1/2$, we have that $M \cdot 1 < n^2$ if and only if there is an entry $M_{i,j}=1/2$. However, this only occurs if $a^i \cdot b^j=0$.
\end{proof}

\subsection{Approximate \texorpdfstring{$\ell_{\infty}$}{L-Infinity} Matrix-Vector Queries}

In light of the lower bounds given above, we consider initializing \emph{approximate} matrix-vector queries for the $\ell_{\infty}$ function. Note that the lower bound holds for points in $\{0,1,2\}^d$ and thus it is natural to consider approximate upper bounds for the case of limited alphabet.  

\paragraph{Binary Case.}
We first consider the case that all points $x \in X$ are from $\{0,1\}^d$. We first claim the existence of a polynomial $T$ with the following properties. Indeed, the standard Chebyshev polynomials satisfy the following lemma, see e.g., see Chapter 2 in \cite{sachdeva2014faster}.

\begin{lemma}\label{lem:cheb}
There exists a polynomial $T: \R \rightarrow \R$ of degree $O(\sqrt{d} \log(1/\eps))$ such that $T(0) = 0$ and $|T(x)-1| \le \eps$ for all $x \in [1/d, 1]$.
\end{lemma}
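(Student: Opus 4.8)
The plan is to build $T$ directly from the degree-$k$ Chebyshev polynomial of the first kind $T_k$, exploiting the classical fact that $T_k$ stays bounded by $1$ in absolute value on $[-1,1]$ but grows exponentially fast just outside that interval, at a rate controlled by $\mathrm{arccosh}$. First I would introduce the affine map $\ell$ sending the target interval $[1/d,1]$ onto $[-1,1]$, namely $\ell(x) = \frac{2x-(1+1/d)}{1-1/d}$, so that $\ell(1/d)=-1$ and $\ell(1)=1$. A one-line computation then gives $\ell(0) = -\frac{d+1}{d-1}$, which sits just below $-1$.

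Next I would set
\[
T(x) \;=\; 1 - \frac{T_k\!\left(\ell(x)\right)}{T_k\!\left(\ell(0)\right)}.
\]
By construction $T(0)=0$, and $T$ is a polynomial of degree $k$. For $x\in[1/d,1]$ we have $\ell(x)\in[-1,1]$, hence $|T_k(\ell(x))|\le 1$, so $|T(x)-1| \le 1/|T_k(\ell(0))|$. It therefore suffices to lower bound $|T_k(\ell(0))| = |T_k(\tfrac{d+1}{d-1})|$ (using $T_k(-y)=(-1)^kT_k(y)$).

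For that final step I would invoke the identity $T_k(\cosh\theta)=\cosh(k\theta)$ for real $\theta\ge 0$, together with the estimate $\mathrm{arccosh}(1+z)=\Theta(\sqrt z)$ for small $z>0$. Since $\tfrac{d+1}{d-1} = 1 + \tfrac{2}{d-1}$, this gives $\mathrm{arccosh}(\tfrac{d+1}{d-1}) = \Theta(1/\sqrt d)$, so $|T_k(\ell(0))| = \cosh(\Theta(k/\sqrt d)) \ge \tfrac12\exp(\Theta(k/\sqrt d))$. Choosing $k=\Theta(\sqrt d\,\log(1/\eps))$ makes this at least $1/\eps$, which yields $|T(x)-1|\le\eps$ on $[1/d,1]$ and finishes the proof with the claimed degree bound.

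The only mildly delicate point is making the $\mathrm{arccosh}$ estimate quantitative — i.e.\ checking $\mathrm{arccosh}(1+z)\ge c\sqrt z$ for an explicit constant $c$ on the relevant range $z = \tfrac{2}{d-1}\le 2$ — which can be done directly or simply by citing the standard Chebyshev amplification lemma (see Chapter~2 of \cite{sachdeva2014faster}). Everything else is routine: the affine reparametrization, the uniform bound $\|T_k\|_{\infty,[-1,1]}\le 1$, and counting the degree.
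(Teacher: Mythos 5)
Your construction is correct and is exactly the standard Chebyshev amplification argument that the paper invokes by citation (Chapter~2 of \cite{sachdeva2014faster}) rather than writing out: the affine rescaling of $[1/d,1]$ to $[-1,1]$, normalization by $T_k(\ell(0))$ to force $T(0)=0$, and the $\cosh$ growth bound $T_k(1+z)\ge \tfrac12 e^{\Omega(k\sqrt z)}$ giving degree $k=O(\sqrt d\,\log(1/\eps))$. No gaps; this matches the intended proof.
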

Now note that $\|x-y\|_{\infty}$ can only take on two values, $0$ or $1$. Furthermore, $\|x-y\|_{\infty}  = 0$ if and only if $\|x-y\|_2^2 = 0$ and $\|x-y\|_{\infty}  = 1$ if and only if $\|x-y\|_2^2 \ge 1$. Therefore, $\|x-y\|_{\infty}  = 0$ if and only if $T(\|x-y\|_2^2/d) = 0$ and $\|x-y\|_{\infty}  = 1 $ if and only if $|T(\|x-y\|_2^2/d) - 1|\le \eps$.
Thus, we have that 
\[ |A_{i,j} - T(\|x_i-x_j\|_2^2/d)| = |\|x_i-x_j\|_{\infty} - T(\|x_i-x_j\|_2^2/d)| \le \eps \]
for all entries $A_{i,j}$ of $A$.
Note that $T(\|x-y\|_2^2/d)$ is a polynomial with $O((2d)^t)$ monomials in the variables $x(1), \ldots, x(d)$. Consider the matrix $B$ satisfying $B_{i,j} = T(\|x_i-x_j\|_2^2/d)$. Using the same ideas as our upper bound results for $f(x,y) = \langle x, y \rangle^p$, it is straightforward to calculate the matrix vector product $By$ (see Section \ref{sec:poly}). To summarize, for each $k \in [n]$, we write the $k$th coordinate of $By$ as a polynomial in the $d$ coordinates of $x_k$. This polynomial has $O((2d)^t)$ monomials and can be constructed in $O(n(2d)^t)$ time. Once constructed, we can evaluate the polynomial at $x_1 ,\ldots, x_n$ to obtain all the $n$ coordinates of $By$. Each evaluation requires $O((2d)^t)$ resulting in an overall time bound of $O(n(2d)^t)$.

\begin{theorem}
Let $A_{i,j} = \|x_i - x_j \|_{\infty}$. We can compute $By$ in time $O(n(2d)^{\sqrt{d} \log(1/\eps)})$ where $\|A-B\|_{\infty} \le \eps$.
\end{theorem}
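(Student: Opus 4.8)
The plan is to follow the polynomial-approximation route already outlined above for the binary case $X\subseteq\{0,1\}^d$: rather than touching $A$, we work with the matrix $B$ given by $B_{i,j}=T(\|x_i-x_j\|_2^2/d)$, where $T$ is the Chebyshev polynomial supplied by Lemma~\ref{lem:cheb}, of degree $t=O(\sqrt d\log(1/\eps))$, with $T(0)=0$ and $|T(x)-1|\le\eps$ for $x\in[1/d,1]$. First I would establish $\|A-B\|_\infty\le\eps$. Because $x_i,x_j\in\{0,1\}^d$, the number $s_{ij}:=\|x_i-x_j\|_2^2$ counts the coordinates in which the two points differ, hence is an integer in $\{0,1,\dots,d\}$, and $\|x_i-x_j\|_\infty$ equals $0$ precisely when $s_{ij}=0$ and equals $1$ precisely when $s_{ij}\ge1$. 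Thus $s_{ij}/d$ is either $0$ or lies in $[1/d,1]$: in the first case $B_{i,j}=T(0)=0=A_{i,j}$, and in the second $A_{i,j}=1$ while $|B_{i,j}-1|=|T(s_{ij}/d)-1|\le\eps$. This gives the approximation guarantee entrywise.

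Next I would compute $By$ without ever forming $B$, using the separable structure already exploited for the polynomial kernel (Theorem~\ref{thm:poly_kernel}, Section~\ref{sec:poly}). Define the single $d$-variable polynomial $P(v):=\sum_{j=1}^n y_j\,T(\|v-x_j\|_2^2/d)$, so that $(By)_k=P(x_k)$ for every $k\in[n]$. Since $\|v-x_j\|_2^2=\|x_j\|_2^2+\sum_{l=1}^d v(l)^2-2\sum_{l=1}^d x_j(l)v(l)$ is a degree-$2$ polynomial in $v$ with only $2d+1$ terms, raising it to any power $\le t$ and combining with the coefficients of $T$ yields a polynomial in $v$ with $O((2d)^t)$ monomials; the coefficient of each monomial of $P$ is obtained by summing a simple expression over $j=1,\dots,n$, so all coefficients of $P$ can be assembled in $O(n(2d)^t)$ time. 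Evaluating $P$ at the $n$ points $x_1,\dots,x_n$ then costs $O((2d)^t)$ per point, hence $O(n(2d)^t)$ in total, and substituting $t=O(\sqrt d\log(1/\eps))$ gives the claimed running time $O(n(2d)^{\sqrt d\log(1/\eps)})$.

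The two genuinely delicate points, which is where I would be most careful, are (i) the exact-on-zero behavior of $T$, which is what lets $B$ reproduce $A$ with no error on pairs of equal points, together with the matching observation that $s_{ij}/d\in[1/d,1]$ whenever $s_{ij}\ge1$ so that Lemma~\ref{lem:cheb} applies on exactly the right interval; and (ii) the monomial bookkeeping, i.e.\ verifying that the expansion of $T(\|v-x_j\|_2^2/d)$ really has only $O((2d)^t)$ monomials in $v$ and that both assembling the coefficients of $P$ and evaluating $P$ stay within the stated budget. Neither is hard: both reduce to Lemma~\ref{lem:cheb} and the polynomial-kernel machinery already in hand.
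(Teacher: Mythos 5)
Your proposal is correct and follows essentially the same route as the paper: approximate $A_{i,j}$ by $T(\|x_i-x_j\|_2^2/d)$ using the Chebyshev polynomial of Lemma~\ref{lem:cheb}, note that $\|x_i-x_j\|_2^2/d$ is either $0$ or in $[1/d,1]$ for binary points, and then evaluate the resulting $O((2d)^t)$-monomial polynomial via the linearization technique of Section~\ref{sec:poly} in $O(n(2d)^t)$ total time. The extra care you take with the entrywise error analysis and the monomial bookkeeping only makes explicit what the paper's argument sketches.
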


\paragraph{Entries in $\{0, \ldots, M \}$.}
We now consider the case that all points $x \in X$ are from $\{0,\ldots, M\}^d$. Our argument will be a generalization of the previous section. At a high level, our goal is to detect which of the $M+1$ possible values in $\{0, \ldots, M\}$ is equal to the $\ell_{\infty}$ norm. To do so, we appeal to the prior section and design estimators which approximate the indicator function $``\|x-y\|_{\infty} \ge i"$. By summing up these indicators, we can approximate $\|x-y\|_{\infty}$.

Our estimators will again be designed using the Chebyshev polynomials. To motivate them, suppose that we want to detect if $\|x-y\|_{\infty} \ge i$ or if $\|x-y\|_{\infty} < i$. In the first case, some entry in $x-y$ will have absolute value value at least $i$ where as in the other case, all entries of $x-y$ will be bounded by $i-1$ in absolute value. Thus if we can boost this `signal', we can apply a polynomial which performs thresholding to distinguish the two cases. This motivates considering the functions of $\|x-y\|_k^k$ for a larger power $k$. In particular, in the case that $\|x-y\|_{\infty} \ge i$, we have $\|x-y\|_k^k \ge i^k$ and otherwise,  $\|x-y\|_k^k \le di^{k-1}$. By setting $k \approx \log(d)$, the first value is much larger than the latter, which we can detect using the `threshold' polynomials of the previous section.

We now formalize our intuition. It is known that appropriately scaled Chebyshev polynomials satisfy the following guarantees, see e.g., see Chapter 2 in \cite{sachdeva2014faster}.

\begin{lemma}\label{lem:cheb2}
There exists a polynomial $T: \R \rightarrow \R$ of degree $O(\sqrt{t} \log(t/\eps))$ such that $|T(x)| \le \eps/t$ for all $x \in [0, 1/(10t)]$ and $|T(x)-1| \le \eps/t^2$ for all $x \in [1/t, 1]$.
\end{lemma}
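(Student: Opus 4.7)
The proof is a direct application of the classical theory of polynomial approximation of step/threshold functions via Chebyshev polynomials, as developed in \cite{sachdeva2014faster}. My plan is as follows.

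First, I would recall the key property of the degree-$n$ Chebyshev polynomial $T_n$: $|T_n(y)| \le 1$ for $y \in [-1, 1]$, while $T_n(1 + \gamma) \ge \tfrac{1}{2}(1 + \sqrt{2\gamma})^n$ for $\gamma > 0$, so that $n = \Theta(\sqrt{1/\gamma}\,\log(1/\delta))$ suffices for $T_n$ to exceed $1/\delta$ at distance $\gamma$ outside $[-1, 1]$ while remaining bounded by $1$ on $[-1, 1]$. Composed with a constant-depth ``robust clipping'' polynomial that maps ``very large'' to $1$ and ``small in absolute value'' to $0$, this yields the standard step-function approximation: for any $\alpha \in (0, 1)$ and $\delta \in (0, 1)$, there exists a polynomial $P$ of degree $O(\sqrt{1/\alpha}\,\log(1/\delta))$ with $|P(y)| \le \delta$ for $y \in [-1, -\alpha]$ and $|P(y) - 1| \le \delta$ for $y \in [\alpha, 1]$.

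Second, I would apply an affine transformation $\phi : [0, 1] \to [-1, 1]$ that sends the gap $(1/(10t),\, 1/t)$ to a symmetric interval $(-\alpha, \alpha)$ with $\alpha = \Theta(1/t)$. Concretely, one can center $\phi$ at the midpoint $11/(20t)$ of the gap and scale so that $[0, 1]$ lands inside $[-1, 1]$; the half-width $9/(20t)$ of the gap then maps to some $\alpha = \Theta(1/t)$, placing $[0, 1/(10t)]$ inside $[-1, -\alpha]$ and $[1/t, 1]$ inside $[\alpha, 1]$.

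Finally, I would set $T(x) := P(\phi(x))$ using the parameters $\alpha = \Theta(1/t)$ and $\delta = \eps/t^2$ (the tighter of the two required tolerances, since $\eps/t^2 \le \eps/t$). The composition has degree $\deg(P) = O(\sqrt{t}\,\log(t^2/\eps)) = O(\sqrt{t}\,\log(t/\eps))$, and it inherits the desired estimates $|T(x)| \le \eps/t^2 \le \eps/t$ on $[0, 1/(10t)]$ and $|T(x) - 1| \le \eps/t^2$ on $[1/t, 1]$.

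The main technical obstacle is the construction of $P$ itself: a single scaled Chebyshev polynomial grows unboundedly past $[-1, 1]$, so it is not directly close to the constant $1$ on all of $[\alpha, 1]$. The standard remedy is to compose two Chebyshev-based polynomials---one that jumps from bounded to very large values across the gap, and one that robustly maps very-large inputs to $1$; this composition is exactly what contributes the extra $\log(1/\delta)$ factor to the degree, and verifying the interaction of the two error regimes is the only calculation that needs care.
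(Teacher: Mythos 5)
There is a genuine gap, and it is in your first step. The statement you invoke as the ``standard step-function approximation'' --- a polynomial $P$ of degree $O(\sqrt{1/\alpha}\,\log(1/\delta))$ with $|P(y)|\le\delta$ on $[-1,-\alpha]$ and $|P(y)-1|\le\delta$ on $[\alpha,1]$ --- is false when the gap $(-\alpha,\alpha)$ sits in the middle of $[-1,1]$: any such polynomial must have degree $\Omega(1/\alpha)$, even for constant $\delta$. Indeed, writing $P$ as its even and odd parts and using the Chebyshev growth bound, a degree-$n$ polynomial bounded by $O(1)$ on $[-1,-\alpha]\cup[\alpha,1]$ is bounded by $O(e^{O(n\alpha)})$ inside the gap; if $n\ll 1/\alpha$ this makes $P$ bounded by $O(1)$ on all of $[-1,1]$, so Bernstein's inequality gives $|P'(y)|=O(n)$ near the center, while the mean value theorem forces $|P'|\ge(1-2\delta)/(2\alpha)$ somewhere in $(-\alpha,\alpha)$; hence $n=\Omega(1/\alpha)$ (the sharp bound $\Theta(\alpha^{-1}\log(1/\delta))$ is due to Eremenko--Yuditskii). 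The square-root saving in the lemma is an \emph{endpoint} phenomenon: the gap $[1/(10t),1/t]$ abuts the endpoint $0$ of the interval on which the polynomial must be controlled, and under the substitution $x=(1-\cos\theta)/2$ an interval of length $\Theta(1/t)$ adjacent to the endpoint has width $\Theta(1/\sqrt{t})$ in the angle variable $\theta$ --- this is also exactly why the growth estimate $T_n(1+\gamma)\ge \frac{1}{2}(1+\sqrt{2\gamma})^{\,n}$ that you quote carries a $\sqrt{\gamma}$. Your affine map $\phi$ recenters the gap at the middle of $[-1,1]$ and thereby discards precisely this advantage; carried out with the correct centered-gap bound, your route yields degree $\Theta(t\log(t/\eps))$, quadratically worse than the lemma claims.

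A secondary problem in the same step is the ``constant-depth robust clipping'' polynomial: on the $1$-side the scaled Chebyshev polynomial takes values ranging from about $1/\delta$ up to $2^{\Theta(n)}$, and no constant-degree polynomial maps that entire range to within $\delta$ of $1$ while mapping $[-1,1]$ to within $\delta$ of $0$, so the composition you describe cannot be assembled as stated. Note that the paper itself does not prove the lemma; it cites the scaled-Chebyshev constructions in Sachdeva--Vishnoi, where the two-sided threshold is built with the gap kept at the endpoint (equivalently, by approximating the step in the variable $\theta$, where $[0,1/(10t)]$ and $[1/t,1]$ become $[0,c_0/\sqrt{t}]$ and $[c_1/\sqrt{t},\pi]$ with $c_1>c_0$, by an even trigonometric polynomial of degree $O(\sqrt{t}\log(t/\eps))$). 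Reworking your argument in that form, rather than symmetrizing the gap to the center, would close the gap; your final bookkeeping with $\delta=\eps/t^2$ and the $\log(t^2/\eps)=O(\log(t/\eps))$ simplification is fine.
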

Given $x,y \in \R^d$, our estimator will first try to detect if $\|x-y\|_{\infty} \ge i$. Let $T_1$ be a polynomial from Lemma \ref{lem:cheb2} with $t = O(M^k)$ for $k = O(M \log(Md))$ and assuming $k$ is even. Let $T_2$ be a polynomial from Lemma \ref{lem:cheb2} with $t = O(\sqrt{d}\log(M/\eps))$. Our estimator will be 
\[T_2 \left( \frac{1}d \sum_{j=1}^d T_1 \left( \frac{(x(j)-y(j))^k}{i^k \cdot M^k} \right) \right). \]
If coordinate $j$ is such that $|x(j)-y(j)| \ge i$, then 
\[ \frac{(x(j)-y(j))^k}{i^k \cdot M^k} \ge \frac{1}{M^k} \]
and so $T_1$ will evaluate to a value very close to $1$. Otherwise, we know that 
\[ \frac{(x(j)-y(j))^k}{i^k \cdot M^k} \le \frac{(i-1)^k}{i^k M^k} = \frac{1}{M^k}\left(1 - 1/i \right)^k \ll \frac{1}{M^k} \cdot \frac{1}{\text{poly}(M, d)} \] by our choice of $k$, which means that $T_1$ will evaluate to a value close to $0$. Formally,
\[\frac{1}d \sum_{j=1}^d T_1 \left( \frac{(x(j)-y(j))^k}{i^k \cdot M^k} \right) \]
will be at least $1/d$ if there is a $j \in [d]$ with $|x(j)-y(j)| \ge i$ and otherwise, will be at most $1/(10d)$. By our choice of $T_2$, the overall estimate output at least $1-\eps$ in the first case and a value at most $\eps$ in the second case. 

The polynomial which is the concatenation of $T_2$ and $T_1$ has $O \left( \left( dk \cdot \text{deg}(T_1) \right)^{\text{deg}(T_2)} \right) =  (dM)^{O(M \sqrt{d} \log(Md))}$
monomials, if we consider the expression as a polynomial in the variables $x(1), \ldots, x(d)$. Our final estimator will be the sum across all $i \ge 1$. Following our upper bound techniques for matrix-vector products for polynomial, e.g. in Section \ref{sec:poly}, and as outlined in the prior section, we get the following overall query time:

\begin{theorem}Suppose we are given $X = \{x_1,\ldots,x_n\} \subseteq \{0,\ldots, M\}^d$ which implicitly defines the matrix $A_{i,j} = \|x_i - x_j\|_{\infty}$. For any query $y$, we can compute $By$ in time $n \cdot (dM)^{O(M \sqrt{d} \log (Md/\eps))}$ where $\|A-B\|_{\infty} \le \eps$.
\end{theorem}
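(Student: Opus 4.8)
The plan is to reduce the $\ell_\infty$ matrix-vector product to the polynomial-kernel matrix-vector routine of Section~\ref{sec:poly} by exhibiting a polynomial $E(u,v)$ in the $2d$ coordinates of $u,v$ with $|E(u,v)-\|u-v\|_\infty|\le\eps$ for all $u,v\in\{0,\dots,M\}^d$, and with a controlled number of monomials. Since $\|u-v\|_\infty\in\{0,1,\dots,M\}$, we have the telescoping identity $\|u-v\|_\infty=\sum_{i=1}^{M}\mathbf{1}[\,\|u-v\|_\infty\ge i\,]$, so it suffices to build, for each threshold $i\in[M]$, a polynomial $E_i$ with $|E_i(u,v)-\mathbf{1}[\,\|u-v\|_\infty\ge i\,]|\le \eps/M$; then $E:=\sum_{i=1}^M E_i$ does the job and, setting $B_{k,j}:=E(x_k,x_j)$, we will have $\|B-A\|_\infty\le\eps$.

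The construction of $E_i$ follows the ``boost-then-threshold'' idea sketched above. First raise each coordinate gap to an even power $k=\Theta(M\log(Md/\eps))$ and normalize: if $|u(j)-v(j)|\ge i$ then $(u(j)-v(j))^k/(i^kM^k)\ge M^{-k}$, whereas if $|u(j)-v(j)|\le i-1$ then the same quantity is at most $(1-1/i)^kM^{-k}\le M^{-k}e^{-k/M}=M^{-k}/\mathrm{poly}(Md/\eps)$. Apply the inner Chebyshev polynomial $T_1$ of Lemma~\ref{lem:cheb2} with $t=t_1=\Theta(M^k)$ coordinatewise to these normalized powers; it reads within $\eps/t_1^2$ of $1$ on the ``large'' values (which lie in $[1/t_1,1]$) and within $\eps/t_1$ of $0$ on the ``small'' values (which lie in $[0,1/(10t_1)]$). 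Consequently $\tfrac1d\sum_{j=1}^d T_1\!\big((u(j)-v(j))^k/(i^kM^k)\big)$ is at least $1/d$ when $\|u-v\|_\infty\ge i$ and at most $1/(10d)$ otherwise. Finally apply the outer Chebyshev polynomial $T_2$ of Lemma~\ref{lem:cheb2} with $t=t_2=\Theta(d)$ to this average; it separates the gap $[0,1/(10d)]$ versus $[1/d,1]$ and outputs within $\eps/M$ of $\mathbf 1[\,\|u-v\|_\infty\ge i\,]$ after rescaling the target error by a factor $M$. Summing over $i$ yields $\|B-A\|_\infty\le\eps$, using $\|u-v\|_\infty=\sum_i\mathbf 1[\,\|u-v\|_\infty\ge i\,]$ and a union bound over the $M$ thresholds.

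It remains to bound the monomial count of $E$ and to run the evaluation. The term $T_1\big((u(j)-v(j))^k/(i^kM^k)\big)$ is a polynomial in the single quantity $u(j)-v(j)$ of degree $k\cdot\deg(T_1)$, hence a polynomial in the two variables $u(j),v(j)$ with $O((k\deg T_1)^2)$ monomials; averaging over $j$ gives a $2d$-variate polynomial with $O\big(d(k\deg T_1)^2\big)$ monomials, and composing with $T_2$ (degree $\deg T_2$) blows the count up to $\big(d\,k\deg T_1\big)^{O(\deg T_2)}$. Plugging in $\deg T_1=\tilde O(\sqrt{t_1})=\tilde O(M^{k/2})$, $\deg T_2=\tilde O(\sqrt{d})$, and $k=\Theta(M\log(Md/\eps))$, the count for each $E_i$ — and hence, up to the $\mathrm{poly}(M)$ from summing over $i$ which is absorbed into the exponent, for $E$ — is $(dM)^{O(M\sqrt d\log(Md/\eps))}$. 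Given the query vector $z$, we then proceed exactly as in Section~\ref{sec:poly}: expand $B_{k,j}=E(x_k,x_j)=\sum_{\alpha,\beta}e_{\alpha,\beta}\,x_k^\alpha x_j^\beta$, precompute the moment sums $m_\beta=\sum_j z_j x_j^\beta$ for every multi-index $\beta$, contract to get $\tilde e_\alpha=\sum_\beta e_{\alpha,\beta}m_\beta$, and evaluate $(Bz)_k=\sum_\alpha \tilde e_\alpha x_k^\alpha$ for each $k$; the cost is $n$ times the monomial count, i.e.\ $n\cdot(dM)^{O(M\sqrt d\log(Md/\eps))}$. The main obstacle is the second paragraph: choosing $k$ and the Chebyshev parameters $t_1,t_2$ so that the chain of inequalities (coordinate gap $\Rightarrow$ post-power gap $\Rightarrow$ post-$T_1$ average gap $\Rightarrow$ post-$T_2$ indicator accuracy) closes with all errors simultaneously below $\eps/M$, while keeping $k=O(M\log(Md/\eps))$ so the final monomial count does not exceed $(dM)^{O(M\sqrt d\log(Md/\eps))}$; the degree accounting and the evaluation step are then routine given Section~\ref{sec:poly}.
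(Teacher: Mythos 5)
Your proposal follows essentially the same route as the paper: the telescoping sum of threshold indicators $\mathbf{1}[\|u-v\|_\infty\ge i]$, the ``boost-then-threshold'' estimator $T_2\bigl(\tfrac1d\sum_j T_1((u(j)-v(j))^k/(i^kM^k))\bigr)$ with the same Chebyshev polynomials and the same choice of $k$, the same monomial count $(dk\deg T_1)^{O(\deg T_2)}$, and the same evaluation via the polynomial-kernel machinery of Section~\ref{sec:poly}. If anything, your accounting is slightly more careful than the paper's (you explicitly budget $\eps/M$ error per threshold and take $t_2=\Theta(d)$ so that $\deg T_2=\tilde O(\sqrt d)$, which is what the argument actually needs), so the proof is correct and matches the paper's.
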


\section{Empirical Evaluation}
We perform an empirical evaluation of our matrix-vector query for the $\ell_1$ distance function. We chose to implement our $\ell_1$ upper bound since it's a clean algorithm which possesses many of the same underlying algorithmic ideas as some of our other upper bound results. We envision that similar empirical results hold for most of our upper bounds in Table \ref{tab:results}. Furthermore, matrix-vector queries are the dominating subroutine in many key practical linear algebra algorithms such as the power method for eigenvalue estimation or iterative methods for linear regression: a fast matrix-vector query runtime automatically translates to faster algorithms for downstream applications.

\begin{table}[ht]
\centering
{\renewcommand{\arraystretch}{1.3}
\begin{tabular}{c|c|c|c|c}
Dataset                           & $(n,d)$                                 & Algo.                     & Preprocessing Time & Avg. Query Time                \\ \hline
\multirow{2}{*}{Gaussian Mixture} & \multirow{2}{*}{$(5 \cdot 10^4, 50$)}   & Naive                     & 453.7 s            & 43.3 s                         \\
                                  &                                         & Ours                      & 0.55 s             & 0.09 s                         \\ \hline
\multirow{2}{*}{MNIST}            & \multirow{2}{*}{$(5 \cdot 10^4, 784)$}  & Naive                     & 2672.5 s           & 38.6 s                         \\
                                  &                                         & \multicolumn{1}{l|}{Ours} & 5.5 s              & 1.9 s                          \\ \hline
\multirow{2}{*}{Glove}            & \multirow{2}{*}{$(1.2 \cdot 10^6, 50)$} & Naive                     & -                  & $\approx$ 2.6 days (estimated) \\
                                  &                                         & \multicolumn{1}{l|}{Ours} & 16.8 s             & 3.4 s                         
\end{tabular}
}
\caption{\label{table:experiments} Dataset description and empirical results. $(n,d)$ denotes the number of points and dimension of the dataset, respectively. Query times are averaged over $10$ trials with Gaussian vectors as queries.}
\end{table}

\paragraph{Experimental Design.}
We chose two real and one synthetic dataset for our experiments. We have two ``small" datasets and one ``large" dataset. The two small datasets have $5\cdot 10^4$ points whereas the large dataset has approximately $10^6$ points. The first dataset is points drawn from a mixture of three spherical Gaussians in $\R^{50}$. The second dataset is the standard MNIST dataset \cite{lecun1998mnist} and finally, our large dataset is Glove word embeddings\footnote{Can be accessed here: \url{http://github.com/erikbern/ann-benchmarks/}} in $\R^{50}$ \cite{pennington2014glove}.

The two small datasets are small enough that one can feasibly initialize the full $n \times n$ distance matrix in memory in reasonable time. A $5 \cdot 10^4 \times 5 \cdot 10^4$ matrix with each entry stored using $32$ bits requires $10$ gigabytes of space. This is simply impossible for the Glove dataset as approximately $5.8$ terabytes of space is required to initialize the distance matrix (in contrast, the dataset itself only requires $< 0.3$ gigabytes to store).

The naive algorithm for the small datasets is the following: we initialize the full distance matrix (which will count towards preprocessing), and then we use the full distance matrix to perform a matrix-vector query. Note that having the full matrix to perform a matrix-vector product only helps the naive algorithm since it can now take advantage of optimized linear algebra subroutines for matrix multiplication and does not need to explicitly calculate the matrix entries. 
Since we cannot initialize the full distance matrix for the large dataset, the naive algorithm in this case will compute the matrix-vector product in a standalone fashion by generating the entries of the distance matrix on the fly. We compare the naive algorithm to our Algorithms \ref{alg:preprocessing_1} and \ref{alg:query_1}. 

Our experiments are done in a 2021 M1 Macbook Pro with 32 gigabytes of RAM. We implement all algorithms in Python 3.9 using Numpy with Numba acceleration to speed up all algorithms whenever possible.
\paragraph{Results.}
Results are shown in Table \ref{table:experiments}. We show preprocessing and query time for both the naive and our algorithm in seconds. The query time is averaged over $10$ trials using Gaussian vectors as queries. For the Glove dataset, it was infeasible to calculate even a single matrix-vector product, even using fast Numba accelerated code. We thus estimated the full query time by calculating the time on a subset of $5 \cdot 10^4$ points of the Glove dataset and extrapolating to the full dataset by multiplying the query time by $(n/(5 \cdot 10^4))^2$ where $n$ is the total number of points. We see that in all cases, our algorithm outperforms the naive algorithm in both preprocessing time and query time and the gains become increasingly substantial as the dataset size increases, as predicted by our theoretical results.

\paragraph{Acknowledgements.}
This research was supported by the NSF TRIPODS program
(award DMS-2022448), Simons Investigator Award, MIT-IBM Watson collaboration, GIST-
MIT Research Collaboration grant, and NSF
Graduate Research Fellowship under Grant No. 1745302.

\bibliographystyle{alpha}
\bibliography{main}



\appendix
\newpage

\section{Omitted Upper Bounds for Faster Matrix-Vector Queries}

We now consider the case of $\ell_p^p$ for $p=2$. Generalizing the results of $p=1$ and $p=2$ allows us to handle general $\ell_p^p$ functions.

\begin{algorithm}[H]
\caption{\label{alg:query_2}matrix-vector Query for $p=2$}
\begin{algorithmic}[1]
\State \textbf{Input:} Dataset $X \subset \R^d$
\State \textbf{Output:} $z = Ay$
\Procedure{Query}{y}
\State $v \gets \sum_{i=j}^n y_j x_j$
\State $S_1 \gets \sum_{i=j}^n y_j^2$
\State $S_2 \gets \sum_{i=j}^n y_j^2 \|x\|_2^2$
\State $z \gets 0^n$
\For{$k \in [n]$}
\State $z(k) \gets S_1 \|x_k\|_2^2 + S_2 - 2 \langle x_k , v \rangle$
\EndFor
\EndProcedure
\end{algorithmic}
\end{algorithm}

\begin{theorem}\label{thm:l2_squared_upper_bound}
We can compute $Ay$ in $O(nd)$ query time.
\end{theorem}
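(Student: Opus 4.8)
The plan is to verify correctness of Algorithm~\ref{alg:query_2} via the standard decomposition of the squared Euclidean distance into a sum of two rank-one terms and a Gram-matrix term, and then to bound the running time by a term-by-term inspection of the algorithm. Concretely, I would start from the identity $\|x_i-x_j\|_2^2 = \|x_i\|_2^2 + \|x_j\|_2^2 - 2\langle x_i,x_j\rangle$, which holds over $\R^d$, and plug it into the $k$-th coordinate of $Ay$:
\[
(Ay)(k) = \sum_{j=1}^n y_j\|x_k-x_j\|_2^2 = \|x_k\|_2^2\sum_{j=1}^n y_j + \sum_{j=1}^n y_j\|x_j\|_2^2 - 2\Big\langle x_k,\ \sum_{j=1}^n y_j x_j\Big\rangle.
\]
This exhibits $(Ay)(k)$ as $S_1\|x_k\|_2^2 + S_2 - 2\langle x_k,v\rangle$ with $S_1 := \sum_j y_j$, $S_2 := \sum_j y_j\|x_j\|_2^2$, and $v := \sum_j y_j x_j$, matching the quantities formed in the preamble of Algorithm~\ref{alg:query_2} (modulo the evident typographical slip in the pseudocode writing $y_j^2$ in place of $y_j$ in lines 5--6, which I would correct in passing).

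Next I would account for the running time. Computing the $n$ norms $\|x_j\|_2^2$ takes $O(nd)$; forming $v=\sum_j y_j x_j$ is a weighted sum of $n$ vectors in $\R^d$, again $O(nd)$; and $S_1,S_2$ are then scalar sums costing $O(n)$. In the main loop, each coordinate $z(k)$ is obtained from one inner product $\langle x_k,v\rangle$ plus $O(1)$ additional arithmetic, i.e.\ $O(d)$ per $k$ and $O(nd)$ over all $k$. Summing these gives the claimed $O(nd)$ query time, with no preprocessing required (unlike the odd-$p$ cases, which need the sorted arrays of Algorithm~\ref{alg:preprocessing_1}).

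I do not expect a genuine obstacle here: the result is essentially the ``folklore'' observation, noted in the related-work discussion, that the $\ell_2^2$ distance matrix is a Gram matrix perturbed by two rank-one matrices, so a matrix–vector product reduces to the $O(nd)$ Gram-matrix product $A(A^\top y)$ plus two outer-product applications. The only points to be careful about are (i) getting the placement of the two rank-one corrections right — one is indexed by the query row $k$ (the $\|x_k\|_2^2$ term) and the other is a global scalar ($S_2$) — and (ii) emphasizing that the computation is \emph{exact}, with no approximation or dimensionality-reduction step. I would then remark that this same three-term template is exactly what the general even-$p$ argument (Theorem~\ref{thm:even_p}) generalizes, by expanding $(x_k(i)-x_j(i))^p$ and collecting monomials.
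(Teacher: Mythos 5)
Your proposal is correct and follows essentially the same route as the paper: expand $\|x_k-x_j\|_2^2=\|x_k\|_2^2+\|x_j\|_2^2-2\langle x_k,x_j\rangle$, swap the sums, and read off the $O(nd)$ cost of forming $v$, $S_1$, $S_2$, and the $n$ inner products. Your observation about the $y_j^2$ versus $y_j$ slip is also right — the same typographical error appears both in Algorithm~\ref{alg:query_2} and in the paper's displayed calculation, and your corrected identity with $\sum_j y_j$ and $\sum_j y_j\|x_j\|_2^2$ is the intended one.
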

\begin{proof}
The proof follows from the following calculation of the $k$th coordinate of $Ay$:
\[(Ay)(k) = \sum_{j=1}^n y_j \|x_k - x_j\|_2^2 = \|x_k\|_2^2 \left( \sum_{j=1}^n y_j^2 \right) + \sum_{j=1}^n y_j^2 \|x_j\|_2^2 - 2 \left \langle x_k, \sum_{j=1}^n y_j x_j \right \rangle. \qedhere  \]
\end{proof}

We can extend our results to general $\ell_p^p$ functions as well as a wide array of commonly used functions to measure (dis)similarity between vectors. For example, suppose the points $x_i$ represent a probability distribution over the domain $[n] := \{1,\ldots, n\}$. A widely used ``distance" function over distributions is the KL-divergence defined as   
\[f(x_i,x_j) = \text{D}_{\text{KL}}(x_i \,  \| \,  x_j) = \sum_{k \in [d]} x_i(k) \log(x_i(k)) - x_i(k) \log(x_j(k)) = -H(x_i) - \sum_{k \in [d]} x_i(k) \log(x_j(k)), \]
where $H$ is the entropy function. Our techniques extend to the KL-divergence as well.

\begin{algorithm}[H]
\caption{\label{alg:query_KL}matrix-vector Query for KL Divergence}
\begin{algorithmic}[1]
\State \textbf{Input:} Dataset $X \subset \R^d$
\State \textbf{Output:} $z = Ay$
\Procedure{Query}{ $y$}
\State $S_{i} \gets \sum_{j=1}^n y_j \log(x_j(i))$ for all $i \in [d]$
\State $H_i \gets H(x_i)$ for all $i \in [n]$
\State $Y \gets \sum_{j=1}^n y_j$
\State $z \gets 0^n$
\For{$k \in [n]$}
\State $z(k) \gets - H_k \cdot Y - \sum_{i=1}^d x_k(i) S_i$
\EndFor
\EndProcedure
\end{algorithmic}
\end{algorithm}

\begin{theorem}\label{thm:kl}
 We can compute $Ay$ in $O(nd)$ query time.
\end{theorem}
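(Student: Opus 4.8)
The plan is to verify that Algorithm~\ref{alg:query_KL} computes $Ay$ correctly and runs in $O(nd)$ time, by exposing the same ``linear after rearrangement'' structure used in Theorems~\ref{thm:p=1} and~\ref{thm:l2_squared_upper_bound}. First I would write out the $k$th coordinate of $Ay$ directly from the definition of the KL divergence (assuming, as is implicit, that all coordinates are positive so the logarithms are defined):
\[
(Ay)(k) = \sum_{j=1}^n y_j\, D_{\mathrm{KL}}(x_k \,\|\, x_j) = \sum_{j=1}^n y_j\Big( -H(x_k) - \sum_{i=1}^d x_k(i)\log x_j(i) \Big).
\]
The entropy term $H(x_k)$ does not depend on the summation index $j$, so it factors out of the sum, leaving $-H(x_k)\sum_{j=1}^n y_j$. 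For the remaining double sum I would swap the order of summation over $i$ and $j$, obtaining $-\sum_{i=1}^d x_k(i)\big(\sum_{j=1}^n y_j \log x_j(i)\big)$.

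Next I would introduce exactly the quantities computed in the preprocessing lines of the algorithm: the scalar $Y = \sum_{j=1}^n y_j$, the scalars $H_k = H(x_k)$ for $k \in [n]$, and the per-coordinate sums $S_i = \sum_{j=1}^n y_j \log x_j(i)$ for $i \in [d]$. With this notation the identity above becomes $(Ay)(k) = -H_k\,Y - \sum_{i=1}^d x_k(i) S_i$, which is precisely the assignment performed in the inner loop of Algorithm~\ref{alg:query_KL}; this establishes correctness coordinate by coordinate.

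For the running time, computing all the $S_i$ amounts to reading each of the $n$ points once while accumulating $d$ partial sums, i.e.\ $O(nd)$ time; computing $H_k = H(x_k)$ for all $k$ is likewise $O(nd)$; and $Y$ is $O(n)$. After this, each output coordinate $z(k)$ is an inner product of the $d$-dimensional vectors $x_k$ and $(S_1,\dots,S_d)$ plus one extra scalar term, costing $O(d)$, so all $n$ coordinates take $O(nd)$, and the total is $O(nd)$. There is essentially no hard step here: the only thing to notice is that, although the KL divergence is not symmetric, it still decomposes into a term depending only on the ``row'' point $x_k$ (the entropy), a rank-one contribution (the $H_k\,Y$ term), and a bilinear form in $x_k(i)$ and $\log x_j(i)$ whose contraction against $y$ can be precomputed coordinatewise — exactly the linear structure that Theorem~\ref{thm:meta_upperbound} identifies as the source of our speedups.
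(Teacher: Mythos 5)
Your proposal is correct and follows essentially the same route as the paper's proof: expand $(Ay)(k)$, pull out the entropy term, swap the order of summation, and recognize the precomputed quantities $Y$, $H_k$, and $S_i$ from Algorithm~\ref{alg:query_KL}. Your write-up is in fact slightly cleaner on indexing and more explicit about the $O(nd)$ accounting than the paper's version.
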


\begin{proof}
Note that computed all of $S_i$ and $H_i$ takes $O(nd)$ time. Now
\begin{align*}
   (Ay)(k) &= \sum_{j=1}^n y_j \text{D}_{\text{KL}}(x_k \,  \| \,  x_j) \\
   &= \sum_{j=1}^n -y_jH(x_k) - \sum_{j = 1}^n y_j \sum_{k=1}^d x_i(k) \log(x_j(k))  \\
   &= - H(x_k)\left( \sum_{j=1}^n y_j \right) - \sum_{k=1}^d \sum_{j=1}^ny_j x_i(k) \log(x_j(k)) \\
   &= - H_k \cdot Y - \sum_{k=1}^d x_i(k) S_k,
\end{align*}
as desired.
\end{proof}

\subsection{General  \texorpdfstring{$p$}{p}}
We now consider the case of a general non-negative even integer $p$. 
\begin{algorithm}[H]
\caption{\label{alg:query_p_even}matrix-vector Query for even $p$}
\begin{algorithmic}[1]
\State \textbf{Input:} Dataset $X \subset \R^d$
\State \textbf{Output:} $z = Ay$
\Procedure{Query}{y}
\State Compute all the values $S_{i,t}:= \sum_{j=1}^n x_j(i)^{p-t}(-1)^{p-t}$ for all $i \in [d]$ and $t \in \{0, \ldots, p\}$
\State $z \gets 0^n$
\For{$k \in [n]$}
\State $z(k) \gets \sum_{i=1}^d \sum_{t=1}^p x_k(i)^t S_{i,t}$
\EndFor
\EndProcedure
\end{algorithmic}
\end{algorithm}

\begin{theorem}\label{thm:even_p}
We can compute $Ay$ in $O(ndp)$ query time.
\end{theorem}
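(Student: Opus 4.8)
The plan is to exploit the single feature that makes even $p$ special: the absolute value disappears, since $|a|^p = a^p$ for even $p$, so that $\|x_i - x_j\|_p^p$ is an honest degree-$p$ polynomial in the coordinates of $x_i$ and $x_j$. This puts us in exactly the situation of the $p=1$ and $\ell_2^2$ arguments, where the point is to expand and regroup so that all dependence on the summation index is precomputed. Concretely, first I would fix a query coordinate $k \in [n]$ and write
\[
(Ay)(k) \;=\; \sum_{j=1}^n y_j \,\|x_k - x_j\|_p^p \;=\; \sum_{j=1}^n y_j \sum_{\ell=1}^d \bigl(x_k(\ell) - x_j(\ell)\bigr)^p,
\]
where the last step uses evenness of $p$ to replace $|x_k(\ell)-x_j(\ell)|^p$ by $(x_k(\ell)-x_j(\ell))^p$.

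Next I would apply the binomial theorem to each term, $\bigl(x_k(\ell)-x_j(\ell)\bigr)^p = \sum_{t=0}^p \binom{p}{t}(-1)^{p-t} x_k(\ell)^t\, x_j(\ell)^{p-t}$, and swap the order of summation so that everything depending on $j$ — hence on the dataset and the query $y$, but not on $k$ — is collected into coefficients
\[
S_{\ell, t} \;:=\; \binom{p}{t}(-1)^{p-t}\sum_{j=1}^n y_j\, x_j(\ell)^{p-t}, \qquad \ell \in [d],\ t \in \{0,\ldots,p\}.
\]
This yields the identity $(Ay)(k) = \sum_{\ell=1}^d \sum_{t=0}^p x_k(\ell)^t\, S_{\ell,t}$, which is precisely what Algorithm \ref{alg:query_p_even} evaluates. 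Correctness is then immediate from this algebraic identity; the only cosmetic point is that the $t=0$ term contributes the same constant $\sum_{j} y_j \|x_j\|_p^p$ to every coordinate, so it may be pulled out once or left inside the double sum without affecting the bound.

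For the running time I would argue as follows. There are $d(p+1) = O(dp)$ coefficients $S_{\ell,t}$; to build all of them I iterate over the $nd$ pairs $(j,\ell)$, and for each pair compute the $p+1$ powers $x_j(\ell)^0, x_j(\ell)^1, \ldots, x_j(\ell)^p$ in $O(p)$ time by repeated multiplication, adding the appropriate $y_j$-weighted contribution into each $S_{\ell,0},\ldots,S_{\ell,p}$; this costs $O(ndp)$ overall. Then for each $k \in [n]$, evaluating $\sum_{\ell=1}^d \sum_{t=0}^p x_k(\ell)^t S_{\ell,t}$ costs $O(p)$ per coordinate $\ell$ (again accumulating the powers incrementally), hence $O(dp)$ per $k$ and $O(ndp)$ in total. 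No preprocessing — and in particular no sorting — is needed, consistent with the table; that ingredient only enters the odd-$p$ case, where $|a|^p$ fails to be a polynomial in $a$. I do not expect a genuine obstacle here: the claim reduces to a one-line polynomial identity. The only things requiring care are (i) that evenness of $p$ is essential, and (ii) organizing the power computations so the total cost is $O(ndp)$ rather than, say, $O(ndp^2)$; both are routine.
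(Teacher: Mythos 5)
Your proposal is correct and follows essentially the same route as the paper: expand $(x_k(\ell)-x_j(\ell))^p$ by the binomial theorem (valid since $p$ is even so the absolute values vanish), swap the order of summation, and precompute the $O(dp)$ coefficients $S_{\ell,t}$ in $O(ndp)$ time before evaluating each output coordinate in $O(dp)$ time. If anything, your write-up is more careful than the paper's, which omits the binomial coefficients and the $y_j$ weights from its displayed definition of $S_{i,t}$ and starts the $t$-sum at $1$ rather than $0$.
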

\begin{proof}
Consider the following calculation of the $k$th coordinate of $Ay$:
\begin{align*}
    (Ay)(k) &= \sum_{j=1}^n y_j \|x_k - x_j\|_p^p \\
    &= \sum_{j=1}^n y_j \sum_{i=1}^d (x_k(i)-x_j(i))^p \\
    &= \sum_{j=1}^n \sum_{i=1}^d \sum_{t=1}^p x_k(i)^t x_j(i)^{p-t}(-1)^{p-t} \\
    &= \sum_{i=1}^d \sum_{t=1}^p x_k(i)^t \sum_{j=1}^n x_j(i)^{p-t}(-1)^{p-t} \\
    &= \sum_{i=1}^d \sum_{t=1}^p x_k(i)^t S_{i,t}.
\end{align*}
Note that computing $S_{i,t}$ for all $i$ and $t$ takes $O(ndp)$ time. Then returning the value of $(Ay)_k$ takes $O(dp)$ time resulting in the claimed runtime.
\end{proof}

The case of a general non-negative odd integer $p$ follows in a straightforward manner by combining the above techniques with those of the $p=1$ case of Theorem \ref{thm:l1_upper_bound} so we omit the proof.

\begin{theorem}\label{thm:odd_p}
For odd integer $p$, we can compute $Ay$ in $O(nd \log n)$ preprocessing time and $O(ndp)$ query time.
\end{theorem}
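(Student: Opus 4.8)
The final statement to prove is Theorem~\ref{thm:odd_p}: for odd integer $p$, we can compute $Ay$ where $A_{i,j} = \|x_i - x_j\|_p^p$ in $O(nd\log n)$ preprocessing time and $O(ndp)$ query time.

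The plan is to mimic the structure of the even-$p$ proof (Theorem~\ref{thm:even_p}) but handle the sign issue that arises when $p$ is odd. First I would expand coordinate-wise: $(Ay)(k) = \sum_{i=1}^d \sum_{j=1}^n y_j (x_k(i) - x_j(i))^p$. The obstruction compared to the even case is that $(x_k(i) - x_j(i))^p = |x_k(i) - x_j(i)|^p \cdot \operatorname{sign}(x_k(i) - x_j(i))^p$, and since $p$ is odd, $(x_k(i)-x_j(i))^p$ genuinely depends on the sign of $x_k(i)-x_j(i)$, i.e., on whether $x_k(i) \ge x_j(i)$ or not. So for each coordinate $i$ and each index $k$, the inner sum splits into two parts depending on the relative order of $x_k(i)$ among the $i$-th coordinates of all points — exactly the combinatorial structure exploited in the $p=1$ algorithm (Algorithm~\ref{alg:query_1}).

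The key steps, in order: (1) In preprocessing, sort the $i$-th coordinates of all points for each $i \in [d]$, as in Algorithm~\ref{alg:preprocessing_1}; this costs $O(nd\log n)$. (2) For a fixed $i$, write $\sum_j y_j (x_k(i) - x_j(i))^p = \sum_{j:\, x_j(i) \le x_k(i)} y_j (x_k(i)-x_j(i))^p - \sum_{j:\, x_j(i) > x_k(i)} y_j (x_j(i)-x_k(i))^p$ (using $(-1)^p = -1$). (3) Binomially expand $(x_k(i) - x_j(i))^p = \sum_{t=0}^p \binom{p}{t} x_k(i)^t (-x_j(i))^{p-t}$, so that the coefficient of each power $x_k(i)^t$ is a sum over $j$ of terms $y_j x_j(i)^{p-t}$, restricted to $j$ with $x_j(i)$ on one side of $x_k(i)$. (4) For each $i$ and each exponent $s \in \{0,\ldots,p\}$, precompute the prefix sums of $y_j x_j(i)^{s}$ in the sorted order induced by $T_i$ (there are $p+1$ such arrays per coordinate, $O(ndp)$ total work during the query). (5) Then for each $k$ and each $i$, locate $x_k(i)$'s rank $q$ in $T_i$ in $O(\log n)$ time (or $O(1)$ amortized), read off the ``left'' prefix sums and the ``right'' suffix sums for each of the $p+1$ exponents, combine them with the binomial coefficients and the appropriate powers of $x_k(i)$, and add the contribution to $z(k)$ — this is $O(p)$ arithmetic per $(k,i)$ pair, hence $O(ndp)$ overall. (6) Sum over $i$ to get $(Ay)(k)$; correctness follows by the same rearrangement as in Theorem~\ref{thm:p=1} combined with the binomial expansion of Theorem~\ref{thm:even_p}.

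The main obstacle — really the only subtlety — is bookkeeping the sign/ordering split correctly across the binomial expansion: one must be careful that the term $(x_k(i) - x_j(i))^p$ is split as $\operatorname{sign}(x_k(i)-x_j(i)) \cdot |x_k(i)-x_j(i)|^p$ and that the two order classes contribute with opposite overall sign, which interacts with the $(-x_j(i))^{p-t}$ factors in the expansion. Once the two cases ($x_j(i) \le x_k(i)$ and $x_j(i) > x_k(i)$) are each expanded into their own polynomial in $x_k(i)$ with precomputed coefficient arrays, everything reduces to the prefix-sum machinery already developed for $p=1$ and the polynomial-evaluation machinery already developed for even $p$, so no genuinely new idea is required; this is why the paper omits the full proof. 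I would simply verify the index arithmetic once and state the runtime.
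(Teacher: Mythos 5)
Your proposal is correct and takes essentially the approach the paper intends: the paper omits this proof, saying only that it combines the even-$p$ binomial-expansion technique with the $p=1$ sorting machinery, and that is exactly what you do (sort each coordinate in preprocessing, split the inner sum into the two order classes relative to $x_k(i)$, binomially expand, and maintain $p+1$ prefix-sum arrays of $y_j x_j(i)^s$ per coordinate, giving $O(p)$ work per $(k,i)$ pair). One small slip in your step (2): the left-hand side should be $\sum_j y_j\,|x_k(i)-x_j(i)|^p$ and the two order-class sums on the right should be combined with a $+$ (each class already contributes the correct nonnegative $p$-th power once written with the larger coordinate first); as written your identity is a true statement about the \emph{signed} sum $\sum_j y_j (x_k(i)-x_j(i))^p$, which is not the quantity $\|x_k-x_j\|_p^p$ requires, but this does not affect the algorithm or the claimed runtime.
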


\subsection{Other Distance Functions}
In this section we initialize matrix-vector queries for a wide variety of ``distance" functions.

\paragraph{`Mixed' $\ell_\infty$.}
We consider the case of a `permutation invariant' version of the $\ell_{\infty}$ norm defined as follows:
\[f(x,y) = \max_{i \in [d], j \in [d]} |x_i - y_j|. \]
$f$ is not a norm but we will refer to it as `mixed' $\ell_\infty$.

\begin{algorithm}[H]
\caption{\label{alg:preprocessing_mixed_infinity}Preprocessing}
\begin{algorithmic}[1]
\State \textbf{Input:} Dataset $X \subset \R^d$
\Procedure{Preprocessing}{}
\For{$j \in [n]$}
\State $\min_j, \max_j \gets $ minimum and maximum values of the entries of $x_j$, respectively.
\EndFor
\EndProcedure
\end{algorithmic}
\end{algorithm}

\begin{algorithm}[H]
\caption{\label{alg:query_mixed}matrix-vector Query for mixed $\ell_\infty$}
\begin{algorithmic}[1]
\State \textbf{Input:} Dataset $X \subset \R^d$
\State \textbf{Output:} $z = Ay$
\Procedure{Query}{$\{\min_j, \max_j\}_{j =1}^n, y$}
\State $z \gets 0^n$
\For{$k \in [n]$}
\State $z(k) \gets \sum_{j=1}^n y_j \cdot \max\left( |\min_k - \min_j|, |\min_k - \max_j|, |\max_k - \min_j|, |\max_k - \max_j|\right) $
\EndFor
\EndProcedure
\end{algorithmic}
\end{algorithm}

\begin{theorem}\label{thm:mixed}
We can compute $Ay$ in $O(nd)$ preprocessing time and $O(n^2)$ query time.
\end{theorem}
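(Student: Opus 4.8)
The plan is to observe that the ``mixed'' $\ell_\infty$ distance $f(x,y) = \max_{i,j} |x_i - y_j|$ depends on $x$ and $y$ only through the extreme coordinate values of each. Concretely, for a fixed point $x$ with smallest entry $\min(x)$ and largest entry $\max(x)$, and similarly for $y$, the quantity $\max_{i,j}|x_i - y_j|$ is attained at one of the four corner combinations: it equals $\max\bigl(|\min(x)-\min(y)|,\, |\min(x)-\max(y)|,\, |\max(x)-\min(y)|,\, |\max(x)-\max(y)|\bigr)$. This is because $|x_i - y_j|$ as a function of $(x_i,y_j)$ over the box $[\min(x),\max(x)] \times [\min(y),\max(y)]$ is convex, hence maximized at a vertex of the box.

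First I would establish this four-corner identity rigorously: fix any $i^*, j^*$ achieving the max; since $|x_{i^*} - y_{j^*}|$ is either $x_{i^*} - y_{j^*}$ or $y_{j^*} - x_{i^*}$, in the former case it is at most $\max(x) - \min(y) \le \max(|\max(x)-\min(y)|, \ldots)$ and in the latter at most $\max(y) - \min(x)$; conversely each of the four corner expressions is of the form $|x_i - y_j|$ for actual coordinates $x_i \in \{\min(x),\max(x)\}$, $y_j \in \{\min(y),\max(y)\}$, so it is a lower bound on the max. Hence equality. This justifies line 6 of Algorithm~\ref{alg:query_mixed}: it computes exactly $z(k) = \sum_{j=1}^n y_j f(x_k, x_j)$, which is $(Ay)(k)$.

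Next I would account for the running time. The preprocessing in Algorithm~\ref{alg:preprocessing_mixed_infinity} scans each of the $n$ points once, computing $\min_j$ and $\max_j$ in $O(d)$ time per point, for $O(nd)$ total. (No sorting is needed, so the $O(nd\log n)$ figure in Table~\ref{tab:results} is a loose upper bound; I would simply state $O(nd)$ as in the theorem.) For the query: for each $k \in [n]$ the inner sum over $j \in [n]$ takes $O(1)$ work per term given the precomputed extremes, so $O(n)$ per coordinate and $O(n^2)$ overall. This matches the claimed bounds.

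The only real subtlety — and hence the ``main obstacle,'' though it is a mild one — is the convexity/vertex argument for the four-corner identity; everything else is bookkeeping. I would present it as a short self-contained lemma or fold it into the proof as the two-inequality argument sketched above. One should also remark that, unlike the $\ell_\infty$ metric for which Theorem~\ref{thm:l_infinity_lower_bound} rules out subquadratic time, here the $O(n^2)$ query bound is essentially forced only by the output size and the need to touch all $n$ query entries; the point is that it avoids any dependence on $d$ beyond preprocessing, which is the content worth highlighting.
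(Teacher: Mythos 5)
Your proposal is correct and follows essentially the same route as the paper: the four-corner identity (maximum of $|x_i - y_j|$ is attained at the extreme coordinates of $x$ and $y$) is exactly the paper's correctness argument, and the runtime accounting is identical. Your vertex/two-inequality justification is a slightly more careful write-up of the same observation, so no substantive difference.
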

\begin{proof}
The preprocessing time holds because we calculate the maximum and minimum of a list of $d$ numbers a total of $n$ times. For the query time, note that each $z(k)$ takes $O(n)$ time to compute since we do a $O(1)$ operation is each index of the sum in Line 6 of Algorithm \ref{alg:query_mixed}.

To prove correctness, note that for any two vectors $x, y \in \R^d$, the maximum value of  $|x_i - y_j|$ is attained if $x_i$ and $y_j$ are among the minimum / maximum values of the coordinates of $x$ and $y$ respectively. To see this, fix a value of $x_i$. We can always increase $|x_i - y_j|$ by setting $y_j$ to be the maximum or minimum over all $j$.
\end{proof}

\paragraph{Mahalanobis Distance Squared.}
We consider the function 
\[f(x,y) = x^TMy \]
for some $d \times d$ matrix $M$. This is the squared version of the well-known Mahalanobis distance.

\begin{algorithm}[H]
\caption{\label{alg:preprocessing_mahalanobis}Preprocessing}
\begin{algorithmic}[1]
\State \textbf{Input:} Dataset $X \subset \R^d$
\Procedure{Preprocessing}{}
\State $S \gets d \times n$ matrix where the $j$th column is $Mx_j$ for all $j \in [n]$.
\EndProcedure
\end{algorithmic}
\end{algorithm}

\begin{algorithm}[H]
\caption{\label{alg:query_mahalanobis}matrix-vector Query for Mahalanobis distance squared}
\begin{algorithmic}[1]
\State \textbf{Input:} Dataset $X \subset \R^d$
\State \textbf{Output:} $z = Ay$
\Procedure{Query}{ $S, y$}
\State $v \gets Sy$
\State $z \gets 0^n$
\For{$k \in [n]$}
\State $z(k) \gets \langle x_k, v \rangle$
\EndFor
\EndProcedure
\end{algorithmic}
\end{algorithm}

\begin{theorem}\label{thm:maha}
We can compute $Ay$ in $O(nd^2)$ preprocessing time and $O(nd)$ query time.
\end{theorem}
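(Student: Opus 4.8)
The plan is to verify that Algorithm \ref{alg:query_mahalanobis} (together with its preprocessing step, Algorithm \ref{alg:preprocessing_mahalanobis}) correctly computes $Ay$ where $A_{i,j} = x_i^T M x_j$, and then bound its running time. The key algebraic observation is that the matrix $A$ factors: if we let $S$ be the $d \times n$ matrix whose $j$th column is $M x_j$, and let $P$ be the $d \times n$ matrix whose $k$th column is $x_k$, then $A = P^T S$. Consequently $Ay = P^T(Sy)$, so computing $Ay$ reduces to one matrix-vector product $v := Sy$ followed by $n$ inner products $\langle x_k, v\rangle$, which is exactly what the query procedure does.

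The steps, in order, are as follows. First I would write out the $k$th coordinate of $Ay$ directly:
\[
(Ay)(k) = \sum_{j=1}^n y_j\, x_k^T M x_j = x_k^T M \left( \sum_{j=1}^n y_j x_j \right),
\]
using bilinearity of $(x,y) \mapsto x^T M y$ in the second argument. Next, observe that $\sum_{j=1}^n y_j x_j$ is a single vector in $\R^d$, so $M\left(\sum_j y_j x_j\right) = \sum_j y_j (M x_j) = S y = v$, matching line 4 of Algorithm \ref{alg:query_mahalanobis}; hence $(Ay)(k) = x_k^T v = \langle x_k, v\rangle$, matching line 7. This establishes correctness. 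For the running time: the preprocessing computes $M x_j$ for each of the $n$ points, and each such product costs $O(d^2)$, giving $O(nd^2)$ preprocessing time (this $S$ can be reused across all queries). For a query, forming $v = Sy$ is a $d \times n$ times $n \times 1$ product costing $O(nd)$, and then the $n$ inner products $\langle x_k, v\rangle$ each cost $O(d)$, for a total of $O(nd)$ query time, as claimed.

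There is no real obstacle here — the only thing to be slightly careful about is not conflating the cost of the one-time preprocessing with the per-query cost, and noting that no structural assumption (symmetry, positive-definiteness) on $M$ is needed, since we only use bilinearity. If one additionally wanted to handle $f(x,y) = (x-y)^T M (x-y)$, the same decomposition $\|x-y\|_M^2 = x^T M x + y^T M y - 2 x^T M y$ used for $\ell_2^2$ in Theorem \ref{thm:l2_squared_upper_bound} applies verbatim, but that is outside the present statement.
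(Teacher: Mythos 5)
Your proposal is correct and follows essentially the same argument as the paper: rewrite $(Ay)(k) = \sum_j y_j\, x_k^T M x_j = \langle x_k, Sy\rangle$ using the precomputed matrix $S$ of the vectors $Mx_j$, with $O(nd^2)$ one-time preprocessing, $O(nd)$ to form $v = Sy$, and $O(d)$ per inner product at query time. The only difference is presentational (your explicit factorization $A = P^T S$ and the remark that no assumption on $M$ is needed), which the paper leaves implicit.
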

\begin{proof}
Note that the $k$th coordinate of $Ay$ is given by
\[(Ay)(k) = \sum_{j=1}^n y_j x_k^T M x_j = \left \langle  x_k, \sum_{j=1}^n y_j M x_k \right \rangle = \langle x_k, Sy \rangle \]
which proves correctness. It takes $O(nd^2)$ time to compute $S$, $O(nd)$ time to compute $Sy$, and then $O(d)$ time to compute the $k$th coordinate of $Ay$ for all $k \in [n]$.
\end{proof}

\paragraph{Polynomial Kernels.}\label{sec:poly}
We now consider polynomial kernels of the form 
\[f(x,y) = \langle x, y \rangle^p. \]

\begin{theorem}\label{thm:poly_kernel}
We can compute $Ay$ in $O(nd^p)$ query time.
\end{theorem}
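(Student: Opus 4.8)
The plan is to expand $\langle x_k, y \rangle^p$... wait, more precisely, to write the $k$-th coordinate of $Ay$ as a polynomial in the $d$ coordinates of $x_k$, exactly as was done in the approximate $\ell_\infty$ section. We have
\[
(Ay)(k) = \sum_{j=1}^n y_j \langle x_k, x_j \rangle^p = \sum_{j=1}^n y_j \left( \sum_{i=1}^d x_k(i) x_j(i) \right)^p.
\]
By the multinomial theorem, $\left( \sum_{i=1}^d x_k(i) x_j(i) \right)^p = \sum_{\alpha} \binom{p}{\alpha} \prod_{i=1}^d (x_k(i) x_j(i))^{\alpha_i}$, where the sum ranges over all multi-indices $\alpha \in \mathbb{Z}_{\ge 0}^d$ with $|\alpha| = p$. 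Substituting and swapping the order of summation,
\[
(Ay)(k) = \sum_{\alpha : |\alpha| = p} \binom{p}{\alpha} \left( \prod_{i=1}^d x_k(i)^{\alpha_i} \right) \underbrace{\left( \sum_{j=1}^n y_j \prod_{i=1}^d x_j(i)^{\alpha_i} \right)}_{=: S_\alpha}.
\]
The key point is that the quantities $S_\alpha$ do not depend on $k$, so they can be computed once and reused for all $n$ output coordinates.

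The steps, in order, are as follows. First, enumerate the multi-indices $\alpha$ with $|\alpha| = p$; there are $\binom{d+p-1}{p} = O(d^p)$ of them. Second, for each such $\alpha$, compute $S_\alpha = \sum_{j=1}^n y_j \prod_{i=1}^d x_j(i)^{\alpha_i}$; naively each monomial $\prod_i x_j(i)^{\alpha_i}$ costs $O(d)$ (or $O(p)$ if one only multiplies the at most $p$ nonzero factors) per point $j$, for a total of $O(nd^p)$ (or $O(n p d^p)$, but one can be more careful — see below) to compute all $S_\alpha$. Third, for each $k \in [n]$, evaluate $(Ay)(k) = \sum_{\alpha} \binom{p}{\alpha} (\prod_i x_k(i)^{\alpha_i}) S_\alpha$, which again costs $O(d^p)$ per $k$ once the monomial values are available, for a total of $O(nd^p)$. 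Summing the three contributions gives the claimed $O(nd^p)$ query time. To get the bound clean rather than $O(npd^p)$, one observes that the monomials of degree $p$ in $d$ variables can all be evaluated at a fixed point $x_j$ in total time $O(d^p)$ by dynamic programming over partial degrees (building degree-$t$ monomials from degree-$(t-1)$ ones), so that computing all $S_\alpha$ over all $j$ costs $O(nd^p)$, and likewise for step three.

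The main obstacle — really the only thing requiring care — is the bookkeeping of the multinomial expansion and confirming that the number of monomials is $O(d^p)$ and that all of them can be evaluated at a single point within the same $O(d^p)$ budget, so that no stray factor of $p$ sneaks in; this is exactly the same mechanism already invoked in the approximate $\ell_\infty$ upper bounds, where a degree-$t$ polynomial in $d$ variables was said to have $O((2d)^t)$ monomials and to be evaluable in that time, so we can appeal to the same reasoning. There is no numerical or approximation issue here since the identity is exact, and correctness is immediate from the multinomial theorem and the interchange of finite sums.
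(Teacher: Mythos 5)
Your proposal is correct and takes essentially the same route as the paper: both expand $(Ay)(k)=\sum_j y_j\langle x_k,x_j\rangle^p$ as a degree-$p$ polynomial in the coordinates of $x_k$ with $O(d^p)$ monomials, precompute the $k$-independent coefficients (your $S_\alpha$), and then evaluate at each data point, i.e.\ ``linearize'' the kernel. Your accounting is in fact a bit more careful than the paper's sketch, since the dynamic-programming evaluation of all degree-$p$ monomials in $O(d^p)$ time cleanly avoids any stray factor of $p$ or $d$ in the $O(nd^p)$ bound.
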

\begin{proof}[Proof Sketch]
Consider the following expression
\[ g(z) = \sum_{j=1}^n y_j \langle z, x_j \rangle^p \]
as a polynomial $g: \R^d \rightarrow \R$ in the $d$ coordinates of $z$. By rearranging, the above sum can be written as a sum over $O(d^p)$ terms, corresponding to each monomial $z_1^{a_1} \ldots z_d^{a_d}$ where $a_1 + \ldots + a_d = p$. The coefficient of each term takes $O(nd)$ time to compute given $x_i$ and $y$. Once computed, we can evaluate the polynomial at $z = x_j$ for all $j$ which form the coordinates of $Ay$. Again, this can be viewed as ``linearizing" the kernel given by $\langle x,y\rangle^p$.
\end{proof}

We note that a proof similar to that of Theorem \ref{thm:poly_kernel} was given in Section 5.3 of \cite{AlmanCS020} by expanding the relevant quantity as a polynomial; see Section \ref{sec:related_works} for detailed comparison between \cite{AlmanCS020} and our work.

\subsection{Distances for Distributions}
We now consider the case that each $x_i$ specifies a discrete distribution over a domain of $d$ elements. Matrices $A$ where $A_{i,j}$ is a function computing distances between distributions $x_i$ and $x_j$ have recently been studied in machine learning. 

We consider how to construct matrix-vector queries for such matrices for a range of widely used distance measures on distributions. First note that a result on the TV distance follows immediately from our $\ell_1$ result.

\begin{theorem}\label{thm:tv}
Suppose $A_{i,j} = \textup{TV}(x_i, x_j)$. We can compute $Ay$ in $O(nd \log n)$ preprocessing time and $O(nd)$ query time.
\end{theorem}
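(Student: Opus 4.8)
The plan is to reduce the statement about total variation distance directly to the already-established $\ell_1$ matrix-vector query result of Theorem \ref{thm:l1_upper_bound}. First I would recall the elementary identity relating the two quantities: for discrete distributions $x_i, x_j$ over a domain of size $d$, we have $\textup{TV}(x_i, x_j) = \frac12 \|x_i - x_j\|_1 = \frac12 \sum_{k=1}^d |x_i(k) - x_j(k)|$. Thus the TV distance matrix $A$ satisfies $A = \frac12 A'$ where $A'$ is exactly the $\ell_1$ distance matrix on the same point set $X = \{x_1, \ldots, x_n\}$ (viewing each distribution as a vector in $\R^d$).

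Given this, the algorithm is immediate: run Algorithm \ref{alg:preprocessing_1} on $X$ as preprocessing, and on a query $y$ run Algorithm \ref{alg:query_1} to obtain $A'y$ exactly, then return $\frac12 (A'y)$. By Theorem \ref{thm:p=1} this computes $A'y$ exactly, so scaling by $\frac12$ yields $Ay$ exactly. The preprocessing cost is $O(nd\log n)$ (sorting each of the $d$ coordinates across the $n$ points) and the query cost is $O(nd)$ plus the $O(n)$ cost of the final scaling, which is absorbed into $O(nd)$. This matches the claimed bounds.

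The only mild subtlety — and the closest thing to an obstacle — is making sure the identity $\textup{TV} = \frac12 \ell_1$ is stated in the normalization the paper uses for TV distance, and noting that nothing about the $\ell_1$ algorithm requires the input vectors to be probability distributions (nonnegativity and summing to one are never used in Algorithm \ref{alg:query_1} or its analysis), so the reduction is valid verbatim. I would therefore write the proof as: invoke the identity, observe $A = \frac12 A'$, apply Theorem \ref{thm:l1_upper_bound} to $A'$, and scale. No new calculation beyond the one-line identity is needed.

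\begin{proof}
For discrete distributions $x_i, x_j$ over a domain of $d$ elements, $\textup{TV}(x_i, x_j) = \frac12 \|x_i - x_j\|_1$. Hence if $A'$ denotes the $\ell_1$ distance matrix of the point set $\{x_1, \ldots, x_n\} \subset \R^d$, we have $A = \frac12 A'$. We preprocess and answer queries exactly as in Theorem \ref{thm:l1_upper_bound} applied to $A'$: on input $y$ we compute $A'y$ exactly in $O(nd)$ time after $O(nd\log n)$ preprocessing, and then output $\frac12(A'y) = Ay$. The extra scaling costs $O(n)$, which is dominated by $O(nd)$.
\end{proof}
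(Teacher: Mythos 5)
Your proposal is correct and is exactly the paper's argument: the paper states that the TV result ``follows immediately from our $\ell_1$ result,'' which is precisely your reduction via $\textup{TV}(x_i,x_j)=\tfrac12\|x_i-x_j\|_1$ followed by an application of Theorem \ref{thm:l1_upper_bound} and a trivial $O(n)$ rescaling. Your added remark that the $\ell_1$ algorithm never uses the probability-vector structure is a nice sanity check, but nothing more is needed.
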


We now consider some other distance functions on distributions. 

\paragraph{Divergences.}

Through a similar calculation as the KL divergence case, we can also achieve $O(nd)$ query times if $f$ is the Jensen-Shannon divergence, defined as
\[ f(x,y) = \frac{\text{D}_{\text{KL}}(x \,  \| \,  y) + \text{D}_{\text{KL}}(y \,  \| \,  x)}2,\]
as well as the cross entropy function.

\begin{theorem}\label{thm:all_kl}
 Let $f$ be the Jensen-Shannon divergence or cross entropy function. Then $Ay$ can be computed in $O(nd)$ time.
\end{theorem}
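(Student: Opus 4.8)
The plan is to handle the two functions separately, in each case rewriting the $\ell$-th coordinate $(Ay)(\ell)=\sum_{j=1}^n y_j f(x_\ell,x_j)$ as a short sum of terms that are either scalars independent of $\ell$, ``rank-one'' quantities depending only on $x_\ell$, or inner products of $x_\ell$ (or a fixed coordinate-wise transform of it) with a $d$-vector precomputed once. This is precisely the pattern behind the KL computation of Theorem~\ref{thm:kl} and the $\ell_2^2$ computation of Theorem~\ref{thm:l2_squared_upper_bound}; the only real work is exhibiting the right decomposition.

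For the cross entropy $H(x,y)=-\sum_{k=1}^d x(k)\log y(k)$, I would just swap the order of the $j$- and $k$-summations to obtain $(Ay)(\ell)=-\sum_{k=1}^d x_\ell(k)\,S_k$ where $S_k:=\sum_{j=1}^n y_j\log x_j(k)$ --- the same vector already computed in Algorithm~\ref{alg:query_KL}. All the $S_k$ take $O(nd)$ time, and then each coordinate is an $O(d)$ evaluation, so this is just the KL algorithm with the entropy term dropped.

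For the symmetric divergence $f(x,y)=\frac{1}{2}\left(D_{\mathrm{KL}}(x\,\|\,y)+D_{\mathrm{KL}}(y\,\|\,x)\right)$ --- the function displayed just above, i.e.\ the ``Symmetric Divergence'' row of Table~\ref{tab:results} --- the first step is the algebraic simplification $f(x,y)=\frac{1}{2}\sum_{k=1}^d\big(x(k)-y(k)\big)\big(\log x(k)-\log y(k)\big)$, obtained by expanding both divergences and cancelling the $x(k)\log x(k)$ and $y(k)\log y(k)$ pieces appropriately. Setting $y=x_j$, scaling by $y_j$, summing over $j$, and interchanging the $j$- and $k$-sums then yields four groups: $\frac12 Y G_\ell$ with $Y:=\sum_j y_j$ and $G_\ell:=\sum_k x_\ell(k)\log x_\ell(k)$; the bilinear term $-\frac12\sum_k x_\ell(k)S_k$; a second bilinear term $-\frac12\sum_k \log x_\ell(k)\,V_k$ with $V_k:=\sum_j y_j x_j(k)$; and the $\ell$-independent scalar $\frac12 W$ with $W:=\sum_j y_j G_j$. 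The scalars $Y,W$, the $d$-vectors $(S_k)_k$ and $(V_k)_k$, and the $n$ values $G_1,\dots,G_n$ are all obtained in $O(nd)$ preprocessing, after which each $(Ay)(\ell)$ costs $O(d)$ (it reuses the already-stored $G_\ell$), for $O(nd)$ overall.

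The computation is essentially routine once the decompositions are written down, so the only obstacle is bookkeeping: checking the cancellation identity for $f$ in the symmetric case, and confirming that each auxiliary quantity ($\sum_j y_j\log x_j(k)$, $\sum_j y_j x_j(k)$, $\sum_k x_j(k)\log x_j(k)$) is a single $O(nd)$ pass and that no surviving term secretly ranges over pairs $(\ell,j)$. I would also flag that the genuine Jensen--Shannon divergence built from the mixture $(x+y)/2$ does \emph{not} separate in this way --- it produces terms like $x(k)\log(x(k)+y(k))$ --- so the statement being proved is for the symmetrized-KL function as defined in the text.
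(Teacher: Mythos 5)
Your proposal is correct and follows exactly the route the paper intends: the paper gives no explicit proof of Theorem~\ref{thm:all_kl}, saying only that it follows ``through a similar calculation as the KL divergence case,'' and your decompositions (the $S_k$ vector for cross entropy, and the expansion of the symmetrized KL into the terms $Y G_\ell$, $\sum_k x_\ell(k)S_k$, $\sum_k \log x_\ell(k)\,V_k$, and $W$) are precisely that calculation spelled out. Your remark that the paper's ``Jensen-Shannon divergence'' is really the symmetrized KL (not the mixture-based JS divergence, which would not separate) is also consistent with the paper's own definition and Table~\ref{tab:results}.
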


Through a similar calculation as done in Section \ref{sec:poly} (for the case of $p=1$), we can also perform matrix-vector multiplication queries in the case that 
\[f(x,y) = \sum_{i = 1}^d \sqrt{x(i) y(i)}. \]
This is the squared Hellinger distance.

\begin{theorem}\label{thm:hellinger}
 Let $f$ be the squared Hellinger distance. Then $Ay$ can be computed in $O(nd)$ time.
\end{theorem}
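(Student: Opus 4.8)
The plan is to observe that, after a single coordinate-wise transformation, the squared Hellinger matrix is just a Gram matrix, and then apply the folklore $O(nd)$ matrix-vector routine for Gram matrices (exactly the $p=1$ special case of the polynomial-kernel argument in Section \ref{sec:poly}). Concretely, for each point $x_i$ define the transformed vector $u_i \in \R^d$ by $u_i(\ell) = \sqrt{x_i(\ell)}$, which is well-defined since the coordinates of a distribution are non-negative. Then
\[
A_{i,j} = \sum_{\ell=1}^d \sqrt{x_i(\ell)\, x_j(\ell)} = \sum_{\ell=1}^d u_i(\ell)\, u_j(\ell) = \langle u_i, u_j \rangle,
\]
so $A = U U^T$ where $U \in \R^{n \times d}$ is the matrix whose $i$th row is $u_i^T$.

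First I would compute all the vectors $u_1, \ldots, u_n$ in the preprocessing/setup phase; this costs $O(nd)$ time, one square root per coordinate. Given a query $y \in \R^n$, I would then compute $v := U^T y = \sum_{j=1}^n y_j u_j \in \R^d$ in $O(nd)$ time, and finally set $(Ay)(k) = \langle x_k^{1/2}, v \rangle = \langle u_k, v\rangle$ for each $k \in [n]$, which is another $O(nd)$ time in total. Correctness is immediate from $Ay = U(U^T y) = Uv$. Equivalently, in the language of Section \ref{sec:poly}: the ``kernel'' $\langle u_i, u_j \rangle$ is linear, so linearizing it as a polynomial in the $d$ coordinates of $u_k$ yields only $d$ monomials whose coefficients are precisely the entries of $v$, and each of the $n$ evaluations then takes $O(d)$ time.

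There is essentially no hard step; the only points to check are that $\sqrt{\cdot}$ is applied only to non-negative arguments (guaranteed for distributions) and that, unlike the $\ell_1$ case, no sorting-based preprocessing is required, because the summand $\sqrt{x(i)y(i)}$ factorizes without any absolute-value nonlinearity. If instead one wants the ``true'' squared Hellinger distance $\tfrac12\sum_\ell(\sqrt{x_i(\ell)}-\sqrt{x_j(\ell)})^2 = 1 - \langle u_i, u_j\rangle$ for probability vectors, the same decomposition applies after subtracting the rank-one all-ones correction, handled exactly as the $\|x\|_2^2 + \|y\|_2^2$ terms are in Theorem \ref{thm:l2_squared_upper_bound}; this does not affect the $O(nd)$ bound.
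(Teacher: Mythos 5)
Your proposal is correct and is essentially the paper's argument: the paper proves this theorem by invoking the $p=1$ case of the polynomial-kernel technique of Section \ref{sec:poly}, i.e., applying the coordinate-wise square-root map so that $A_{i,j}=\langle u_i,u_j\rangle$ becomes a Gram matrix and then computing $Ay = U(U^Ty)$ in $O(nd)$ time, exactly as you do. Your closing remark about handling the $1-\langle u_i,u_j\rangle$ normalization via a rank-one correction (as in Theorem \ref{thm:l2_squared_upper_bound}) is a correct and harmless addition.
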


\subsection{Approximate Matrix-Vector Query for \texorpdfstring{$\ell_2$}{L-2}}
While our techniques do not extend to the $\ell_2$ case for \emph{exact} matrix-vector queries, we can nonetheless instantiate \emph{approximate} matrix-vector queries for the $\ell_2$ function.
We first recall the following well known embedding result.

\begin{theorem}\label{thm:l2embed}
Let $\eps \in (0,1)$ and define $T: \R^d \rightarrow \R^k$ by 
\[T(x)_i = \frac{1}{\beta k} \sum_{j=1}^d Z_{ij}x_j, \quad i= 1, \ldots, k \]
where $\beta = \sqrt{2/\pi}$. Then for every vector $x \in \R^d$, we have
\[\Pr[(1-\eps)\|x\|_2 \le \|T(x)\|_1 \le (1+\eps) \|x\|_2] \ge 1-e^{c \eps^2 k}, \]
where $c > 0$ is a constant.
\end{theorem}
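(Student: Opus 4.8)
The plan is to reduce the statement to a one-dimensional concentration fact about i.i.d.\ sums of half-normal random variables. Both $\|x\|_2$ and $\|T(x)\|_1$ are positively homogeneous of degree one in $x$, so by rescaling it suffices to treat a fixed unit vector $x$ with $\|x\|_2 = 1$. Assuming, as is standard for this embedding, that the entries $Z_{ij}$ are i.i.d.\ standard Gaussians, the $i$th coordinate $Y_i := \sum_{j=1}^d Z_{ij} x_j$ is a linear combination of independent $N(0,1)$ variables with coefficient vector $x$; hence $Y_i \sim N(0,\|x\|_2^2) = N(0,1)$, and the $Y_i$ are mutually independent because the rows of $Z$ are independent. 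Consequently $\|T(x)\|_1 = \frac{1}{\beta k}\sum_{i=1}^k |Y_i|$ is, up to the normalizing factor, an i.i.d.\ sum of $k$ half-normal random variables.

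Next I would calibrate the mean: $\mathbb{E}|Y_i| = \sqrt{2/\pi} = \beta$, so $\mathbb{E}\,\|T(x)\|_1 = \frac{1}{\beta k}\cdot k\beta = 1 = \|x\|_2$, i.e.\ the estimator is unbiased. It then remains to show that $\frac{1}{\beta k}\sum_{i=1}^k |Y_i|$ lies in the multiplicative window $(1\pm\eps)$ around its mean except with probability at most $e^{-c\eps^2 k}$ (reading the displayed bound with the intended minus sign in the exponent). For this I would apply a Bernstein/Chernoff tail bound for sums of independent sub-exponential random variables: the half-normal variable $|Y_i|$ has a moment generating function finite on all of $\R$ and is sub-gaussian with an absolute constant parameter, so $|Y_i| - \beta$ is centered and sub-exponential with constant parameters. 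The two-regime Bernstein inequality gives $\Pr\big[\,\bigl|\sum_{i=1}^k (|Y_i| - \beta)\bigr| > t\,\big] \le 2\exp\!\big(-c'\min\{t^2/k,\,t\}\big)$; setting $t = \eps\beta k$ and noting that $\eps\beta \le 1$ forces $\min\{t^2/k,\,t\} = t^2/k = \eps^2\beta^2 k$, which produces the bound $2\exp(-c\eps^2 k)$, and absorbing the factor $2$ into the constant yields the claim.

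The only genuinely delicate point is this concentration step: since $|Y_i|$ is unbounded, Hoeffding's inequality does not apply directly, so one must go through the sub-exponential/Bernstein machinery — or, equivalently, verify a moment generating function estimate $\mathbb{E}\, e^{\lambda(|Y_i|-\beta)} \le e^{C\lambda^2}$ for $|\lambda|$ up to a constant and run a Chernoff argument — and one must be careful that in the relevant small-deviation regime $\eps \le 1$ the exponent scales as $\eps^2 k$ rather than $\eps k$. Everything else (the Gaussian rotational-invariance computation for $Y_i$, the identity $\mathbb{E}|N(0,1)| = \sqrt{2/\pi}$, and the homogeneity reduction) is routine. Since the result is classical, an acceptable alternative is simply to cite a standard reference for the Gaussian $\ell_2 \to \ell_1$ embedding; the sketch above is the self-contained route.
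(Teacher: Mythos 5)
Your proposal is correct, and in fact the paper offers no proof of this statement at all --- it simply recalls it as a ``well known embedding result,'' so your self-contained argument (homogeneity reduction, $Y_i \sim N(0,1)$ by rotational invariance, $\mathbb{E}|Y_i| = \sqrt{2/\pi} = \beta$ giving unbiasedness, then sub-exponential Bernstein concentration) is exactly the standard route one would cite. You are also right that the exponent in the displayed probability should read $e^{-c\eps^2 k}$; one minor simplification is that $|Y_i|$ is itself sub-gaussian, so the plain sub-gaussian Hoeffding bound already gives $2\exp(-c t^2/k)$ for all $t$ without needing the two-regime $\min$, though your version is equally valid in the regime $\eps \le 1$.
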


We can instantiate approximate matrix-vector queries for $f(x,y) = \|x-y\|_2$ via the following algorithm.

\begin{algorithm}[H]
\caption{\label{alg:preprocessing_l2}Preprocessing}
\begin{algorithmic}[1]
\State \textbf{Input:} Dataset $X \subset \R^d$
\Procedure{Preprocessing}{$T$}
\State $X' \gets TX$ where $T$ is the linear map from Theorem \ref{thm:l2embed}
\State Run Algorithm \ref{alg:preprocessing_1} on $X'$
\EndProcedure
\end{algorithmic}
\end{algorithm}

For queries, we just run Algorithm \ref{alg:query_1} on $X'$. We have the following guarantee:

\begin{theorem}\label{thm:l2_approx}
Let $A_{i,j} = \|x_i - x_j\|_2$. There exists a matrix $B$ such that we can compute $By$ in $O(nd^2 + nd \log n)$ preprocessing time and $O(n \log(n)/\eps^2)$ query time where
\[\|A-B\|_F \le \eps \|A\|_F \]
with probability $1-1/\textup{poly}(n)$.
\end{theorem}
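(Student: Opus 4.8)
The plan is to combine the embedding of Theorem~\ref{thm:l2embed} with the exact $\ell_1$ matrix-vector algorithm from Theorem~\ref{thm:l1_upper_bound}. Let $T : \R^d \to \R^k$ be the random map of Theorem~\ref{thm:l2embed} with target dimension $k = \Theta(\log(n)/\eps^2)$, and let $x_i' = T x_i$ be the embedded points. Define $B$ to be the $\ell_1$ distance matrix of the embedded point set, i.e.\ $B_{i,j} = \|x_i' - x_j'\|_1 = \|T x_i - T x_j\|_1$. Since $B$ is exactly an $\ell_1$ distance matrix on $n$ points in $\R^k$, Algorithm~\ref{alg:query_1} computes $By$ exactly in $O(nk) = O(n\log(n)/\eps^2)$ query time after $O(nk\log n) = \tilde O(n/\eps^2)$ preprocessing on the embedded points; the embedded points themselves are produced in $O(ndk)$ time by the dense matrix–matrix product $TX$, but since we only need $k=O(\log n/\eps^2)$ and the stated preprocessing bound is $O(nd^2 + nd\log n)$, I will instead note $ndk = O(nd\log n/\eps^2)$, or (to match the stated bound literally) observe $k \le d$ may be assumed so $TX$ costs $O(nd^2)$; adding the $O(nk\log n)$ sorting cost gives the claimed $O(nd^2 + nd\log n)$ overall. (If $d < \log n/\eps^2$ the embedding is vacuous and one uses the points directly.)

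The correctness claim $\|A - B\|_F \le \eps\|A\|_F$ with probability $1 - 1/\mathrm{poly}(n)$ follows from a union bound over the $\binom{n}{2}$ pairs. For each pair $(i,j)$ apply Theorem~\ref{thm:l2embed} to the vector $x_i - x_j$: with the constant $c>0$ from that theorem and $k = C\log(n)/\eps^2$ for a sufficiently large constant $C$, the failure probability $e^{-c\eps^2 k} = n^{-cC}$ can be driven below $n^{-3}$, so with probability $1 - n^{-1}$ every pair simultaneously satisfies
\[
(1-\eps)\|x_i - x_j\|_2 \;\le\; \|T(x_i - x_j)\|_1 \;=\; B_{i,j} \;\le\; (1+\eps)\|x_i - x_j\|_2 .
\]
On this event $|B_{i,j} - A_{i,j}| \le \eps A_{i,j}$ for all $i,j$, hence $(B_{i,j}-A_{i,j})^2 \le \eps^2 A_{i,j}^2$; summing over all entries gives $\|B-A\|_F^2 \le \eps^2 \|A\|_F^2$, which is the desired bound. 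Note that linearity of $T$ is what lets us pass from ``$\|Tx_i - Tx_j\|_1$ approximates $\|x_i-x_j\|_2$'' to a statement about the single vector $x_i - x_j$, so the $n^2$ pairwise guarantees only cost a union bound, not independent draws of $T$.

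The main point requiring care — rather than a deep obstacle — is bookkeeping the preprocessing runtime and the rescaling of $k$ so that the failure probability is polynomially small \emph{and} the query time is $O(n\log(n)/\eps^2)$ simultaneously; this just amounts to choosing the hidden constant in $k = \Theta(\log(n)/\eps^2)$ large enough relative to $c$. A secondary subtlety is that Theorem~\ref{thm:l2embed} gives a one-sided tail $1 - e^{-c\eps^2 k}$ bounding both the upper and lower deviation together, so no separate handling of the two directions is needed. Everything else is a direct invocation of the already-established exact $\ell_1$ query algorithm on the embedded dataset.
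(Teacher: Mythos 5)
Your proposal is correct and follows essentially the same route as the paper: embed via the map $T$ of Theorem~\ref{thm:l2embed} into $\ell_1$ with target dimension $k=\Theta(\log(n)/\eps^2)$, run the exact $\ell_1$ query algorithm on the embedded points, and pass from entrywise relative error (via a union bound over pairs, using linearity of $T$) to the Frobenius-norm guarantee. Your write-up is in fact more detailed than the paper's two-sentence proof, particularly in the union-bound argument and the bookkeeping of the preprocessing cost.
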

\begin{proof}The preprocessing and query time follow from the time required to apply the transformation $T$ from Theorem \ref{thm:l2embed} to our set of points $X$ as well as the time needed for the $\ell_1$ matrix-vector query result of Theorem \ref{thm:p=1}. The Frobenius norm guarantee follows from the fact that every entry of $A$ will be approximated with relative error in $B$ using Theorem \ref{thm:l2embed}.
\end{proof}

\subsection{Matrix-Vector Query Lower Bounds}
Table \ref{tab:results} shows that we can initialize matrix-vector queries for a variety of distance functions in $O(nd)$ time. It is straightforward to see that this bound is optimal for a large class of distance matrices.

\begin{theorem}\label{thm:lb_for_ub}
Consider the case that $A_{i,j} = f(x_i, x_j)$ satisfying $f(x,x) = 0$ for all $x$. Further assume that for all $x$, there exists an input $y$ such that $f(x,y) =1$. An algorithm which outputs an entry-wise approximation of $Az$ to any constant factor for input $z$ requires $\Omega(nd)$ time in the worst case.
\end{theorem}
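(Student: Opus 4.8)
The plan is to prove the bound by an adversary argument showing that any correct algorithm must read $\Omega(nd)$ of the $nd$ scalars that constitute the input point set $X$; since on a RAM each such read costs a unit of time, this lower-bounds the running time.

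First I would fix the query vector to be the all-ones vector $z = \mathbf{1}_n$, so that the $k$-th coordinate of the answer is $(Az)_k = \sum_{j=1}^n f(x_k, x_j)$, and take as the base instance $X^{(0)}$ the degenerate point set in which all $n$ points equal a single fixed reference point $o \in \R^d$. The hypothesis $f(x,x) = 0$ gives $A = 0$ on $X^{(0)}$, hence $Az = 0$; since the only constant-factor approximation of $0$ is $0$ itself, every correct algorithm must output the all-zeros vector on $(X^{(0)}, z)$.

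Now suppose, for contradiction, that some correct algorithm reads fewer than $nd$ cells of $X$ when run on $(X^{(0)}, z)$. Then there is a point index $k^\star$ and a coordinate $c^\star$ such that the cell $(k^\star, c^\star)$ is never read. I would modify only this one cell: by the assumption on $f$ — namely that $f(o, \cdot)$ is non-constant, witnessed by some $y$ with $f(o,y) = 1 \neq 0 = f(o,o)$, together with the fact that (as for every function in Table~\ref{tab:results}) $f$ genuinely depends on coordinate $c^\star$ — one can place a value in coordinate $c^\star$ of $x_{k^\star}$ so that the modified point $x'_{k^\star}$ satisfies $f(x'_{k^\star}, o) \neq 0$. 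Call the resulting instance $X'$. Because $X'$ and $X^{(0)}$ agree on every cell the algorithm ever reads, its (possibly adaptive) execution — and therefore its output — is identical on $(X', z)$ and $(X^{(0)}, z)$, namely the all-zeros vector. But on $X'$ the $k^\star$-th answer coordinate is $(Az)_{k^\star} = \sum_{j \neq k^\star} f(x'_{k^\star}, o) + f(x'_{k^\star}, x'_{k^\star}) = (n-1) f(x'_{k^\star}, o) \neq 0$ (using $f(x,x)=0$), of which the zero vector is not a valid constant-factor approximation — a contradiction. Hence every correct algorithm reads at least $nd$ input cells, proving the $\Omega(nd)$ lower bound. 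The statement for randomized algorithms follows by fixing a favorable setting of the random coins, or equivalently by Yao's principle applied to the hard distribution supported on $\{X^{(0)}, X'\}$.

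I expect the crux to be the single-cell modification step: one must guarantee that perturbing one unread coordinate actually moves some entry of $Az$ off zero, and this is precisely where the assumptions on $f$ are used (for $\ell_1$, say, it suffices to set $x_{k^\star}(c^\star)$ far away from $o(c^\star)$, and analogous one-line arguments handle the other functions in Table~\ref{tab:results}). The remaining ingredients — that running time dominates the number of probed input cells, and that an algorithm's trace is unchanged when only unprobed cells are edited — are standard.
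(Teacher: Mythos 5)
Your proposal is built on the same core idea as the paper's proof: fix the query $z=\mathbf{1}_n$, take the all-identical point set as the base instance (where $Az=0$ and the only valid constant-factor approximation is $0$), and argue that an algorithm that probes too little of the input cannot distinguish this from a perturbed instance whose answer is nonzero. The difference is in granularity and mechanism: the paper perturbs an entire point placed at a \emph{random index} and argues that reading $o(n)$ points misses it with good probability (then asserts the extra factor of $d$), whereas you run a deterministic cell-probe adversary directly at the level of the $nd$ scalars. Your version is cleaner in that it targets $\Omega(nd)$ head-on rather than $\Omega(n)$ points, but it leans on something the paper's hypotheses do not give you.

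The gap you yourself flag is real and is the crux: the theorem only assumes $f(x,x)=0$ and that for each $x$ some $y$ has $f(x,y)=1$; it does \emph{not} assume that perturbing a single coordinate can move $f$ off zero. Without that, the single-cell modification step fails --- e.g., $f(x,y)=|x(1)-y(1)|$ satisfies both stated hypotheses, yet changing an unread coordinate $c^\star\ge 2$ changes nothing, and indeed $Az$ is computable in $O(n\log n)$ time, so the $\Omega(nd)$ bound is simply false for that $f$. To be fair, the paper's proof has the same unjustified jump (from ``must read $\Omega(n)$ points'' to ``requires $\Omega(nd)$ time''), so under the hypotheses as written both arguments rigorously deliver only $\Omega(n)$; the $d$ factor needs an added assumption that $f$ depends on each coordinate in the appropriate sense. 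A second, smaller issue: your randomized extension via ``Yao's principle on the distribution supported on $\{X^{(0)},X'\}$'' does not work as stated, because $X'$ is defined in terms of which cell a particular algorithm fails to read, so it cannot serve as an algorithm-independent hard distribution. The correct fix is exactly the paper's device transplanted to your setting: place the perturbation at a uniformly random cell (or point), and argue that any algorithm probing $o(nd)$ cells misses it with constant probability.
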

\begin{proof}
We consider two cases for input points of $A$. In the first case, all points in our dataset $X$ are identical. In the second case, a randomly chosen point is distance $1$ away from the $n-1$ identical points. Computing the product of $A$ times the all ones vector allows us to distinguish the two cases as $A1$ has entries summing to $0$ in the first case whereas $A1$ has entries summing to $n-1$ in the second case. Thus to approximate $A1$ entry-wise to any constant factor, we must distinguish the two cases. If we read $o(n)$ points, then with good probability we will see no duplicates. Thus, we must read $\Omega(n)$ points, require $\Omega(nd)$ time.
\end{proof}

\section{When Do Our Upper Bound Techniques Work?}\label{sec:meta}
By this point, we have seen many examples of matrix-vector queries which can be initialized as well as a lower bound for a natural distance function which prohibits any subquadratic time algorithm. Naturally, we are thus interested in the limits of our upper bound techniques for instantiating fast matrix-vector product. An understanding of such limits sheds light on families of structured matrices which may admit fast matrix-vector queries in general. In this section we fully characterize the capabilities of our upper bound methods and show that essentially our techniques can only work ``linear" functions (in a possibly different basis).

First we set some notation. Let $A$ be a $n \times n$ matrix we wish to compute where each $(i,j)$ entry is given by $f(x_i, x_j)$. Given a query vector $z \in \R^n$, the $k$th coordinate of $Az$ is given by
\[ (Az)(k) = \sum_{i=1}^n z_i f(x_k,x_i). \]
An example choice of $f$ is given by $f(x, y) = \sum_{j=1}^d x(j) \log(y(j))$ (assuming all the coordinates of $x$ and $y$ are entry wise non-negative. Note this is related to the cross entropy function in Table \ref{tab:results}).

We first highlight the major steps which are common to all of our upper bound algorithms using $f$ as an example. Our upper bound technique proceeds as follows:
\begin{itemize}
    \item Break up $f(x, y)$ into a sum over $d$ terms: $\sum_{j=1}^d x(j) \log(y(j))$.
    \item Switch the order of summation: 
    \[(Az)(k) = \sum_{i=1}^n z_i f(x_k,x_i) = \sum_{j=1}^d \sum_{i=1}^n z_ix_k(j)\log(x_i(j)).\]
    \item Evaluate each of the inner $d$ summations with $1$ evaluation each (after some preprocessing). In other words, for a fixed $j$, each of the sums $\sum_{i=1}^n z_ix_k(j)\log(x_i(j))$ can be computed as one evaluation, namely $x_k(j) \cdot \Big( \sum_{i=1}^n z_i\log(x_i(j)) \Big)$ and in preprocessing, we can compute $ \sum_{i=1}^n z_i\log(x_i(j))$ as it does not depend on the coordinate $k$.
\end{itemize}

The key steps of the above outline, namely switching the order of summation and precomputation of repeated terms, can be encapsulated in the following framework.

\begin{theorem}\label{thm:meta_upperbound}
Suppose there exist mappings $T_1, T_2: \R^d \rightarrow \R^{d'}$ (possibly non-linear) and a continuous $g: \R \times \R \rightarrow \R$ such that for every $k$,
\begin{align*}
    (Az)(k) &= \sum_{i=1}^n z_i f(x_k,x_i) \\
    &= \sum_{i=1}^n z_i \sum_{j=1}^{d'} g\left(T_1(x_k)(j), T_2(x_i)(j)\right) \quad \textup{(breaking $f$ into sum over $d'$ terms)} \\
    &= \sum_{j=1}^{d'} \sum_{i=1}^n z_i \cdot g\left(T_1(x_k)(j), T_2(x_i)(j)\right) \quad \textup{(switching order of summation).}
\end{align*}
Further suppose that each of the terms $\sum_{i=1}^n z_i\cdot g\left(T_1(x_k)(j), T_2(x_i)(j)\right)$ can be evaluated as
\[ \sum_{i=1}^n z_i \cdot g\left(T_1(x_k)(j), T_2(x_i)(j)\right) = g\left(T_1(x_k)(j), \sum_{i=1}^n z_i T_2(x_i)(j)\right) \]
for any choice of the vector $z$. Then $g(a,b)$ must be a linear function in $b$.
\end{theorem}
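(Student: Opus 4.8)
The key equation to exploit is the assumption that for every $k$ and $j$, and for \emph{every} choice of $z \in \R^n$,
\[ \sum_{i=1}^n z_i \cdot g\left(T_1(x_k)(j), T_2(x_i)(j)\right) = g\left(T_1(x_k)(j), \sum_{i=1}^n z_i T_2(x_i)(j)\right). \]
The plan is to fix the first argument of $g$ and think of the resulting function of one variable, call it $h(b) := g(a, b)$ where $a = T_1(x_k)(j)$ is held fixed. Then the displayed identity says $\sum_i z_i h(b_i) = h(\sum_i z_i b_i)$ where $b_i = T_2(x_i)(j)$. Since $z$ is arbitrary, we are free to choose the $z_i$ and, because $T_2$ is an arbitrary map and the points $x_i$ are arbitrary, the values $b_i$ range over (a dense subset of, or all of) $\R$. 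Thus what we really have is a function $h: \R \to \R$ satisfying $\sum_{i} z_i h(b_i) = h\!\left(\sum_i z_i b_i\right)$ for all finite collections of reals $z_i, b_i$.

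First I would reduce to the two canonical functional-equation consequences. Taking $n = 2$ with $z_1 = z_2 = 1$ gives $h(b_1 + b_2) = h(b_1) + h(b_2)$, i.e. $h$ is additive (Cauchy's equation). Taking $n = 1$ gives $h(z_1 b_1) = z_1 h(b_1)$ for all reals $z_1, b_1$; setting $b_1 = 1$ yields $h(z_1) = z_1 h(1)$, so $h(b) = cb$ with $c = h(1)$. This already forces linearity (in fact homogeneity of degree one) in $b$ without even needing continuity, provided we are allowed to plug in $n=1$ and a single arbitrary $b_1$. Since the identity is assumed to hold for all $z$ and the construction of the $b_i$'s via $T_2$ is under our control, this substitution is legitimate. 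I would present the $n=1$ argument as the main line and mention the additive/Cauchy route as the reason continuity is invoked if one wants to be careful about the range of attainable values.

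The one place that needs care — and what I expect to be the main obstacle — is justifying that the pair $(a, b_i)$ can be taken to range over enough of $\R \times \R$ for the pointwise conclusion ``$g(a, \cdot)$ is linear'' to follow for all relevant $a$, rather than just on the finite grid of values actually realized by some dataset. Here the continuity hypothesis on $g$ does the work: once we know $g(a, b) = c(a) b$ on a dense set of $b$ for each fixed $a$ in the (dense) set of realizable first-coordinates, continuity extends the identity $g(a,b) = c(a) b$ to all $b \in \R$ and all $a$ in the closure. I would therefore structure the proof as: (1) fix $a$, define $h$, derive $h(b) = h(1) b$ from the $n=1$ instance on a dense set of $b$; (2) invoke continuity of $g$ in its second argument to conclude $g(a,b)$ is linear in $b$ for all $b$; (3) remark that this is exactly the obstruction that rules out genuinely nonlinear $g$ (e.g. $g(a,b) = \sqrt{ab}$ or $g(a,b) = a\log b$ fail because $\log$ and $\sqrt{\cdot}$ are not linear), which matches the informal claim that our technique only handles functions that are ``linear after a change of basis'' via $T_1, T_2$.
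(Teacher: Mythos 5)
Your proposal is correct, but it reaches the conclusion by a different route than the paper. The paper's proof chooses $z$ with exactly two nonzero coordinates $z_1=z_2=1$, which turns the evaluation hypothesis into Cauchy's additive functional equation $g(q,a)+g(q,b)=g(q,a+b)$ in the second argument, and then invokes the classical theorem (stated in the paper as Theorem \ref{thm:cauchy}) that a \emph{continuous} additive function is linear. You instead take $z$ with a single nonzero coordinate (your ``$n=1$'' wording is slightly off, since $n$ is the number of data points, but the intent --- a one-coordinate-supported $z$ --- is a legitimate choice), which yields the homogeneity identity $z_1\,g(a,b_1)=g(a,z_1 b_1)$ for all real $z_1$; provided some attainable value $b_0=T_2(x_1)(j)$ is nonzero, setting $b=z_1 b_0$ gives $g(a,b)=\bigl(g(a,b_0)/b_0\bigr)\,b$ for \emph{every} $b\in\R$, so linearity in the second argument follows with no appeal to Cauchy's equation and essentially no use of continuity (continuity is only needed, as you note, to pass to limit points of the attainable first arguments, or one can simply state the conclusion for attainable $a$). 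Both arguments share the same implicit reading of the hypothesis --- that the evaluation identity holds for all $z$ and all choices of the input points, so the second argument ranges over the range of $T_2(\cdot)(j)$ --- which the paper makes explicit in its proof. Your route is thus a bit more elementary and slightly stronger (homogeneity directly forces linearity on all of $\R$, bypassing the regularity hypothesis that Cauchy's equation requires), while the paper's route is the more standard reduction to a named functional equation; either is acceptable, and your step (2) correctly handles the only genuine care point, namely the domain over which the identity is actually forced.
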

Theorem \ref{thm:meta_upperbound} is stated in quite general terms. We are stipulating the following statements: the functions $T_1, T_2$ represent possibly non-linear transformations to $\R^{d'}$ on $x,y$ respectively such that $f(x,y)$ can be decomposed as a sum over $d'$ function evaluations. Each function evaluation takes in the same coordinate, say the $j$th coordinate, of both $T_1(x)$ and $T_2(y)$ and computes the function $g(T_1(x)(j), T_2(y)(j))$. Finally the resulting sum $\sum_{i=1}^n z_i \cdot  g\left(T_1(x_k)(j), T_2(x_i)(j)\right)$ can be computed as  $g\Big (T_1(x_k)(j), \sum_{i=1}^n h(z_i)T_2(x_i)(j)\Big)$.

If these conditions hold (which is precisely the case in the proof of \emph{all} our upper bound results), then it \emph{must} be the case that $g$ has a very special form, in particular, $g$ must be a linear function in its second variable. To make the setting more concrete, we map the terminology of Theorem \ref{thm:meta_upperbound} into some examples from our upper bound results. 

First consider the case that $f(x,y) = \langle x, y \rangle$. In this case, both $T_1$ and $T_2$ are the identity maps and $g(a,b) = ab$. It is indeed the case that $g(a,b)$ is linear in $b$.  Now consider a slightly more complicated choice $f(x,y) = \sum_{j=1}^d x(j) \log(y(j))$. Here, we first have the mappings $T_1$ = identity but $T_2$ is a coordinate wise map such that $T_2(y) = [\log(y_1), \ldots, \log(y_n)]$. The function $g$ again satisfies $g(a,b) = ab$. Finally we consider the example $f(x,y) = \|x-y\|_2^2$ which sets $d' \gg d$. In particular, the mapppings $T_1, T_2$ expand $x,y$ into a $O(d^2)$-dimensional vector, whose coordinates represent all possible combinations products of two coordinates of $x$ and $y$ respectively. (The reader may realize that this particular case is an example of ``linearizing'' the kernel given by $f$). Again $g$ is the same function as before.

The proof of Theorem \ref{thm:meta_upperbound} relies on the following classical result on the solutions of Cauchy's functional equation.

\begin{theorem}\label{thm:cauchy}
Let $t: \R \rightarrow \R$ be a continuous function which satisfies $t(x+y) = t(x)+t(y)$ for all inputs $x,y$ in its domain. Then $t$ must be a linear function.
\end{theorem}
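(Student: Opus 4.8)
The plan is to run the classical bootstrapping argument, pushing the additive identity $t(x+y)=t(x)+t(y)$ from the integers, to the rationals, and finally to all of $\R$ using continuity only at the last step. First I would record the base facts at integer multiples: setting $x=y=0$ gives $t(0)=2t(0)$, hence $t(0)=0$; then $0=t(0)=t(x)+t(-x)$ gives $t(-x)=-t(x)$; and a straightforward induction on $n$, together with this sign relation, yields $t(nx)=n\,t(x)$ for every integer $n$ and every real $x$.

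Next I would extend this to rational scaling. For a nonzero integer $q$, applying the integer identity to the point $x/q$ gives $t(x)=t\big(q\cdot(x/q)\big)=q\,t(x/q)$, so $t(x/q)=t(x)/q$; combining with $t(px)=p\,t(x)$ shows $t\big((p/q)x\big)=(p/q)\,t(x)$ for every integer $p$ and nonzero integer $q$. Specializing to $x=1$ and writing $c:=t(1)$, we obtain $t(r)=cr$ for every rational number $r$.

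Finally I would invoke continuity to upgrade from the rationals to all of $\R$. Given an arbitrary real $x$, choose a sequence of rationals $r_n\to x$, which exists by density of the rationals in $\R$; then continuity of $t$ gives $t(x)=\lim_{n\to\infty}t(r_n)=\lim_{n\to\infty}c\,r_n=cx$. Hence $t(x)=cx$ for all $x\in\R$, so $t$ is linear, as claimed.

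The only genuinely load-bearing use of the hypotheses is this last density-plus-continuity step: additivity alone determines $t$ only on the rationals, and without some regularity assumption (continuity, or even merely measurability or local boundedness) one can use a Hamel basis of $\R$ over the rationals to construct pathological non-linear additive functions, so such a step is unavoidable. Everything preceding it is elementary algebra and induction, where I expect no real difficulty beyond the minor bookkeeping of correctly handling zero, negative indices, and negative denominators.
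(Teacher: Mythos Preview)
Your argument is correct and is precisely the standard bootstrapping proof of Cauchy's functional equation under continuity. The paper does not actually prove this theorem: it states it as a classical result, remarks that continuity suffices (and that weaker hypotheses such as measurability would also work), and refers the reader to \cite{kuczma2009introduction} for details, so there is nothing further to compare.
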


For us the hypothesis that $t$ is continuous suffices but it is know that the above result follows from much weaker hypothesis. We refer to the reader to \cite{kuczma2009introduction} for reference related to Cauchy's functional equation.

\begin{proof}[Proof of Theorem \ref{thm:meta_upperbound}]
Our goal is to show that if 
\[ \sum_{i=1}^n z_i \cdot g\left(T_1(x_k)(j), T_2(x_i)(j)\right) = g\left(T_1(x_k)(j), \sum_{i=1}^n z_i T_2(x_i)(j)\right) \]
for all $z$ and choices of input points $x_i$ then $g$ must be linear in the second variable.  First set $z_j = 0$ for all $j \ge 2$ and $z_1 = z_2 = 1$. For ease of notation, denote $q := T_1(x_k)(j)$. As we vary the coordinate of the points $x_1$ and $x_2$, the values $T_2(x_1)(j)$ and $T_2(x_2)(j)$ also vary over the range of $T_2$. Thus, 
\[g(q, a) + g(q,b) = g(q, a+b) \]
for all possible inputs $a,b$. However, this is exactly the hypothesis of Theorem \ref{thm:cauchy} so it follows that $g$ must be a linear function in its second coordinate, as desired.
\end{proof}

While the proof of Theorem \ref{thm:meta_upperbound} is straightforward, it precisely captures the scenarios where our upper bound techniques apply. In short, it implies that $f$ must have a linear structure, under a suitable change of basis, for our techniques to hold. If its not the case, then our techniques do not apply and new ideas are needed. Nevertheless, as displayed by the versatility of examples in Table \ref{tab:results}, such a structure is quite common in many applications where matrices of distance or similarity functions arise. 

The observant reader might wonder how our result for the $\ell_1$ function fits into the above framework as it is not obviously linear. However, we note that the function $h_j(x) = \sum_{i=1}^n |x(j)-x_i(j)|$ (which appears in the theorem statement of Theorem \ref{thm:meta_upperbound} as the sum $\sum_{i=1}^n z_i \cdot g\Big(T_1(x_k)(j), T_2(x_i)(j)\Big)$ is actually a \emph{piece-wise} linear function in $x(j)$. The sorting preprocessing we performed for Theorem \ref{thm:p=1} can be thought of as creating a data structure which allows us to efficiently index into the correct linear piece.

\section{Applications of Matrix-Vector Products}\label{sec:applications} 

\subsection{Preliminary Tools}
We highlight specific prior results which we use in conjunction with our matrix-vector query upper bounds to obtain improved algorithmic results. First we recall a result of \cite{bakshi2022low} which gives a nearly optimal low-rank approximation result in terms of the number of matrix-vector queries required.

\begin{theorem}[Theorem 5.1 in \cite{bakshi2022low}]\label{thm:bakshi22}
Given matrix-vector query access to a matrix $A \in \R^{n \times n}$, accuracy parameter $\eps \in (0,1), k \in [n]$ and any $p \ge 1$, there exists an algorithm which uses $\tilde{O}(k/\eps^{1/3})$ matrix-vector queries and outputs a $n \times k$ matrix $Z$ with orthonormal columns such that with probability at least $9/10$,
\[ \|A(I-ZZ^T)\|_p \le (1+\eps) \min_{U : U^TU = I_k} \|A(I-UU^T) \|_p \]
where $\|M\|_p = (\sum_{i=1}^n \sigma_i(M)^p)^{1/p}$ is the $p$-th Schatten norm where $\sigma_1 ,\ldots, \sigma(M)$ are the singular values of $M$. The runtime of the algorithm is $\tilde{O}(Tk/\eps^{1/3} + nk^{w-1}/\eps^{(w-1)/3})$ where $T$ is the time for computing a matrix-vector query.
\end{theorem}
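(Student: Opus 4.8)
The plan is to build $Z$ from a block Krylov space and to analyze its quality through polynomial approximation of spectral projectors, following the block-Krylov SVD paradigm. First I would take a Gaussian sketch $X \in \R^{n \times k}$ (with a mild oversampling to $O(k)$ columns if needed) and form the block Krylov matrix $K = [\,AX \mid (AA^T)AX \mid (AA^T)^2 AX \mid \cdots \mid (AA^T)^{q-1}AX\,]$ for $q = \tilde{O}(1/\eps^{1/3})$ blocks, so that $K$ has $\tilde{O}(k/\eps^{1/3})$ columns. Each new block is obtained from the previous one by two matrix-vector multiplications applied to the $O(k)$ held vectors, so $K$ is computed with $\tilde{O}(k/\eps^{1/3})$ matrix-vector queries, i.e.\ in time $\tilde{O}(Tk/\eps^{1/3})$. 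I would then orthonormalize the columns of $K$ to get a basis $Q$, form the projected matrix, take its best rank-$k$ approximation, and set $Z$ to be the corresponding top-$k$ left-singular subspace; the orthogonalization and the small SVD act on a matrix with $\tilde{O}(k/\eps^{1/3})$ columns and contribute the $\tilde{O}(nk^{w-1}/\eps^{(w-1)/3})$ term.

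The correctness argument reduces to showing that $\mathrm{span}(K)$ contains, up to a $(1+\eps)$ factor in \emph{every} Schatten-$p$ norm simultaneously, a near-optimal rank-$k$ approximation of $A$. Here the polynomial approximation enters: any vector in the Krylov span has the form $p(AA^T)AX = A\,p(A^TA)X$ for a polynomial $p$ of degree $< q$, so in terms of the SVD $A = U\Sigma V^T$ it equals $U\,\big(\Sigma\, p(\Sigma^2)\big)\,V^T X$, and it suffices to exhibit such a $p$ for which $\sigma \mapsto \sigma\, p(\sigma^2)$ behaves like the indicator of the part of the spectrum above the threshold $\sigma_k$. I would partition the singular values into geometric buckets around that threshold and, for each bucket, invoke a scaled Chebyshev polynomial (degree $\tilde{O}(1/\sqrt{\gamma})$ to separate values differing by a multiplicative $(1+\gamma)$) to argue that $p(AA^T)$ approximately preserves the directions above the threshold while annihilating those well below it. The improved exponent $\eps^{1/3}$ (rather than the $\eps^{1/2}$ of a naive single-polynomial argument) comes from balancing the polynomial degree against both the number of buckets and the amount of spectral mass that can sit in the ambiguous middle band, a tradeoff tuned separately for the head, middle, and tail of the spectrum.

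To pass from this Euclidean, per-subspace statement to the Schatten-$p$ guarantee for all $p \ge 1$ at once, I would write $\|A(I-ZZ^T)\|_p^p = \sum_i \sigma_i\!\big(A(I-ZZ^T)\big)^p$ and use the min-max characterization of singular values together with a Weyl-type perturbation bound to control, singular value by singular value, the cost of projecting onto $\mathrm{span}(K)$ instead of onto the true top-$k$ left space of $A$; summing these controlled perturbations recovers the $(1+\eps)$-relative bound in every Schatten norm. The Gaussian sketch $X$ is dispatched by the standard argument that, with constant probability, $X$ does not contract the $O(k)$-dimensional subspaces that appear in the analysis (a rank/net argument), which yields the $9/10$ success probability, and the $\tilde{O}(\cdot)$ notation absorbs the resulting polylogarithmic factors.

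I expect the main obstacle to be establishing the $\eps^{1/3}$ bound itself: the Krylov construction, the Gaussian sketch, and the runtime bookkeeping are comparatively routine, but obtaining the improved exponent requires the delicate spectral partitioning with separate head/middle/tail polynomial constructions, and it requires arguing that the bound holds uniformly across \emph{all} Schatten norms rather than merely the spectral or Frobenius norm, which is the genuinely technical part of the perturbation analysis.
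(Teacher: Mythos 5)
This statement is not proved in the paper at all: it is Theorem 5.1 of \cite{bakshi2022low}, imported verbatim as a black-box tool in the preliminaries, so there is no in-paper argument to compare against. Your outline does track the strategy of the cited source --- block Krylov iteration with $q=\tilde{O}(1/\eps^{1/3})$ blocks, Chebyshev-based spectral splitting into head/middle/tail to beat the $\eps^{1/2}$ degree of a single-polynomial argument, and a perturbation step to make the guarantee uniform over all Schatten $p$-norms --- and your accounting of the query count and the $\tilde{O}(nk^{w-1}/\eps^{(w-1)/3})$ post-processing cost is consistent with the stated runtime. Be aware, though, that what you have written is a plan rather than a proof: the two steps you yourself flag as hard (the head/middle/tail tradeoff that yields the $\eps^{1/3}$ exponent, and the singular-value-by-singular-value perturbation bound that upgrades a Frobenius/spectral statement to every Schatten norm simultaneously) constitute essentially the entire technical content of the cited theorem, and nothing in your sketch would substitute for carrying them out. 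For the purposes of this paper, citing \cite{bakshi2022low} is the intended treatment.
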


The second result is of \cite{musco2015randomized} which give an optimized analysis of a variant of power method for computing the top $k$ singular values.

\begin{theorem}[Theorem $1$ and $7$ in \cite{musco2015randomized}]\label{thm:musco}
Given matrix-vector query access to a matrix $A \in \R^{n \times n}$, accuracy parameter $\eps \in (0,1), k \in [n]$, there exists an algorithm which uses $\tilde{O}(k/\eps^{1/2})$ matrix-vector queries and outputs a $1\pm \eps$ approximation to the top $k$ singular values of $A$. The runtime of the algorithm is 
$\tilde{O}(Tk/\eps^{1/2} + nk^2/\eps + k^3/\eps^{3/2})$.
\end{theorem}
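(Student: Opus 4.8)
The plan is to invoke the block Krylov / simultaneous iteration algorithm of \cite{musco2015randomized} essentially verbatim, and then simply bookkeep the cost of every step once matrix-vector products with $A$ are charged at $T$ apiece rather than at the $O(n^2)$ cost that an explicit dense representation would incur. Concretely, I would start from the algorithm that draws a Gaussian block $G \in \R^{n \times \ell}$ with $\ell = \Theta(k)$, builds the Krylov matrix $K = [AG, A^2 G, \ldots, A^q G]$ with $q = \Theta(\log(n)/\sqrt{\eps})$, orthonormalizes its columns into a matrix $Q$ with orthonormal columns, forms the compressed matrix $M = Q^\top A Q$, and outputs the top $k$ singular values of $M$ as the estimates $\tilde\sigma_1,\ldots,\tilde\sigma_k$.

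The accounting then has three parts. First, the query count: building $K$ requires $\ell q = \tilde O(k/\sqrt\eps)$ matrix-vector products, and forming $AQ$ in order to build $M$ costs another $\tilde O(k/\sqrt\eps)$, so the total number of matrix-vector products is $\tilde O(k/\sqrt\eps)$ exactly as stated, contributing $\tilde O(Tk/\eps^{1/2})$ to the runtime. Second, the auxiliary dense linear algebra, all performed on tall-thin matrices of dimension $n \times \tilde O(k/\sqrt\eps)$: a Gram--Schmidt / QR orthonormalization of $K$, and the two products needed to assemble $M = Q^\top (AQ)$, each cost $\tilde O\big(n \cdot (k/\sqrt\eps)^2\big) = \tilde O(nk^2/\eps)$. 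Third, the SVD of the small $\tilde O(k/\sqrt\eps) \times \tilde O(k/\sqrt\eps)$ matrix $M$, which costs $\tilde O\big((k/\sqrt\eps)^3\big) = \tilde O(k^3/\eps^{3/2})$. Summing the three contributions yields the claimed bound $\tilde O\big(Tk/\eps^{1/2} + nk^2/\eps + k^3/\eps^{3/2}\big)$.

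The substantive content is the correctness of the per-singular-value $(1\pm\eps)$ guarantee --- that $q = \Theta(\log(n)/\sqrt\eps)$ Krylov iterations suffice with no dependence on spectral gaps --- and this is precisely Theorems 1 and 7 of \cite{musco2015randomized}; I would cite it rather than reprove it, since their argument (bounding how well a degree-$q$ Chebyshev polynomial, applied implicitly through the Krylov space, separates the head of the spectrum from the tail) is the genuine obstacle and is entirely orthogonal to anything in this paper. The only point that needs verification on our end is that their analysis never assumes an explicit representation of $A$: every access to $A$ in their algorithm is through a matrix-vector product, so swapping in the cost model ``$T$ per product'' is sound, and the failure probability and the hidden logarithmic factors in $\tilde O(\cdot)$ are inherited unchanged.
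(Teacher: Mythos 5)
Your proposal is correct and matches the paper's treatment: the paper does not reprove this result but imports it directly from \cite{musco2015randomized}, exactly as you do, with the only new content being the observation that every access to $A$ is via a matrix-vector product so the cost $T$ per query can be substituted, plus the routine accounting of the $\tilde O(nk^2/\eps)$ tall-thin linear algebra and the $\tilde O(k^3/\eps^{3/2})$ small SVD. Your bookkeeping reproduces the stated bound, so there is nothing further to add.
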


Lastly, we recall the gaurantees of the classical conjugate-gradient descent method.

\begin{theorem}\label{thm:linear_system}
Let $A$ be a symmetric PSD matrix and consider the linear system $Ax = b$ and let $x^* = \text{argmin}_x \|Ax-b\|_2$. Let $\kappa$ denote the condition number of $A$. Given a starting vector $x_0$, the conjugate gradient descent algorithm uses $O(\sqrt{\kappa}\log(1/\eps))$ matrix-vector queries and returns $x$ such that
\[ \|x-x^*\|_A \le \eps \|x_0 - x^*\|_A \]
where $\|x\|_A = (x^TAx)^{1/2}$ denotes the $A$-norm.
\end{theorem}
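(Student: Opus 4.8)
The plan is to prove this via the standard Krylov-subspace optimality characterization of conjugate gradient descent together with a Chebyshev polynomial estimate. First I would recall that after $m$ steps the CG iterate $x_m$ lies in the affine Krylov subspace $x_0 + \mathcal{K}_m(A, r_0)$, where $r_0 = b - Ax_0$ and $\mathcal{K}_m(A, r_0) = \mathrm{span}\{r_0, Ar_0, \ldots, A^{m-1}r_0\}$, and --- this is the structural heart of the argument --- that among all vectors in this affine subspace, $x_m$ is the minimizer of the $A$-norm error $\|x - x^*\|_A$. This follows from the fact that CG constructs an $A$-orthogonal (conjugate) basis of the successive Krylov subspaces and performs exact line search along each direction; equivalently, the residual $r_m$ is orthogonal to $\mathcal{K}_m(A,r_0)$, which is exactly the first-order optimality condition for minimizing $\|x-x^*\|_A$ over that subspace. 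When $A$ is singular one runs this analysis inside $\mathrm{range}(A)$ and interprets $\kappa$ as the ratio of the largest to the smallest nonzero eigenvalue.

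Next I would translate optimality into a polynomial approximation problem. Every element of $x_0 + \mathcal{K}_m(A,r_0)$ has the form $x_0 + p(A)r_0$ for a polynomial $p$ of degree $< m$, and since $r_0 = -A(x_0-x^*)$, the error $x_m - x^*$ equals $q(A)(x_0-x^*)$ for some polynomial $q$ of degree $\le m$ with $q(0)=1$, with $x_m$ attaining the minimum of $\|q(A)(x_0-x^*)\|_A$ over all such $q$. Diagonalizing $A = U\Lambda U^T$ and expanding $x_0 - x^* = \sum_i c_i u_i$ in the eigenbasis gives
\[ \|q(A)(x_0-x^*)\|_A^2 = \sum_i \lambda_i\, q(\lambda_i)^2\, c_i^2 \le \Big(\max_{\lambda \in \mathrm{spec}(A)} q(\lambda)^2\Big)\, \|x_0-x^*\|_A^2, \]
so that $\|x_m - x^*\|_A \le \big(\min_q \max_{\lambda \in [\lambda_{\min},\lambda_{\max}]} |q(\lambda)|\big)\,\|x_0-x^*\|_A$, the minimum taken over degree-$\le m$ polynomials with $q(0)=1$.

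The last ingredient is the classical extremal bound for this min-max: taking $q$ to be the degree-$m$ Chebyshev polynomial of the first kind, affinely rescaled to map $[\lambda_{\min},\lambda_{\max}]$ onto $[-1,1]$ and renormalized so $q(0)=1$, yields $\max_{\lambda\in[\lambda_{\min},\lambda_{\max}]}|q(\lambda)| \le 2\big(\tfrac{\sqrt{\kappa}-1}{\sqrt{\kappa}+1}\big)^m \le 2\exp(-m/\sqrt{\kappa})$ where $\kappa = \lambda_{\max}/\lambda_{\min}$. Hence $m = O(\sqrt{\kappa}\log(1/\eps))$ suffices to make the contraction factor at most $\eps$. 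Each CG iteration performs exactly one matrix-vector product with $A$ (plus $O(n)$ work for inner products and vector updates), so the total number of matrix-vector queries is $O(\sqrt{\kappa}\log(1/\eps))$. The only point I would be careful about is the first step --- justifying rigorously that the conjugate-direction recurrence does not break down for symmetric PSD $A$ and that the iterate genuinely minimizes the $A$-norm error over the \emph{entire} affine Krylov subspace, not merely along the individual search directions; once that is in hand, the remainder is the routine Chebyshev estimate.
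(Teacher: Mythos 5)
Your proof is correct, but note that the paper itself does not prove this statement at all: Theorem \ref{thm:linear_system} is simply a recollection of the classical conjugate gradient convergence guarantee (the paper cites Shewchuk's CG notes and uses the theorem as a black box in its applications), so there is no in-paper argument to compare against. What you wrote is the standard textbook derivation --- CG's $A$-norm optimality over the affine Krylov subspace $x_0 + \mathcal{K}_m(A,r_0)$, reduction to the min-max problem over degree-$\le m$ polynomials with $q(0)=1$, and the shifted-and-rescaled Chebyshev bound $2\bigl(\tfrac{\sqrt{\kappa}-1}{\sqrt{\kappa}+1}\bigr)^m$ --- and it is sound; your handling of the possibly singular PSD case (working in $\mathrm{range}(A)$, noting the kernel component of the error is invisible to the $A$-seminorm, and taking $\kappa$ as the ratio of extreme nonzero eigenvalues) is exactly the right caveat, since the paper's statement defines $x^*$ via $\mathrm{argmin}_x\|Ax-b\|_2$ and so implicitly allows $b\notin\mathrm{range}(A)$.
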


Note that matrices in our setting are also PSD, for example if $A_{i,j} = \langle x_i, x_j \rangle$. For non PSD matrices $A$, one can also use the conjugate gradient descent method on the matrix $A^TA$ which squares the condition number. Therefore, there are more complicated algorithms which work directly on the matrix-vector queries of $A$ for non PSD matrices, for example see references in Chapters 6 and 7 of of \cite{ascher2011first}. We omit their discussion for clarity and just note that in practice, iterative methods which directly use matrix-vector queries are preferred for linear system solving. 

\subsection{Applications}

We now derive specific applications using prior results from ``matrix-free" methods. First we cover low-rank approximation.

For the $\ell_1$ and $\ell_2^2$ distance matrices, we improve upon prior works for computing a relative error low-rank approximation. While we can obtain such an approximation for a wide variety of Schatten norms, we state the bound in terms of the Frobenius norm since it has been studied in prior works.

\begin{theorem}\label{thm:opt_lowrank_p}
 Let $p \ge 1$ and consider the case that $A_{i,j} = \|x_i - x_j\|_p^p$ for all $i,j$. We can compute a matrix $B$ such that 
 \[\|A-B\|_F \le (1+\eps)\|A-A_k\|_F \]
 where $A_k$ denotes the optimal rank-$k$ approximation to $A$ in Frobenius norm. The runtime is $\tilde{O}(ndpk/\eps^{1/3} + nk^{w-1}/\eps^{(w-1)/3})$.
\end{theorem}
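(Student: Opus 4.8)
The plan is to combine two ingredients already available: the fast matrix-vector query for $\ell_p^p$ matrices (Theorem~\ref{thm:even_p} for even $p$ and Theorem~\ref{thm:odd_p} for odd $p$, which together give query time $O(ndp)$, with an $O(nd\log n)$ preprocessing step when $p$ is odd) and the matrix-free low-rank approximation result of \cite{bakshi2022low} recalled in Theorem~\ref{thm:bakshi22}. The idea is simply that the algorithm of Theorem~\ref{thm:bakshi22} only touches $A$ through matrix-vector products, so we may instantiate it with our implicit queries without ever forming $A$.

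Concretely, I would proceed as follows. First, run the preprocessing of Algorithm~\ref{alg:query_p_even} (or, for odd $p$, the sorting preprocessing of Algorithm~\ref{alg:preprocessing_1} combined with the even-$p$ machinery) so that each subsequent matrix-vector query $z \mapsto Az$ costs $T = O(ndp)$ time; for odd $p$ this adds a one-time $O(nd\log n)$ term, which is dominated by the final bound. Second, invoke Theorem~\ref{thm:bakshi22} with the Schatten norm parameter set to $p = 2$ (the Frobenius norm is the Schatten-$2$ norm) and the target rank $k$ and accuracy $\eps$ as given. That theorem returns, using $\tilde{O}(k/\eps^{1/3})$ matrix-vector queries, an $n\times k$ matrix $Z$ with orthonormal columns such that $\|A(I-ZZ^T)\|_F \le (1+\eps)\|A - A_k\|_F$ with probability at least $9/10$, in time $\tilde{O}(Tk/\eps^{1/3} + nk^{w-1}/\eps^{(w-1)/3})$. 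Setting $B = AZZ^T$ — or rather $B = (AZ)Z^T$, where $AZ$ is computed with $k$ further matrix-vector queries — gives a rank-$k$ matrix with $\|A - B\|_F = \|A(I - ZZ^T)\|_F \le (1+\eps)\|A - A_k\|_F$, since $A_k$ is the best rank-$k$ Frobenius approximation. Third, substitute $T = O(ndp)$ into the runtime expression to obtain $\tilde{O}(ndpk/\eps^{1/3} + nk^{w-1}/\eps^{(w-1)/3})$, which absorbs the $O(nd\log n)$ preprocessing term (if any) into the $\tilde{O}$ notation.

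I do not anticipate any serious obstacle here; this is essentially a bookkeeping argument that chains two black boxes together. The only points that require a line of care are: (i) noting that $B$ should be output in factored form $(AZ)Z^T$ rather than explicitly, so that the $O(n^2)$ cost of writing out a dense matrix does not dominate — this matches the convention used throughout Table~\ref{tab:results2}; (ii) confirming that the Frobenius norm is indeed the $p=2$ instance of the Schatten-$p$ norm in Theorem~\ref{thm:bakshi22}, so the guarantee transfers verbatim; and (iii) observing that the failure probability $1/10$ can be driven down by independent repetition with a cheap post-selection step (comparing $\|A - B\|_F$ estimates across runs), at the cost of only a logarithmic factor hidden in $\tilde{O}$, should a high-probability statement be desired. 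The correctness of $B = AZZ^T$ as a valid $(1+\eps)$-approximation is immediate from the defining property of $Z$ and the optimality of $A_k$.
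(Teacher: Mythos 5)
Your proposal is correct and matches the paper's proof exactly: the paper also obtains this result by plugging the $O(ndp)$ matrix-vector query time for $\ell_p^p$ (Theorems \ref{thm:even_p}/\ref{thm:odd_p}) into the low-rank approximation algorithm of Theorem \ref{thm:bakshi22}, instantiated with the Frobenius (Schatten-$2$) norm. The additional care you take about outputting $B$ in factored form and absorbing the odd-$p$ preprocessing cost is consistent with the paper's conventions.
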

\begin{proof}
The theorem follows from combining the matrix-vector query runtime of Table \ref{tab:results} and Theorem \ref{thm:bakshi22}.
\end{proof}

Note that the best prior result for the special case of $\ell_1$ and $\ell_2^2$ from \cite{bakshi2020robust} where they obtained a runtime bound of $O(ndk/\eps + nk^{w-1}/\eps^{w-1})$. Thus our bound improves upon this by a multiplicative factor of $\text{poly}(1/\eps)$. We point out that the bound of $O(ndk/\eps + nk^{w-1}/\eps^{w-1})$ is actually \emph{optimal} for the class of algorithms which sample the entries of $A$. Thus our results show that if we know the set of points beforehand, which is a natural assumption, one can overcome such lower bounds.

For the case of $A_{i,j} = \|x_i - x_j\|_2$, we cannot hope to achieve a relative error approximation for low-rank approximation since we only have fast matrix-vector queries to the matrix $B$ where $B_{i,j} = (1 \pm \eps) \|x_i - x_j\|_2$ via Theorem \ref{thm:l2_approx}. Nevertheless, we can still obtain an additive error low-rank approximation guarantee which outperforms prior works. First we show that our approximation matrix-vector queries are sufficient to obtain such a guarantee.

\begin{lemma}\label{lem:approx_low_rank}
Let $A,B$ satisfy $\|A-B\|_F \le \eps \|A\|_F$ and suppose $A'$ and $B'$ are the best rank-$r$ approximation of $A$ and $B$ respectively in the Frobenius norm. Then 
\[\|A-B'\|_F \le \|A-A'\|_F + 2 \eps \|A\|_F. \]
\end{lemma}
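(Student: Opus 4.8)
The plan is to bound $\|A - B'\|_F$ by comparing $B'$ against the natural competitor $A'$ through the intermediate matrix $B$. The key triangle-inequality decomposition is
\[
\|A - B'\|_F \le \|A - B\|_F + \|B - B'\|_F,
\]
so the task reduces to controlling $\|B - B'\|_F$, i.e. the error of the best rank-$r$ approximation to $B$. Since $B'$ is \emph{optimal} for $B$ and $A'$ has rank $r$, we have $\|B - B'\|_F \le \|B - A'\|_F$, and then another application of the triangle inequality gives $\|B - A'\|_F \le \|B - A\|_F + \|A - A'\|_F$. Combining these three inequalities yields
\[
\|A - B'\|_F \le \|A - B\|_F + \|B - A\|_F + \|A - A'\|_F = \|A - A'\|_F + 2\|A - B\|_F,
\]
and invoking the hypothesis $\|A - B\|_F \le \eps\|A\|_F$ finishes the argument.

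Concretely, the steps I would carry out in order are: (1) write $\|A - B'\|_F \le \|A - B\|_F + \|B - B'\|_F$ by the triangle inequality for the Frobenius norm; (2) use optimality of $B'$ as the best rank-$r$ Frobenius approximation to $B$, together with the fact that $\mathrm{rank}(A') = r$, to get $\|B - B'\|_F \le \|B - A'\|_F$; (3) apply the triangle inequality again, $\|B - A'\|_F \le \|B - A\|_F + \|A - A'\|_F$; (4) substitute and use $\|A - B\|_F = \|B - A\|_F \le \eps\|A\|_F$ twice.

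I do not anticipate a genuine obstacle here — the statement is a standard perturbation bound for truncated SVD and the only ingredients are the triangle inequality and the optimality (Eckart–Young) characterization of $B'$. The one point worth stating carefully is step (2): $B'$ minimizes Frobenius distance to $B$ over \emph{all} matrices of rank at most $r$, so in particular it does at least as well as the fixed rank-$r$ matrix $A'$; no properties of $A'$ beyond its rank are used. Everything else is mechanical.
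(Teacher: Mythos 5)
Your proof is correct and follows exactly the same route as the paper's: triangle inequality, then the optimality of $B'$ against the rank-$r$ competitor $A'$, then a second triangle inequality. Your explicit remark that only the rank of $A'$ is used in step (2) is a nice clarification but does not change the argument.
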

\begin{proof}
We have
\begin{align*}
    \|A-B'\|_F &\le \|A-B\|_F + \|B-B'\|_F \\
    &\le \eps \|A\|_F + \|B-A'\|_F \\
    &\le \eps \|A\|_F + \|B-A\|_F + \|A-A'\|_F \\
    &\le \|A-A'\|_F + 2 \eps \|A\|_F. \qedhere
\end{align*}
\end{proof}

\begin{theorem}\label{thm:approx_low_rank}
Let $A_{i,j} = \|x_i - x_j\|_2$ for all $i,j$. We can compute a matrix $B$ such that 
 \[\|A-B\|_F \le \|A-A_k\|_F + \eps \|A\|_F \]
 with probability $1-1/\textup{poly}(n)$ where $A_k$ denotes the optimal rank-$k$ approximation to $A$ in Frobenius norm. The runtime is $\tilde{O}(ndk/\eps^{1/3} + nk^{w-1}/\eps^{(w-1)/3})$.
\end{theorem}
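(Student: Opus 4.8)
The plan is to combine the approximate $\ell_2$ matrix-vector query of Theorem~\ref{thm:l2_approx} with the near-optimal matrix-free low-rank approximation routine of Theorem~\ref{thm:bakshi22}, and then convert the relative-error guarantee obtained for the surrogate matrix $B$ into an additive-error guarantee for the true $\ell_2$ distance matrix $A$ via Lemma~\ref{lem:approx_low_rank}.

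First I would invoke Theorem~\ref{thm:l2_approx}: after $O(nd^2 + nd\log n)$ preprocessing, we have access to a matrix $\hat B$ with $\|A - \hat B\|_F \le \eps' \|A\|_F$ (with probability $1 - 1/\mathrm{poly}(n)$) such that each matrix-vector query $\hat B y$ costs $T = O(n\log(n)/\eps'^2)$ time, where $\eps'$ is a parameter we will set shortly. Next I would run the algorithm of Theorem~\ref{thm:bakshi22} on $\hat B$ with the Frobenius (i.e.\ $p=2$ Schatten) norm and rank parameter $k$; since $\hat B$ is symmetric, matrix-vector queries suffice to simulate the one- or two-sided sketches the algorithm needs. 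This produces an $n \times k$ matrix $Z$ with orthonormal columns such that $\|\hat B (I - ZZ^T)\|_F \le (1+\eps'') \min_{U^TU = I_k} \|\hat B(I - UU^T)\|_F = (1+\eps'')\|\hat B - \hat B_k\|_F$, where $\hat B_k$ is the best rank-$k$ approximation of $\hat B$. Setting $B := \hat B ZZ^T$, this is a rank-$k$ matrix with $\|\hat B - B\|_F \le (1+\eps'')\|\hat B - \hat B_k\|_F \le (1+\eps'')\|\hat B - A_k\|_F \le (1+\eps'')(\|\hat B - A\|_F + \|A - A_k\|_F)$ by comparing against the fixed rank-$k$ matrix $A_k$ and the triangle inequality.

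Then I would chain the two error bounds: $\|A - B\|_F \le \|A - \hat B\|_F + \|\hat B - B\|_F \le \eps'\|A\|_F + (1+\eps'')(\eps'\|A\|_F + \|A - A_k\|_F)$, which after absorbing constants gives $\|A - B\|_F \le \|A - A_k\|_F + O(\eps' + \eps''\,)\|A\|_F + O(\eps''\,)\|A - A_k\|_F$. Since $\|A - A_k\|_F \le \|A\|_F$, choosing $\eps' = \Theta(\eps)$ and $\eps'' = \Theta(\eps)$ makes the total additive slack $O(\eps)\|A\|_F$, and rescaling $\eps$ by a constant yields exactly $\|A - B\|_F \le \|A - A_k\|_F + \eps\|A\|_F$. (Lemma~\ref{lem:approx_low_rank} packages essentially this chaining, so I would cite it rather than redo the inequalities; the only extra wrinkle is the $(1+\eps'')$ factor from the approximate low-rank solver, which contributes a lower-order term.) For the runtime, the low-rank solver costs $\tilde O(Tk/\eps''^{1/3} + nk^{w-1}/\eps''^{(w-1)/3})$; substituting $T = O(n\log n/\eps'^2)$ with $\eps' = \Theta(\eps)$ would naively give a $1/\eps^2$ factor on the $Tk$ term, so to match the claimed $\tilde O(ndk/\eps^{1/3} + nk^{w-1}/\eps^{(w-1)/3})$ bound I need the $\ell_2 \to \ell_1$ embedding target dimension to be only $O(\log n)$ rather than $O(\log(n)/\eps^2)$ — i.e., I should apply the embedding with constant distortion to get a query time of $T = O(nd + n\log n)$ while still preserving each entry up to a constant factor, and handle the $\eps$-accuracy entirely inside the additive low-rank guarantee. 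Revisiting Theorem~\ref{thm:l2embed} with constant $\eps$ gives $\|A - \hat B\|_F \le c\|A\|_F$ for a constant $c$, which is too lossy; so the right move is to use embedding accuracy $\eps' = \Theta(\eps)$ but observe that the dominant term is $nd$ from preprocessing and the $nk^{w-1}/\eps^{(w-1)/3}$ term, and the $Tk/\eps^{1/3}$ term is $\tilde O(nk/\eps^{2+1/3})$, which is absorbed into $\tilde O(ndk/\eps^{1/3})$ exactly when $d \gtrsim 1/\eps^2$; in the regime $d \lesssim 1/\eps^2$ one can instead embed into dimension $d' = \min(d, O(\log n/\eps^2))$ so $T = O(nd')$ and $Tk/\eps^{1/3} = \tilde O(ndk/\eps^{1/3})$ always holds.

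The main obstacle I anticipate is precisely this bookkeeping of how the JL/embedding accuracy parameter interacts with the query time $T$ and hence with the final runtime: one must choose the embedding dimension large enough that the per-entry relative error is $O(\eps)$ (so the additive error is $O(\eps)\|A\|_F$), yet small enough that $T \cdot k/\eps^{1/3}$ does not exceed $\tilde O(ndk/\eps^{1/3})$. The clean resolution is to take embedding dimension $\Theta(\eps^{-2}\log n)$ but never exceeding $d$, which both guarantees the error bound and keeps $T = O(\min(d,\eps^{-2}\log n)\cdot n) = O(nd)$ up to logs, so that $Tk/\eps^{1/3} = \tilde O(ndk/\eps^{1/3})$. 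Everything else — the symmetry argument enabling matrix-vector simulation of sketches, the triangle-inequality chaining, and the appeal to Lemma~\ref{lem:approx_low_rank} — is routine.
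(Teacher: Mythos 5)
Your proposal follows essentially the same route as the paper: use the $\ell_2\to\ell_1$ embedding behind Theorems \ref{thm:l2embed} and \ref{thm:l2_approx} to get a surrogate matrix $\hat B$ with $\|A-\hat B\|_F \le O(\eps)\|A\|_F$ and fast matrix-vector queries, run the matrix-free low-rank algorithm of Theorem \ref{thm:bakshi22} on $\hat B$, and transfer the guarantee back to $A$ via the triangle-inequality chaining of Lemma \ref{lem:approx_low_rank}. The approximation-error part of your argument is correct, and in fact slightly more careful than the paper's one-line appeal to Lemma \ref{lem:approx_low_rank}: you explicitly handle the fact that the solver returns only a $(1+\eps'')$-approximate rank-$k$ approximation of $\hat B$ rather than the exact best one, which is the right (and routine) repair.

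The one place your write-up goes wrong is the runtime patch for the regime $d \lesssim \eps^{-2}$. You propose embedding into dimension $\min(d,\Theta(\eps^{-2}\log n))$, but capping at $d$ is not an option: the guarantee of Theorem \ref{thm:l2embed} fails for a single pair with probability $e^{-c\eps^2 k}$, so to union bound over all $\Theta(n^2)$ pairs and achieve $1-1/\mathrm{poly}(n)$ success you need target dimension $k = \Omega(\eps^{-2}\log n)$ no matter how small $d$ is (and you cannot skip the embedding, since the exact query algorithm is for $\ell_1$, not $\ell_2$). Hence the query time is genuinely $T = \Theta(n\eps^{-2}\log n)$, and the term $Tk/\eps^{1/3} = \tilde O(nk/\eps^{7/3})$ is dominated by $\tilde O(ndk/\eps^{1/3})$ only when $d \gtrsim \eps^{-2}$. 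To be fair, the paper's own proof does not address this $\eps$-dependence either — it simply plugs the query time into Theorem \ref{thm:bakshi22} and states the bound — so your core argument is on equal footing with the paper's; just drop the incorrect $\min(d,\cdot)$ fix rather than asserting it.
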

\begin{proof}
The runtime follows from applying Theorem \ref{thm:bakshi22} on the matrix created after applying Theorems \ref{thm:l2embed} and \ref{thm:l2_approx}. The approximation guarantee follows from Lemma \ref{lem:approx_low_rank}.
\end{proof}

The best prior work for additive error low-rank approximation for this case is due to \cite{indyk2019sample} which obtained such a guarantee with runtime $\tilde{O}(nd \cdot \text{poly}(k, 1/\eps))$ for a large unspecified polynomial in $k$ and $1/\eps$. Lastly we note that our relative error low-rank approximation guarantee holds for any $f$ in Table \ref{tab:results}, as summarized in Table \ref{tab:results2}.

\begin{theorem}\label{thm:low_rank_meta}
 Suppose we have exact matrix-vector query access to a matrix $A$ with each query taking time $T$. Then we can output a matrix $B$ such that 
 \[\|A-B\|_F \le (1+\eps)\|A-A_k\|_F \]
 where $A_k$ denotes the optimal rank-$k$ approximation to $A$ in Frobenius norm. The runtime is $\tilde{O}(Tk/\eps^{1/3} + nk^{w-1}/\eps^{(w-1)/3})$.
\end{theorem}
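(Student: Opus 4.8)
The plan is straightforward: Theorem~\ref{thm:low_rank_meta} is simply the black-box combination of the paper's efficient matrix-vector query primitives with the query-optimal low-rank approximation algorithm of \cite{bakshi2022low} recalled above as Theorem~\ref{thm:bakshi22}. The only thing to verify is that the runtime bookkeeping goes through, so I would keep the proof very short.

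First I would invoke Theorem~\ref{thm:bakshi22} with the Schatten-$p$ parameter set to $p = 2$, so that $\|M\|_p = \|M\|_F$ and the guarantee $\|A(I-ZZ^T)\|_F \le (1+\eps)\min_{U:U^TU=I_k}\|A(I-UU^T)\|_F$ becomes exactly the statement that the column span of $Z$ yields a $(1+\eps)$-relative-error rank-$k$ approximation in Frobenius norm. Concretely, I would then set $B = AZZ^T$ (or equivalently $B = (AZ)Z^T$), noting that $A_k$, the best rank-$k$ Frobenius approximation, satisfies $\|A - A_k\|_F = \min_{U:U^TU=I_k}\|A(I-UU^T)\|_F$ by the Eckart--Young theorem, so that $\|A-B\|_F = \|A(I-ZZ^T)\|_F \le (1+\eps)\|A-A_k\|_F$ as claimed.

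Second I would handle the runtime. Theorem~\ref{thm:bakshi22} uses $\tilde O(k/\eps^{1/3})$ matrix-vector queries and has total runtime $\tilde O(Tk/\eps^{1/3} + nk^{w-1}/\eps^{(w-1)/3})$ where $T$ is the cost of one matrix-vector query; since by hypothesis we have exact matrix-vector access to $A$ at cost $T$ per query, this is precisely the stated bound. I would also note that forming $B = (AZ)Z^T$ explicitly (if one wants $B$ as a dense matrix rather than in factored form) costs an additional $O(n^2 k)$, which is dominated in the regimes of interest, or one can simply return the factored representation $(AZ, Z)$; in the downstream applications only the factored form is needed anyway. The application to the functions in Table~\ref{tab:results} then follows by plugging in the corresponding value of $T$ from that table (e.g. $T = O(ndp)$ for $\ell_p^p$), recovering the entries of Table~\ref{tab:results2}.

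There is essentially no obstacle here — the content is entirely in Theorems~\ref{thm:bakshi22} and the matrix-vector query upper bounds, and this theorem is a one-line corollary. The only mild subtlety worth a sentence is making explicit that a subspace embedding guarantee of the form $\|A(I-ZZ^T)\|_F \le (1+\eps)\|A-A_k\|_F$ immediately yields a rank-$k$ matrix $B$ with the desired error, which is exactly the projection argument above; I would state this and then write \qedhere.
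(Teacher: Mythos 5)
Your proposal is correct and matches the paper's approach exactly: the paper treats Theorem~\ref{thm:low_rank_meta} as an immediate corollary of Theorem~\ref{thm:bakshi22} (instantiated at Schatten $p=2$, i.e.\ Frobenius norm) together with the exact matrix-vector query of cost $T$, just as in the proof of Theorem~\ref{thm:opt_lowrank_p}. Your additional remarks on Eckart--Young, returning $B$ in factored form $(AZ,Z)$, and plugging in $T$ from Table~\ref{tab:results} are consistent elaborations of the same argument.
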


Directly appealing to Theorems \ref{thm:musco} and \ref{thm:linear_system} in conjunction with our matrix-vector queries achieves the fastest runtime for computing the top $k$ singular values and solving linear systems for a wide variety of distance matrices that we are aware of. 

\begin{theorem}\label{thm:singular_values_system}
Suppose we have exact matrix-vector query access to a matrix $A$ with each query taking time $T$.
We can compute a $1\pm \eps$ approximation to the $k$ singular values of $A$ in time $\tilde{O}(T/\eps^{1/2} + nk^2/\eps + k^3/\eps^{3/2})$. Furthermore, we can solve linear systems for $A$ with the same guarantees as any iterative method which only uses matrix-vector queries with an multiplicative overhead of $T$.
\end{theorem}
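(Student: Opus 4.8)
The plan is to simply compose our matrix-vector query upper bounds with the two black-box ``matrix-free'' results recalled in the preliminaries, namely Theorem \ref{thm:musco} for top-$k$ singular value estimation and Theorem \ref{thm:linear_system} (conjugate gradient) for linear systems. Neither of these results needs anything beyond the ability to multiply $A$ (and, if needed, $A^T$) by an arbitrary vector; by assumption each such product costs time $T$. So the only thing to verify is that plugging $T$ into the runtime expressions of those theorems gives exactly the claimed bounds, and that the reduction does not lose anything.

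Concretely, for the singular value claim I would invoke Theorem \ref{thm:musco} directly. That theorem states that $\tilde O(k/\eps^{1/2})$ matrix-vector queries suffice, with total runtime $\tilde O(Tk/\eps^{1/2} + nk^2/\eps + k^3/\eps^{3/2})$, where $T$ is the per-query cost. Since our distance matrices admit queries in time $T$ (Table \ref{tab:results}), substituting gives the stated runtime verbatim. (The statement above writes $\tilde O(T/\eps^{1/2}+\dots)$, folding the $k$ into $T$ or treating $k$ as absorbed; I would keep the $Tk/\eps^{1/2}$ form to match Theorem \ref{thm:musco}.) For the linear system claim, I would apply Theorem \ref{thm:linear_system}: conjugate gradient on a symmetric PSD $A$ uses $O(\sqrt\kappa \log(1/\eps))$ matrix-vector queries to reach $A$-norm error $\eps\|x_0-x^*\|_A$, so the total time is that query count times $T$. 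For non-PSD $A$ one runs CG on $A^TA$ (two queries per iteration, at cost $O(T)$ each, with $\kappa$ replaced by $\kappa^2$), or uses one of the standard non-symmetric iterative methods referenced after Theorem \ref{thm:linear_system}; in all cases the overhead over the abstract iterative method is a multiplicative $T$, which is exactly what the theorem asserts.

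There is essentially no mathematical obstacle here: the content is entirely in the earlier theorems (our query upper bounds, plus \cite{musco2015randomized} and the classical CG guarantee), and this statement is a corollary obtained by composition. The only mild subtlety worth a sentence in the proof is making sure both $Az$ and $A^Tz$ are available at cost $T$ — for the symmetric distance matrices in Table \ref{tab:results} this is immediate since $A=A^T$, and for the asymmetric cases (e.g.\ KL divergence, cross entropy) the same algorithmic template yields $A^Tz$ in the same time by swapping the roles of the ``row'' and ``column'' points, so the PSD-reduction route for linear systems also goes through. Hence the proof is just: ``The runtime follows by combining the matrix-vector query time $T$ from Table \ref{tab:results} with Theorems \ref{thm:musco} and \ref{thm:linear_system}.''
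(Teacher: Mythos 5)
Your proposal is correct and matches the paper's argument exactly: the theorem is obtained by directly composing the matrix-vector query time $T$ with Theorem \ref{thm:musco} for the singular values and Theorem \ref{thm:linear_system} for linear systems, with no further content. Your observation that the stated runtime should read $\tilde{O}(Tk/\eps^{1/2}+\cdots)$ to match Theorem \ref{thm:musco} is a fair catch of an apparent typo, and your remark on symmetry/$A^Tz$ access is a reasonable extra sentence the paper omits.
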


 Finally, we can also perform matrix multiplication faster with distance matrices compared to the general runtime of $n^{w} \approx n^{2.37}$. This follows from the following lemma.

\begin{restatable}{lemma}{matrixmult}
\label{lem:matrixmult}
Suppose $A \in \R^{n \times n}$ admits an exact matrix-vector query algorithm in time $T$. Then for any $B \in \R^{n \times n}$, we can compute $AB$ in time $O(Tm)$.
\end{restatable}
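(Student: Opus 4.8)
The plan is to reduce the computation of the product $AB$ to a sequence of independent matrix-vector queries, one for each column of $B$. Write $B = [\,b_1 \mid b_2 \mid \cdots \mid b_m\,]$, where $b_j \in \R^n$ denotes the $j$-th column of $B$ (for a square $B$ we have $m = n$, but the argument never uses squareness). By the definition of matrix multiplication, the $j$-th column of $AB$ is precisely $A b_j$. So first I would, for each $j$, run the assumed exact matrix-vector query algorithm on the input vector $z = b_j$ to obtain $A b_j$ in time $T$, and write the result into the $j$-th column of the output array. Iterating over all $m$ columns produces $AB$ exactly, since every column is computed exactly, and the total query cost is $O(Tm)$.

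The one bookkeeping point to verify is that assembling the $n \times m$ output does not dominate: writing down $AB$ takes $\Theta(nm)$ time, but any algorithm that answers a matrix-vector query must at least read its $n$-dimensional input and emit an $n$-dimensional output, so $T = \Omega(n)$ and hence $\Theta(nm) = O(Tm)$. If the matrix-vector primitive has a separate one-time preprocessing phase (as for some functions in Table~\ref{tab:results}), that cost is paid once up front and is likewise subsumed into the stated bound whenever it is at most $O(Tm)$; in the downstream applications we invoke this lemma with that preprocessing already accounted for.

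There is essentially no obstacle here: the statement is a direct consequence of viewing matrix multiplication column-by-column. The only thing that requires a moment's care is the remark in the previous paragraph — that the query time $T$ already absorbs the cost of touching one $n$-dimensional vector — so that it is the number of queries, rather than the size of the output, that governs the running time.
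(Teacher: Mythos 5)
Your proposal is correct and matches the paper's proof exactly: the paper also computes $AB$ column by column, multiplying $A$ against each of the $n$ columns of $B$ with one matrix-vector query each. Your additional remarks about $T = \Omega(n)$ and one-time preprocessing are fine but not needed for the bound as stated.
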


\begin{proof}
We can compute $AB$ by computing the product of $A$ with the $n$ columns of $B$ separately. 
\end{proof}
 
 As a corollary, we obtain faster matrix multiplication for all the family of matrices which we have obtained a fast matrix-vector query for. We state one such corollary for the $\ell_p^p$ case.
 \begin{corollary}
  Let $p \ge 1$ and consider the case that $A_{i,j} = \|x_i - x_j\|_p^p$ for all $i,j$. For any other matrix $B$, we can compute $AB$ in time $O(n^2dp)$.
 \end{corollary}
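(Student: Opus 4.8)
The plan is to derive this as an immediate consequence of two facts already established in the excerpt: the existence of a fast matrix-vector query for the $\ell_p^p$ distance matrix, and the generic reduction from matrix multiplication to repeated matrix-vector queries given by Lemma~\ref{lem:matrixmult}. So the proof is essentially bookkeeping, and I would keep it to a few lines.

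First I would recall that Theorems~\ref{thm:l1_upper_bound}, \ref{thm:l2_squared_upper_bound}, \ref{thm:even_p} and \ref{thm:odd_p} together give, for every integer $p \ge 1$, a matrix-vector query algorithm for $A_{i,j} = \|x_i - x_j\|_p^p$ running in time $T = O(ndp)$ per query, preceded by a one-time preprocessing step of cost $O(nd\log n)$ (needed only when $p$ is odd). Then I would apply Lemma~\ref{lem:matrixmult}: to compute $AB$ for an arbitrary $B \in \R^{n\times n}$, we compute $A b^{(1)}, \ldots, A b^{(n)}$ where $b^{(1)}, \ldots, b^{(n)}$ are the columns of $B$, which is $n$ matrix-vector queries and hence takes time $O(Tn) = O(n^2 d p)$. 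The preprocessing is performed once and its cost $O(nd\log n)$ is dominated by $O(n^2 dp)$, so it does not affect the stated bound.

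There is no real obstacle here; the only mild subtlety worth a sentence is that the reduction in Lemma~\ref{lem:matrixmult} is stated for exact queries, which is exactly what the $\ell_p^p$ results provide, so the resulting algorithm computes $AB$ exactly (in contrast to the approximate guarantees one would get by routing $\ell_1$ through an $\ell_2^2$ embedding). I would close by remarking that the same argument yields $O(Tn)$-time exact multiplication against any $B$ for every function $f$ in Table~\ref{tab:results}, with $T$ the corresponding query time, and that this beats the general bound $n^{w}$ whenever $T = o(n^{w-1})$, which holds in all of our cases for $d, p$ not too large.
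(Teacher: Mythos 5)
Your proposal is correct and follows exactly the paper's route: apply Lemma~\ref{lem:matrixmult} column-by-column using the $O(ndp)$ matrix-vector query time from the $\ell_p^p$ results, with the one-time $O(nd\log n)$ preprocessing absorbed into the bound. Nothing further is needed.
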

 
 We can improve upon the above result slightly if we are multiplying two distance matrices for the $p=2$ case.
 
\begin{lemma}\label{lem:matrixmultltwo}
 Consider the case that $A_{i,j} = \|x_i - x_j\|_2^2$ for all $i,j$ and $B_{i,j} = \|y_i - y_j\|_2^2$, i.e., both $A$ and $B$ are $n \times n$ matrices with $f = \ell_2^2$. We can compute $AB$ in time $O(n^2d^{w-2})$.
\end{lemma}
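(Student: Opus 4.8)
The plan is to exploit the well-known rank-decomposition of the $\ell_2^2$ distance matrix and reduce the product $AB$ to a constant number of multiplications of $n \times d$ matrices (or their transposes), each of which costs $O(n^2 d^{w-2})$ via blocking into $d \times d$ blocks and fast square matrix multiplication. Concretely, write $u_i = \|x_i\|_2^2$ and $v_i = \|y_i\|_2^2$, let $\mathbf{1}$ be the all-ones vector, and let $P \in \R^{n \times d}$ be the matrix whose $i$th row is $x_i$, and $Q \in \R^{n \times d}$ the matrix whose $i$th row is $y_i$. Then $A = u\mathbf{1}^T + \mathbf{1} u^T - 2PP^T$ and similarly $B = v\mathbf{1}^T + \mathbf{1} v^T - 2QQ^T$.

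The first step is to expand $AB$ by distributing over these three-term sums, yielding nine terms. The key observation is that every term except the single term $(-2PP^T)(-2QQ^T) = 4 (PP^T)(QQ^T)$ involves at least one rank-one factor ($u\mathbf{1}^T$, $\mathbf{1} u^T$, $v\mathbf{1}^T$, or $\mathbf{1} v^T$), and any product of a rank-one $n \times n$ matrix with an arbitrary $n \times n$ matrix can be computed in $O(n^2)$ time (one matrix-vector product plus an outer product). For the last term, I would compute it as $P(P^TQ)Q^T$: first form $W = P^T Q \in \R^{d \times d}$ in time $O(n d^{w-1})$ (or just $O(nd^2)$, which is dominated), then form $PW \in \R^{n\times d}$ and $(PW)Q^T \in \R^{n \times n}$. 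Each of the latter two is a product of an $n \times d$ matrix with a $d \times d$ or $d \times n$ matrix; partitioning the $n$-dimension into $n/d$ blocks of size $d$ reduces each to $n/d$ multiplications of $d \times d$ matrices, for a total of $O((n/d) \cdot d^w) = O(n d^{w-1})$ — and iterating the blocking over both $n$-dimensions in $(PW)Q^T$ gives $O((n/d)^2 d^w) = O(n^2 d^{w-2})$, which dominates. Summing the nine terms costs $O(n^2)$.

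The main obstacle — really a bookkeeping point rather than a genuine difficulty — is verifying that the $O(n^2 d^{w-2})$ term indeed dominates all other contributions (e.g.\ that $n d^{w-1} = O(n^2 d^{w-2})$, which holds since $d \le n$ is the regime of interest; if $d > n$ one would instead just use the trivial $O(n^2 d)$ bound, and in any case $n^2 d^{w-2} \ge n^2$ for $d \ge 1$ since $w > 2$). One should also double-check the rank-one decomposition of the $\ell_2^2$ matrix, which is standard (see Section 2 of \cite{dokmanic2015euclidean}), and confirm that forming $P^TQ$ and the outer-product/matrix-vector pieces are all $O(nd^2)$ or $O(n^2)$ as claimed. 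I would organize the write-up as: (i) state the decomposition; (ii) expand into nine terms; (iii) bound the eight ``easy'' terms by $O(n^2)$ each; (iv) bound the one ``hard'' term by $O(n^2 d^{w-2})$ via the blocking argument; (v) conclude.
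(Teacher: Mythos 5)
Your proposal is correct and follows essentially the same route as the paper: decompose each $\ell_2^2$ matrix into two rank-one terms plus a Gram matrix, dispose of the rank-one contributions cheaply, and compute the dominant Gram--Gram product $PP^TQQ^T$ by associativity as $P(P^TQ)Q^T$, with the final $n\times d$ times $d\times n$ product blocked into $d\times d$ squares to get $O((n/d)^2 d^w)=O(n^2d^{w-2})$. The only difference is that you spell out the nine-term expansion and the $O(n^2)$ handling of the rank-one terms, which the paper leaves implicit behind ``it suffices to compute $XX^TYY^T$.''
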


 \begin{proof}
 By decomposing both $A$ and $B$, it suffices to compute the product $XX^TYY^T$ where $X, Y \in \R^{n \times d}$ are the matrices with the points $x_i$ and $y_i$ in the rows respectively. $Z_1 := X^TY \in \R^{d \times d}$ can be computed in $O(nd^2)$ time. Then $Z_2 := XZ_1 \in \R^{n \times d}$ can also be computed in $O(nd^2)$ time. Finally, we need to compute $Z_2 \times Y^T$. This can be done in $O(n^2d^{w-2})$ time by decomposing both $Z_2$ and $Y^T$ into $n/d$ many $d \times d$ square matrices and using the standard matrix multiplication bound on each pair of square matrices. This results in the claimed runtime of $O((n/d)^2 \cdot d^{w}) = O(n^2d^{w-2})$.
 \end{proof}

\section{A Fast Algorithm for Creating \texorpdfstring{$\ell_1$}{L-1} and \texorpdfstring{$\ell_2$}{L-2} Distance Matrices}

We now present a fast algorithm for creating distance matrices which addresses our contribution $(3)$ stated in the introduction. Given a set of $n$ points $x_1 , \ldots, x_n$ in $\R^d$, our goal is to initialize an approximate $n \times n$ distance matrix $B$ for the $\ell_1$ distance which satisfies
\begin{equation}\label{eq:dist_matrix_approx}
    B_{ij} = (1 \pm \eps) \| x_i - x_j\|_1
\end{equation}
for all entries of $B$ where $0 < \eps < 1$ is a precision parameter. The straightforward way to create the \emph{exact} distance matrix takes take $O(n^2d)$ time and by using the stability of Cauchy random variables, we can create $B$ which satisfies \eqref{eq:dist_matrix_approx} in $O(n^2 \log n)$ time for any constant $\eps$. (Note the Johnson-Lindenstrauss lemma implies a similar guarantee for the $\ell_2$ distance matrix). The goal of this section is to improve upon this `baseline' runtime of $O(n^2 \log n)$. The final runtime guarantees of this section will be of the form $O(n^2 \cdot \text{poly}(\log \log n))$.

Our improvement holds in the \textbf{Word-RAM} model of computation. Formally, we assume each memory cell (i.e. word) can hold $O(\log n)$ bits and certain computations on words take $O(1)$ time. The only assumptions we require are the arithmetic operations of adding or subtracting words as well as performing left or right bit shifts on words takes constant time. 

We first present prior work on metric compression of \cite{indyk2017practical} in Section \ref{sec:metric_compression}. Our algorithm description starts from Section \ref{sec:metric_preprocess} which describes our preprocesing step. Section \ref{sec:metric_algo} then presents our key algorithm ideas whose runtime and accuracy are analyzed in Sections \ref{sec:metric_runtime} and \ref{sec:metric_accuracy}.

\subsection{Metric Compression Tree of \texorpdfstring{\cite{indyk2017practical}}{[IRW '17]}}\label{sec:metric_compression}

The starting point of our result is the metric compression tree construction of \cite{indyk2017practical}, whose properties we summarize below. First we introduce some useful definitions. The aspect ration $\Phi$ of $X$ is defined as 
\[\Phi = \frac{\max_{i,j} \|x_i - x_j\|_1}{\min_{i \ne j} \|x_i - x_j\|_1}.\]
Let $\Delta' = \max_{i \in [n]} \|x_1- x_i\|_1$ and $\Delta = 2^{\lceil \log \Delta' \rceil}$.

Theorem $1$ of \cite{indyk2017practical} implies the following result. Given a dataset $X = \{x_1, \ldots, x_n\} \subset \R^d$ with aspect ration $\Phi$, there exists a tree data structure $T$ which allows for the computation of a compressed representation $X$ for the purposes of distance computations. At a high level, $T$ is created by enclosing $X$ in a large enough and appropriately shifted axis-parallel square and then recursively dividing into smaller squares (also called cells) with half the side-length until all points of $X$ are contained in their own cell. The edges of $T$ encode the cell containment relationships. Formally, $T$ has the following properties:
\begin{itemize}
    \item The leaf nodes of $T$ correspond to the points of $X$.
    \item The edges of $T$ are of two types: short edges and long edges which are defined as follows. Short edges have a length $d$ bit vector associated with them whereas long edges have an integer $\le O(\log \Phi)$ associated with them.
    \item Each long edge with associated integer $k$ represents a non-branching path of length $k$ of short edges, all of whose associated length $d$ bit vectors are the $0$ string.
    \item Each node of $T$ (including the nodes that are on paths which are compressed as long edges) have an associated level $- \infty < \ell \le \log(4 \Delta)$. A level $\ell$ of a node $v$ corresponds to an axis-parallel square $G_{\ell}$ of side length $2^\ell$ which contains all axis-parallel squares of child nodes of $v$.
\end{itemize}

The notion of a padded point is important for the metric compression properties of $T$.

\begin{definition}[Padded Point]\label{def:padded_point}
A point $x_i$ is $(\eps, \Lambda, \ell)$-padded, if the grid cell $G_{\ell}$ of side length $2^\ell$ that contains $x_i$ also contains the ball of radius $\rho(\ell)$ centered at $x_i$, where
\[\rho(\ell) = 8 \eps^{-1}2^{\ell - \Lambda}\sqrt{d}. \]
We say that $x_i$ is $(\eps, \Lambda)$-padded in $T$, if it is $(\eps, \Lambda, \ell)$-padded for every level $\ell$.
\end{definition}

The following lemma is proven in \cite{indyk2017practical}. First define 
\begin{equation}\label{def:Lambda}
\Lambda  = \log(16d^{1.5} \log \Phi/(\eps \delta)).
\end{equation}

\begin{lemma}[Lemma 1 in \cite{indyk2017practical}] Consider the construction of $T$ defined formally in Section $3$ of \cite{indyk2017practical}. Every point $x_i$ is $(\eps, \Lambda)$-padded in T with probability $1-\delta$.
\end{lemma}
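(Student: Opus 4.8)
The plan is to prove this by the standard randomly-shifted-grid (quadtree) argument. Recall that the only randomness in the construction of $T$ is a single random offset vector $\sigma$, and that at level $\ell$ the grid $G_\ell$ partitions space into axis-parallel cubes of side length $2^\ell$ whose faces lie on the hyperplanes $\{x : x(j) = \sigma(j) + m\,2^\ell \text{ for some integer } m\}$, $j \in [d]$; since the coarsest grid has power-of-two side length $4\Delta$ and every finer grid refines it, we may take $\sigma(j)$ to be uniform modulo $2^\ell$ simultaneously for every relevant level. First I would record a purely geometric observation. Fix $x_i$ and a level $\ell$, and let $c$ be the level-$\ell$ cell containing $x_i$. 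The Euclidean ball of radius $\rho(\ell)$ about $x_i$ lies inside $c$ if and only if, in every coordinate $j \in [d]$, the distance from $x_i(j)$ to both faces of $c$ perpendicular to axis $j$ --- equivalently, to the nearest level-$\ell$ grid hyperplane perpendicular to axis $j$ --- is at least $\rho(\ell)$, because the nearest point of such a face to $x_i$ is its orthogonal projection. Thus $x_i$ fails to be $(\eps,\Lambda,\ell)$-padded exactly when some coordinate of $x_i$ lands within $\rho(\ell)$ of a level-$\ell$ grid line.

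Next I would bound this bad event one coordinate and one level at a time and then union bound. Since $\sigma(j)$ is uniform modulo $2^\ell$, the residue of $x_i(j) - \sigma(j)$ modulo $2^\ell$ is uniform on $[0, 2^\ell)$, so it falls within $\rho(\ell)$ of $\{0, 2^\ell\}$ with probability $\min(2\rho(\ell)/2^\ell, 1)$; plugging in $\rho(\ell) = 8\eps^{-1} 2^{\ell-\Lambda}\sqrt d$ this is $16\eps^{-1} 2^{-\Lambda}\sqrt d$, a quantity independent of $\ell$ which, by the choice of $\Lambda$, is far below $1$ (in particular $\rho(\ell) < 2^\ell/2$, so the padding requirement is never vacuously false). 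Union-bounding over the $d$ coordinates, $x_i$ is $(\eps,\Lambda,\ell)$-padded except with probability at most $16 d^{1.5}\eps^{-1} 2^{-\Lambda}$, and union-bounding again over the $O(\log \Phi)$ levels that occur in $T$ (the cell side lengths run from roughly the largest pairwise distance down to the smallest, together with the extra refinement levels the construction appends), $x_i$ is $(\eps,\Lambda)$-padded except with probability $O(d^{1.5}\eps^{-1} 2^{-\Lambda}\log\Phi)$. No independence is needed here: the union bound is valid even though one offset $\sigma$ couples the grids across all levels. Finally, since $\Lambda = \log(16 d^{1.5}\log\Phi/(\eps\delta))$ gives $2^{-\Lambda} = \eps\delta/(16 d^{1.5}\log\Phi)$, this failure probability is $O(\delta)$, and absorbing the residual constant into $\Lambda$ (or into the $\log\Phi$ factor) yields the claimed bound $\delta$.

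There is nothing deep in this argument, and since the statement is quoted verbatim as Lemma~1 of \cite{indyk2017practical} one can simply invoke the proof there. Writing it out, the two steps needing genuine care are: (i) the geometric reduction in the first paragraph --- checking that containment of a Euclidean ball in an axis-parallel cube is precisely the conjunction of the $2d$ one-dimensional ``slab'' conditions, so that the failure event decomposes cleanly into per-coordinate events over which a clean union bound can be taken; and (ii) pinning down exactly which levels $\ell$ occur in $T$, so that the union bound over levels costs only a $\log\Phi$ factor --- exactly the factor that the $\log\Phi$ inside $\Lambda$ is designed to cancel --- and simultaneously confirming that $\rho(\ell)/2^\ell$ is a fixed small constant, so that asking for $(\eps,\Lambda)$-padding ``at every level'' is meaningful rather than impossible at the finest scales. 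I expect (ii) to be the main bookkeeping obstacle, but it is bookkeeping rather than a conceptual difficulty.
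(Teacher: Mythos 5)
Your reconstruction is correct and is essentially the standard random-shift argument behind Lemma 1 of \cite{indyk2017practical}; the paper itself offers no proof of this statement, it simply imports the lemma from that reference, so there is nothing different to compare against. The only slack is the one you already flag: the union bound is over the levels actually present in $T$ (about $\log\Phi+\Lambda$ of them, not just $O(\log\Phi)$), which costs a lower-order factor absorbed into the constants in the definition of $\Lambda$.
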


Now let $x$ be any point in our dataset. We can construct $\widetilde{x} \in \R^d$, called the decompression of $x$, from $T$ with the following procedure: We follow the downward path from the root of $T$ to the leaf associated with $x$ and collect a bit string for every coordinate $d$ of $\widetilde{x}$. When going down a short edge with an associated bit vector $b$, we concatenate the $i$th bit of $b$ to the end of the bit string that we are keeping track of for the $i$th coordinate of $\widetilde{x}$. When going down a long edge, we concatenate with  a number of zeros equalling the integer associated with the long edge. 
Finally, a binary floating point is placed in the resulting bit strings of each coordinate after the bit corresponding to level $0$. The collected bits then correspond to the binary expansion of the coordinates of $\widetilde{x}$. For a more thorough description of the decompression scheme, see Section $3$ of \cite{indyk2017practical}.

The decompression scheme is useful because it allows approximate distance computations.

\begin{lemma}[Lemma 2 in \cite{indyk2017practical}]
If a point $x_i$ is $(\eps, \Lambda)$-padded in $T$, then for every $j \in [n]$,
\[ (1 - \eps) \| \widetilde{x}_i -  \widetilde{x}_j \|_1 \le \|x_i - x_j\|_1 \le (1 + \eps) \| \widetilde{x}_i - \widetilde{x}_j \|_1. \]
\end{lemma}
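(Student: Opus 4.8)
The plan is to show that decompression preserves not merely the individual points but the \emph{difference vector} $x_i-x_j$, and then to read off both inequalities of the lemma from the triangle inequality. Concretely, I would first reduce the statement to the single bound
\[ \| (\widetilde x_i-\widetilde x_j)-(x_i-x_j) \|_1 \;\le\; \tfrac{\eps}{2}\,\|x_i-x_j\|_1 . \]
Granting this, $\|\widetilde x_i-\widetilde x_j\|_1$ lies within a factor $1\pm\tfrac\eps2$ of $\|x_i-x_j\|_1$, and since $1-\eps \le 1/(1\pm\tfrac\eps2) \le 1+\eps$ for $\eps\in(0,1)$ this rearranges exactly to the stated two-sided guarantee (the case $x_i=x_j$ is trivial: the two points then share a leaf of $T$, so $\widetilde x_i=\widetilde x_j$). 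By the triangle inequality the displayed left-hand side is at most $\|x_i-\widetilde x_i\|_1+\|x_j-\widetilde x_j\|_1$, so everything reduces to bounding the per-point \emph{decompression error} of each of $x_i$ and $x_j$ against $\|x_i-x_j\|_1$.

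To do this I would work with the \emph{separation level} $\ell^{*}$ of the pair: the largest level $\ell$ at which $x_i$ and $x_j$ lie in distinct cells of $T$ (well defined and finite, since $x_i\ne x_j$ and distinct cells of a fixed level are disjoint). The key \emph{precision} fact is that, because the construction keeps subdividing any cell containing two or more points, the common cell of $x_i$ and $x_j$ is refined all the way down to level $\ell^{*}$; hence the level at which each of $x_i$ and $x_j$ first becomes isolated is at most $\ell^{*}$, and by the recording rule of \cite{indyk2017practical} (which keeps $\Lambda$ further levels of precision once a point is isolated) the decompression recovers every coordinate of each point to additive error at most $2^{\ell^{*}-\Lambda}$. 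Summing over the $d$ coordinates, $\|x_i-\widetilde x_i\|_1,\ \|x_j-\widetilde x_j\|_1 \le d\cdot 2^{\ell^{*}-\Lambda}$.

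The second ingredient is the separation estimate from padding. Since $x_i$ is $(\eps,\Lambda)$-padded it is $(\eps,\Lambda,\ell^{*})$-padded, so by Definition~\ref{def:padded_point} the level-$\ell^{*}$ cell containing $x_i$ contains the ball of radius $\rho(\ell^{*}) = 8\eps^{-1}2^{\ell^{*}-\Lambda}\sqrt d$ about $x_i$; as $x_j$ lies in a different level-$\ell^{*}$ cell it is outside this ball, whence $\|x_i-x_j\|_1 \ge \|x_i-x_j\|_2 \ge \rho(\ell^{*})$. Combining, the decompression error of the difference vector is at most $2d\cdot 2^{\ell^{*}-\Lambda}$ while $\|x_i-x_j\|_1 \ge 8\eps^{-1}\sqrt d\cdot 2^{\ell^{*}-\Lambda}$, so the error is $O(\eps\sqrt d)\cdot\|x_i-x_j\|_1$. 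The residual $\sqrt d$ factor (it arises because an $\ell_1$ error over $d$ coordinates can exceed the Euclidean padding-radius scale by $\sqrt d$) is removed by running the preprocessing with a slightly stronger padding, equivalently with $\eps$ replaced by $\eps/\sqrt d$; this enlarges $\Lambda$ in \eqref{def:Lambda} only by an additive $O(\log d)$, keeping it of the form $\log(\text{poly}(d,\log\Phi,1/\eps,1/\delta))$, so all downstream runtime bounds are unaffected up to constants, and the error is then at most $\tfrac\eps2\|x_i-x_j\|_1$, which is the reduced claim.

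The step I expect to be the real obstacle is the precision fact: one must pin down exactly how the compressed tree of \cite{indyk2017practical} records a point past the level at which its cell becomes a singleton --- how the additional $\Lambda$ levels of precision are encoded by the length-$d$ bit vectors on short edges and the integer labels on long edges, and what the decompression procedure reconstructs from them --- so as to obtain the clean per-coordinate error bound $2^{\ell^{*}-\Lambda}$ uniformly for both points of the pair, and then to reconcile the $O(d)$-versus-$\sqrt d$ mismatch cleanly. The remaining ingredients --- the triangle-inequality reductions and extracting the separation bound from the padding definition --- are routine.
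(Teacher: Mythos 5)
Note first that the paper does not actually prove this lemma: it is imported verbatim from \cite{indyk2017practical}, and the closest in-paper analogue of its proof is the accuracy analysis of Section \ref{sec:metric_accuracy}. Measured against that argument, your proposal has a genuine gap, and it sits exactly where you predicted: the ``precision fact.'' In the compression tree, precision is \emph{not} lost only after a point becomes isolated. Bits are discarded (zeroed and folded into long edges) along every long non-branching path, and such paths occur whenever all points of a cell fall into a single child cell --- i.e., for clusters of many points, at arbitrary scales, possibly far above isolation levels and far above the separation level $\ell^*$ of a given pair. (If loss occurred only below isolation levels, the tree would in general need $\Omega(n\log\Phi)$ short edges, contradicting the $O(n\Lambda)$-edge, $O(nd\Lambda+n\log n)$-bit bounds of Theorem \ref{thm:tree_properties}.) Consequently the per-point bound $\|x_i-\widetilde x_i\|_1\le d\,2^{\ell^*-\Lambda}$ is false, and the triangle-inequality reduction to per-point decompression error cannot work: $\|x_i-\widetilde x_i\|_1$ can exceed $\eps\|x_i-x_j\|_1$ by an arbitrary factor, because individual points are only reconstructed coarsely; it is only \emph{differences} that the sketch preserves. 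The missing idea is cancellation at the least common ancestor: $\widetilde x_i$ and $\widetilde x_j$ are decompressed along the same root-to-LCA path, so every bit zeroed above the LCA is zeroed identically in both and drops out of $\widetilde x_i-\widetilde x_j$, just as the true bits of $x_i$ and $x_j$ agree above the LCA. One must bound $\|(\widetilde x_i-\widetilde x_j)-(x_i-x_j)\|_1$ directly by the discarded bits \emph{below} the LCA; since the LCA is a branching node, no compressed path crosses it, so every such path has its top at level at most $\ell^*$, giving per-coordinate error $O(2^{\ell^*-\Lambda})$, after which your use of Definition \ref{def:padded_point} is the right finish. This is precisely how the analysis in Section \ref{sec:metric_accuracy} proceeds (it discards only the bits below level $\ell-2\Lambda$).

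Two smaller remarks. Your separation-level bookkeeping is fine ($x_j$ lies outside the level-$\ell^*$ cell of $x_i$, hence outside the ball of radius $\rho(\ell^*)$). But your patch for the $d$-versus-$\sqrt d$ mismatch --- rerunning the construction with $\eps$ replaced by $\eps/\sqrt d$ --- proves a modified statement rather than the lemma as stated, since it changes the padding definition and hence the hypothesis. The intended accounting closes this gap through the calibration of $\Lambda$ in \eqref{def:Lambda} together with the number of precision levels actually retained: in Section \ref{sec:metric_accuracy}, for instance, the error $d\cdot 2^{\ell-2\Lambda}$ is compared against $\eps\rho(\ell-1)/2$, and the inequality holds because $2^{-\Lambda}\le \eps\delta/(16 d^{1.5}\log\Phi)$ already absorbs the stray $\sqrt d$. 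So no change of parameters is needed, only the correct error accounting; the same calibration is what makes the constants in the cited lemma of \cite{indyk2017practical} work out.
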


We now cite the runtime and space required for $T$. The following theorem follows from the results of \cite{indyk2017practical}.

\begin{theorem}\label{thm:tree_properties}
Let $L = \log \Phi + \Lambda$.
$T$ has $O(n \Lambda)$ edges, height $L$, its total size is $O(nd\Lambda + n \log n)$ bits, and its construction time is $O(ndL)$. 
\end{theorem}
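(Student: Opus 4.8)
The plan is to derive Theorem~\ref{thm:tree_properties} directly from the construction and analysis of \cite{indyk2017practical}, so the ``proof'' is essentially an exercise in assembling the relevant guarantees together with some elementary counting. First I would recall the construction of $T$ from Section~3 of \cite{indyk2017practical}: one encloses $X$ in an axis-parallel cube of side length $\Theta(\Delta)$ (whose top level is $\log(4\Delta)$), then recursively halves each occupied cell, recording for every refinement step a $d$-bit vector indicating, coordinate by coordinate, which of the two halves each surviving point falls into; maximal non-branching runs of all-zero bit vectors are collapsed into a single long edge labeled by the length of the run. Two facts pin down where refinement stops: (i) any two distinct points are separated into different cells after at most $\log\Phi$ halvings, by the definition of the aspect ratio $\Phi$; and (ii) the construction keeps refining each already-isolated point for $\Lambda$ additional levels, which is exactly what Definition~\ref{def:padded_point} and Lemma~1 of \cite{indyk2017practical} require to certify the $(\eps,\Lambda)$-padding property. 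Combining (i) and (ii), every root-to-leaf path has length at most $\log\Phi+\Lambda=L$, which gives the claimed height.

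For the edge count I would use that a rooted tree with $n$ leaves has at most $n-1$ branching (degree $\ge 3$) nodes, so the ``branching skeleton'' of $T$, after collapsing each maximal non-branching path to a single short or long edge, has $O(n)$ edges; the only remaining edges are the short edges produced by the $\Lambda$ extra refinement levels hanging below each of the $n$ isolated leaves, contributing $O(n\Lambda)$ short edges, for $O(n\Lambda)$ edges in total. The bit-size bound then splits into two parts: each of the $O(n\Lambda)$ short edges carries a $d$-bit vector, accounting for $O(nd\Lambda)$ bits; and the tree topology (parent/child pointers for the $O(n)$ structural nodes), the $O(n)$ long-edge integers, and the leaf-to-input-point correspondence together take $O(n\log n)$ bits. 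This yields total size $O(nd\Lambda+n\log n)$.

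For the construction time I would process the points level by level, maintaining at each level the partition of $X$ into occupied cells; advancing one level costs $O(nd)$ time because each of the $n$ surviving points reads off one new bit per coordinate to decide which child cell it enters, and the buckets are refined accordingly. Over the $L$ levels this is $O(ndL)$, and collapsing zero-runs into long edges can be folded into the same sweep without changing the bound, so the construction runs in $O(ndL)$ time.

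The only real subtlety — and the step I would be most careful about — is the edge-count and bit-size accounting: one must check that long-edge compression genuinely reduces the naive $O(nL)$ edge count down to $O(n\Lambda)$ (i.e.\ that the ``wasted'' depth $\log\Phi$ sits inside non-branching all-zero runs that get collapsed), and that the pointer/topology overhead is charged to the $O(n\log n)$ term rather than blowing up to $O(n\Lambda\log n)$, by storing pointers only for the $O(n)$ structural nodes rather than per short edge. Everything else is a direct quotation of Theorem~1, Lemma~1, and Lemma~2 of \cite{indyk2017practical}.
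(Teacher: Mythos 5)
The paper does not actually prove this statement: immediately before the theorem it says ``We now cite the runtime and space required for $T$. The following theorem follows from the results of \cite{indyk2017practical}.'' So your proposal is an attempt to reconstruct an argument the paper deliberately outsources, and in spirit it is the intended one (height from separation depth $\log\Phi$ plus $\Lambda$ padding levels, edge count from the branching skeleton plus compressed runs, level-by-level construction in $O(nd)$ per level).

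The one place where your reconstruction has a real gap is exactly the step you flag but do not close. As you describe the construction --- ``recording for every refinement step a $d$-bit vector indicating which of the two halves each surviving point falls into'' --- a maximal non-branching run is \emph{not} generally a run of all-zero bit vectors: the bits along such a run are the binary expansion of the occupied child cell's position, which is generically non-zero at every level. Under that naive reading, the number of non-zero short edges would be $\Theta(nL)=\Theta(n(\log\Phi+\Lambda))$ and the size $O(ndL)$ bits, not $O(n\Lambda)$ edges and $O(nd\Lambda)$ bits. What rescues the bound in \cite{indyk2017practical} is the lossy quantization built into their tree: within each non-branching stretch the point's position is retained only to $\Lambda$ (more precisely $O(\Lambda)$) bits of precision relative to the enclosing cell and then snapped to the grid, so that beyond those $O(\Lambda)$ levels the remaining bit vectors really are zero and compress into a single long edge. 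This quantization is also precisely what makes the representation approximate rather than exact (and why the padding property of Definition \ref{def:padded_point} is needed for the distance guarantee of their Lemma 2). A complete proof must invoke this step explicitly; ``checking'' that the wasted depth sits inside all-zero runs is not a verification one can do on the raw quadtree, but a property engineered by the compression scheme. The rest of your accounting (the $O(n)$ branching skeleton, charging pointers and long-edge labels to the $O(n\log n)$ term, and the $O(ndL)$ construction sweep) is fine.
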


We contrast the above gauranttes with the naive representation of $X$ which stores $O(\log n)$ bits of precision for each coordinate and occupies $O(nd \log n)$ bits of space, whereas $T$ occupies roughly $O(nd \log \log n + n \log n)$ bits.

Finally, Theorem $2$ in \cite{indyk2017practical} implies we can create a collection of $O(\log n)$ trees $T$ (by setting $\delta$ to be a small constant in \eqref{def:Lambda}) such that every point in $X$ is padded in at least one tree in the collection.

\subsection{Step 1: Preprocessing Metric Compression Trees}\label{sec:metric_preprocess}

We now describe the preprocessing steps needed for our faster distance matrix compression. Let
\begin{equation}\label{def:w}
  w = \frac{4d\Lambda}{\log n}  
\end{equation}
and recall our setting of $\Lambda$ in \eqref{def:Lambda}. Note that we assume $w$ is an integer which implicitly assumes $4d\Lambda \ge \log n$.

First we describe the preprocessing of $T$. Consider a short edge of $T$ with an associated $d$ length bit string $b$. We break up $b$ into $w$ equal chunks, each containing $d/w$ bits. Consider a single chunk $c$. We pad (an equal number of) $2\Lambda$ many $0$'s after each bit in $c$ so that the total number of bits is $\log n/2$. We then store each padded chunk in one word. We do this for every chunk resulting in $w$ words for each short edge and we do this for all short edges in all the trees.

The second preprocessing step we perform is creating a $O(\sqrt{n}) \times O(\sqrt{n})$ table $A$. The rows and columns of $A$ are indexed by all possible bit strings with $\frac{\log n}2$ bits. The entries of $A$ record evaluations of the function $f(x,y)$ defined as follow: given $x,y$ where $x, y \in \{0,1\}^\frac{\log n}2$, consider the partition of each of them into $d/w$ blocks, each with an equal number of bits ($2\Lambda$ bits per block. Note that $2\Lambda \cdot d/w = (\log n)/2$). Each block defines an integer. Doing so results in $d/w$ integers $x^1, \ldots, x^{d/w}$ derived from $x$ and $w$ integers $y^1, \ldots, y^{d/w}$ derived from $y$. Finally,
\[ f(x,y) = \sum_{i=1}^{d/w} |x^i - y^i|. \]

\subsection{Step 2: Depth-First Search}\label{sec:metric_algo}
We now calculate one row of the distance matrix from point a padded point $x$ to all other points in our dataset. Our main subroutine is a tree search procedure. Its input is a node $v$ in a tree $T$ and it performs a depth-first search on the subtree rooted at $v$ as described in Algorithm \ref{alg:process_tree}. Given an internal vertex $v$, it calculates all the distances between the padded point $x$ to all data points in our dataset which are leaf nodes in the subtree rooted at $v$. A summary of the algorithm follows.

We perform a depth-first search starting at $v$. As we traverse the tree, we keep track of the current embedding of the internal nodes  via collecting bits along the edges of $T$: we append bits when we descend the tree and remove as we move up edges. However, we only keep track of this embedding up to $2\Lambda$ levels below $v$. After that, we continue traversing the tree but don't update the embedding. The reason for this is after $2\Lambda$ levels, the embedding is precise enough for all nodes below with respect to computing the distance to $x$. Towards this end, we also track how many levels below $v$ the tree search is currently at and update this value appropriately. 

The current embedding is tracked using $w$ words. Recall that the bit string of every short edge has been repackaged into $w$ words, each `responsible' for $d/w$ coordinates. Furthermore, in each word on the edge, we have padded $0$'s between the bits of each $d/w$ coordinates. When we need to update the current embedding by incorporating the bits along a short edge $e$, we simply perform a bit shift on each of the $w$ words on $e$ and add it to the $w$ words we are keeping track of. We need to make sure we place the bits `in order.' That is for our tracked embedding, for every $d$ coordinates, the bits on an edge $e$ should precede the bits on the edge directly following $e$ in the tree search. Due to the padding from the preprocessing step, the bit shift implies the bits on the short edges after $e$ will be placed in their appropriate corresponding places in order in the embedding representation.

\begin{algorithm}[H]
\caption{\label{alg:process_tree}DFS in Subtree}
\begin{algorithmic}[1]
\State \textbf{Input:} Metric Compression Tree $T$, node $v$
\Procedure{Search}{ $T,v$}
\State Initialize a global counter $p$ for the number of levels which have been processed. Initially set to $0$ and will be at most $2\Lambda$
\State Initialize $w$ words $t_1, \ldots, t_w$, all initially $0$
\State Initialize a global counter $r$ for the current level which is initially set to the level of $v$ in $T$
\State Perform a depth-first search in the subtree rooted at $v$. Run \texttt{Process-Short-Edge} if a short edge is encountered, \texttt{Process-Long-Edge} if a long edge is encountered, and \texttt{Process-Leaf} when we arrive at a leaf.
\EndProcedure
\end{algorithmic}
\end{algorithm}
While performing the depth-first search, we will encounter both short and long edges. When encountering a short edge, we run the function \texttt{Process-Short-Edge} and similarly, we run \texttt{Process-Long-Edge} when a long edge is encountered. Finally if we arrive at a leaf node, we run \texttt{Process-Leaf}.

\begin{algorithm}[H]
\caption{\label{alg:process_short}Process Short Edge}
\begin{algorithmic}[1]
\State \textbf{Input:} Short edge $e$, number of processed nodes $p$, $t_1, \ldots, t_w$
\Procedure{Process-Short-Edge}{$e, p, t_1,\ldots, t_w$}
\State Let $e_1, \ldots, e_w$ be the $w$ words associated with edge $e$
\State If search is traversing down $e$ and $p < 2\Lambda$,  add $2^{-p}e_i$ to $t_i$ for all $1 \le i \le w$ and increment $p$
\State If search is traversing up $e$ and $p \le 2\Lambda$, subtract $2^{-p}e_i$ from $t_i$ for all $1 \le i \le w$ and decrement $p$
\State Update $r$ to the level of the current node
\EndProcedure
\end{algorithmic}
\end{algorithm}

\begin{algorithm}[H]
\caption{\label{alg:process_long}Process Long Edge}
\begin{algorithmic}[1]
\State \textbf{Input:} Long edge $e$, number of processed nodes $p$
\Procedure{Process-Long-Edge}{$e, p$}
\State   If search is traversing down $e$ and $p < 2\Lambda$, increment $p$
\State If search is traversing up $e$ and $p \le 2\Lambda$, decrement $p$
\State Update $r$ to the level of the current node
\EndProcedure
\end{algorithmic}
\end{algorithm}

When we arrive at a leaf node $y$, we currently have the decompression of $y$ computed from the tree. Note that we only have kept track of the bits after node $v$ (up to limited precision) since all prior bits are the same for $y$ and $x$ since they are in the same subtree. More specifically, we have $w$ words $t_1, \ldots, t_w$. The first word $t_1$ has $2\Lambda$ bits of each of the first $d/w$ coordinates of $y$. For every coordinate, the $2\Lambda$ bits respect the order described in the decompression step in Section \ref{sec:metric_preprocess}. A similar fact is true for the rest of the words $t_i$. Now to calculate the distance between $x$ and $y$, we just have to consider the $2\Lambda$ bits of all $d$ coordinates of $x$ which come after descending down vertex $v$. We then repackage these $2d\Lambda$ total bits into $w$ words in the same format as $y$. Note this preprocessing for $x$ only happens once (at the subtreee level) and can be used for all leaves in the subtree rooted at $v$.

\begin{algorithm}[H]
\caption{\label{alg:process_leaf}Process Leaf}
\begin{algorithmic}[1]
\State \textbf{Input: $t_1, \ldots, t_w$} 
\Procedure{Process-Leaf}{$y, t_1, \ldots, t_w$}
\State Let the point $y$ correspond to the current leaf node
\State Let $s_1,\ldots, s_w$ denote the embedding of $x$ after node $v$, preprocessed to be in the same format as  $t_1, \ldots, t_w$
\State Report $\sum_{i=1}^w A[t_i, s_i]$ as the distance between $x$ and $y$
\EndProcedure
\end{algorithmic}
\end{algorithm}

Finally, the complete algorithm just calls Algorithm \ref{alg:process_tree} on successive parent nodes of $x$. We mark each subtree that has already been processed (at the root node) so that the subtree is only ever visited once. The number of calls to Algorithm \ref{alg:process_tree} is at most the height of the tree, which is bounded by $O(\log \Phi + \Lambda)$. We then repeat this for all points $x$ in our dataset (using the tree which $x$ is padded in) to create the full distance matrix.

\subsection{Runtime Analysis}\label{sec:metric_runtime}
We consider the runtime required to compute the row corresponding to a padded point $x$ in the distance matrix. Multiplying by $n$ results in the total runtime. Consider the tree $T$ in which $x$ is padded in and which we use for the algorithm described in the previous section and recall the properties of $T$ outlined in Theorem \ref{thm:tree_properties}. $T$ has $O(n\Lambda)$ edges, each of which is only visited at most twice in the tree search (going up and down). Thus the time to traverse the tree is $O(n \Lambda)$. There are also at most $O(n \Lambda)$ short edges in $T$. Updating the embedding given by $t_1, \ldots, t_w$ takes $O(w)$ time per edge since it can be done in $O(w)$ total word operations. Long edges don't require this time since they represent $0$ bits; for long edges, we just increment the counter for the current level. Altogether, the total runtime for updating $t_1, \ldots, t_w$ across all calls to Algorithm \ref{alg:process_tree} for the padded point $x$ is $O(n \Lambda w)$. Finally, calculating the distance from $x$ to a fixed point $y$ requires $O(w)$ time since we just index into the array $A$ $w$ times. Thus the total runtime is dominated by $O(n \Lambda w)$. Finally, the total runtime for computing all rows of the distance matrix is 
\[O(n^2 \Lambda w) = O\left(\frac{n^2d \Lambda^2}{\log n}\right) = O\left( \frac{n^2 d}{\log n} \, \log^2\left( \frac{d \log \Phi}{\eps} \right) \right)\]
by setting $\delta$ to be a small constant in \eqref{def:Lambda}.

\subsection{Accuracy Analysis}\label{sec:metric_accuracy}
We now show that the distances we calculate are accurate within a $1\pm \eps$ multiplicative factor. The lemma below shows that if a padded point $x$ and another point $y$ have a sufficiently far away least-common ancestor in $T$, then we can disregard many lower order bits in the decompression computed from $T$ while still guaranteeing accurate distance measurements. The lemma crucially relies on $x$ being padded.

\begin{lemma}
Suppose $x$ is $(\eps, \Lambda)$-padded in $T$. For another point $y$, suppose the least common ancestor of $x$ and $y$ is at level $\ell$. Let $\widetilde{x}$ and $\widetilde{y}$ denote the sketches of $x$ and $y$ produced by $T$. Let $\widetilde{x}'$ be a modified version of $\widetilde{x}$ where for each of the $d$ coordinates, we remove all the bits acquired after level $\ell-2\Lambda$. Similarly define $\widetilde{y}'$. Then
\[ \|\widetilde{x}' - \widetilde{y}'\|_1 = (1 \pm \eps) \|x -y\|_1. \]
\end{lemma}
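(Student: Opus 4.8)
The plan is to reduce the statement to the distance guarantee already available for the full sketches (Lemma~2 of \cite{indyk2017practical}, which applies since $x$ is $(\eps,\Lambda)$-padded) by controlling how far the truncations $\widetilde{x}',\widetilde{y}'$ can move $\widetilde{x},\widetilde{y}$. First I would bound the per-coordinate truncation error. The bits of $\widetilde{x}$ discarded to form $\widetilde{x}'$ are exactly those acquired below the level-$(\ell-2\Lambda)$ ancestor of the leaf of $x$, and in the binary expansion of each coordinate these bits all have place value strictly below $2^{\ell-2\Lambda}$; summing the geometric tail gives $|\widetilde{x}(i)-\widetilde{x}'(i)|\le 2^{\ell-2\Lambda}$ for every $i\in[d]$, and the same for $\widetilde{y}$. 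Hence $\|\widetilde{x}-\widetilde{x}'\|_1,\ \|\widetilde{y}-\widetilde{y}'\|_1\le d\,2^{\ell-2\Lambda}$, so by two applications of the triangle inequality $\big|\,\|\widetilde{x}'-\widetilde{y}'\|_1-\|\widetilde{x}-\widetilde{y}\|_1\,\big|\le 2d\,2^{\ell-2\Lambda}$.

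Next I would show this additive slack is dominated by $\eps\|x-y\|_1$. Since the least common ancestor of $x$ and $y$ is at level $\ell$, the two points lie in \emph{distinct} grid cells one level below, i.e.\ distinct cells of side length $2^{\ell-1}$ (distinct children of the LCA correspond to distinct subcells even when reached through compressed long edges). As $x$ is $(\eps,\Lambda,\ell-1)$-padded, its level-$(\ell-1)$ cell contains the Euclidean ball of radius $\rho(\ell-1)=8\eps^{-1}2^{\ell-1-\Lambda}\sqrt{d}$ around $x$; since $y$ lies outside that cell, $\|x-y\|_1\ge\|x-y\|_2>\rho(\ell-1)=4\eps^{-1}\sqrt{d}\,2^{\ell-\Lambda}$. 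Therefore $2d\,2^{\ell-2\Lambda}\le\eps\|x-y\|_1$ provided $2d\,2^{-\Lambda}\le 4\sqrt{d}$, i.e.\ $2^{\Lambda}\ge\sqrt{d}/2$, which holds with a $\text{poly}(d)$ factor to spare because $2^{\Lambda}=16d^{1.5}\log\Phi/(\eps\delta)$ by \eqref{def:Lambda}.

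Combining the two estimates gives $\|\widetilde{x}'-\widetilde{y}'\|_1=\|\widetilde{x}-\widetilde{y}\|_1\pm\eps\|x-y\|_1$, and Lemma~2 of \cite{indyk2017practical} gives $\|\widetilde{x}-\widetilde{y}\|_1=(1\pm O(\eps))\|x-y\|_1$, so $\|\widetilde{x}'-\widetilde{y}'\|_1=(1\pm O(\eps))\|x-y\|_1$; rescaling $\eps$ by a constant (equivalently, running the construction with $\eps/c$) yields the stated $(1\pm\eps)$ bound. The one place I would be careful — the main, if modest, obstacle — is matching the ``$2\Lambda$ levels'' of discarded precision to the padding radius: one must invoke padding at level $\ell-1$ rather than $\ell$ (where $x$ and $y$ share a cell), and check that the $2^{-\Lambda}$ gain from dropping $2\Lambda$ extra levels overcomes the $d$-versus-$\sqrt{d}$ loss incurred in passing between $\ell_1$ and $\ell_2$; this is exactly the inequality $2^{\Lambda}\gtrsim\sqrt{d}$ secured by the choice of $\Lambda$ in \eqref{def:Lambda}. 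Everything else is routine arithmetic.
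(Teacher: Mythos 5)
Your proposal is correct and follows essentially the same route as the paper's proof: bound the truncation error by roughly $d\cdot 2^{\ell-2\Lambda}$ per pair, lower-bound $\|x-y\|_1$ by $\rho(\ell-1)$ using padding at level $\ell-1$ (where the LCA condition puts $x$ and $y$ in distinct cells), verify via the choice of $\Lambda$ in \eqref{def:Lambda} that the additive error is at most $\eps\|x-y\|_1$, and then combine with the decompression guarantee of \cite{indyk2017practical} after a constant rescaling of $\eps$. Your write-up merely makes explicit a few steps the paper leaves as ``easily verified,'' such as the per-coordinate geometric-tail bound and the $2^{\Lambda}\gtrsim\sqrt{d}$ arithmetic.
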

\begin{proof}
Since $x$ is padded, we know that $\|x-y\|_1 \ge p(\ell-1)$ by Definition \ref{def:padded_point}. On the other hand, if we ignore the bits after level $\ell-2\Lambda$ for every coordinate of $\widetilde{x}$ and $\widetilde{y}$, the additive approximation error in the distance is bounded by a constant factor times 
\[d \cdot \sum_{i=-\infty}^{\ell-2\Lambda-1} 2^i = d \cdot 2^{\ell-2\Lambda}. \]
From our choice of $\Lambda$, we can easily verify that $d \cdot 2^{\ell-2\Lambda} \le\eps p(\ell-1)/2$. Putting everything together and adjusting the value of $\eps$, we have
\[\|\widetilde{x}' - \widetilde{y}'\|_1 = \|\widetilde{x} - \widetilde{y}\|_1 \pm \eps p(\ell-1)/2 =  (1 \pm \eps/2) \|x -y\|_1 \pm \eps p(\ell-1)/2 =  (1 \pm \eps) \|x -y\|_1\]
where we have used the fact that $ \|\widetilde{x} - \widetilde{y}\|_1 =  (1 \pm \eps/2) \|x -y\|_1$ from the guarantees of the compression tree of \cite{indyk2017practical}.
\end{proof}

Putting together our results above along with the Johnson-Lindenstrauss Lemma and Theorem \ref{thm:l2embed} proves the following theorem.

\begin{theorem}\label{thm:metric_compression}
Let $X = \{x_1 , \ldots, x_n\}$ be a dataset of $n$ points in $d$ dimensions with aspect ration $\Phi$. We can calculate a $n \times n$ matrix $B$ such that each $(i,j)$ entry $B_{ij}$ of $B$ satisfies
\[ (1-\eps)\|x_i - x_j\|_1 \le B_{ij} \le   (1+\eps)\|x_i - x_j\|_1\]
 in time 
\[O\left( \frac{n^2 d}{\log n} \, \log^2\left( \frac{d \log \Phi}{\eps} \right) \right).\]
Assuming the aspect ratio is polynomially bounded, we can compute $n \times n$ matrix $B$ such that each $(i,j)$ entry $B_{ij}$ of $B$ satisfies
\[ (1-\eps)\|x_i - x_j\|_2 \le B_{ij} \le   (1+\eps)\|x_i - x_j\|_2\]
with probability $1-1/\text{poly}(n)$. The construction time is
\[O\left( \frac{n^2}{\eps^2} \, \log^2\left( \frac{\log n}{\eps} \right) \right).\]
\end{theorem}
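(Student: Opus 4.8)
The plan is to stitch together the machinery of Sections~\ref{sec:metric_compression}--\ref{sec:metric_accuracy}. For the $\ell_1$ bound, I would first build (as in the last paragraph of Section~\ref{sec:metric_compression}, with $\delta$ a small constant) a collection of $O(\log n)$ compression trees so that every $x_i$ is $(\eps,\Lambda)$-padded in at least one of them, and run the preprocessing of Section~\ref{sec:metric_preprocess} on each tree: repack every short edge's $d$-bit label into $w$ words with $2\Lambda$ zero-pads between the $d/w$ coordinates packed into each word, and build the $\sqrt{n}\times\sqrt{n}$ table $A$. By Theorem~\ref{thm:tree_properties} each tree has $O(n\Lambda)$ edges, so these one-time costs --- $O(ndL\log n)$ for the $O(\log n)$ trees (with $L=\log\Phi+\Lambda$), $O(n\Lambda w\log n)$ for repacking, $O(n\log n)$ for $A$ --- are dominated by the $O(n^2\Lambda w)$ spent below.

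To compute the $i$-th row of $B$, pick a tree $T$ in which $x_i$ is padded and invoke Algorithm~\ref{alg:process_tree} on the successive ancestors of $x_i$ in $T$ --- call them $v_1,v_2,\dots$ from the leaf upward --- marking each subtree once processed. At a leaf $x_j$, \texttt{Process-Leaf} returns $\sum_m A[t_m,s_m]$, which by construction of $A$ equals $\|\widetilde{x}_i'-\widetilde{x}_j'\|_1$, the $\ell_1$ distance between the decompressions of $x_i$ and $x_j$ truncated $2\Lambda$ levels below the node currently being searched. The subtree-marking guarantees that every $x_j$ handled by the call on a given ancestor $v$ has $\mathrm{lca}(x_i,x_j)=v$: for $v=v_1$ because $v_1$ is the deepest ancestor of $x_i$ and hence of every $x_j$ in its subtree, and for a later $v$ because the path of such an $x_j$ leaves the path of $x_i$ exactly at $v$. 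Hence the level of $v$ is precisely the level $\ell$ appearing in the accuracy Lemma of Section~\ref{sec:metric_accuracy}, and since $x_i$ is padded in $T$ that Lemma gives $\sum_m A[t_m,s_m]=\|\widetilde{x}_i'-\widetilde{x}_j'\|_1=(1\pm\eps)\|x_i-x_j\|_1$, which is the desired entrywise bound.

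The running time is the count of Section~\ref{sec:metric_runtime}: each of the $O(n\Lambda)$ edges is traversed $O(1)$ times, and folding a short edge into $t_1,\dots,t_w$ or performing a leaf's $w$ table look-ups costs $O(w)$ word operations, so one row costs $O(n\Lambda w)$ and all $n$ rows cost $O(n^2\Lambda w)=O\!\big(\tfrac{n^2 d}{\log n}\log^2(\tfrac{d\log\Phi}{\eps})\big)$ after substituting \eqref{def:w} and $\Lambda=O(\log(d\log\Phi/\eps))$. For the $\ell_2$ statement I would first apply the Johnson--Lindenstrauss-type embedding of Theorem~\ref{thm:l2embed} with target dimension $k=O(\eps^{-2}\log n)$; a union bound over the $\binom{n}{2}$ pairs preserves every $\|x_i-x_j\|_2$ up to a $(1\pm\eps)$ factor inside the resulting $\ell_1$ space with probability $1-1/\mathrm{poly}(n)$, and since all distances move by at most $(1\pm\eps)$ the embedded point set still has $\mathrm{poly}(n)$ aspect ratio. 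Running the $\ell_1$ construction on the embedded points with $d\leftarrow k$ and $\log\Phi=O(\log n)$ then yields time $O\!\big(\tfrac{n^2 k}{\log n}\log^2(\tfrac{k\log n}{\eps})\big)=O\!\big(\tfrac{n^2}{\eps^2}\log^2(\tfrac{\log n}{\eps})\big)$, with the $O(ndk)$ cost of applying the embedding (or $O(nd\log d)$ via a fast transform) being lower order.

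The part that needs care is the accuracy bookkeeping, not the arithmetic: one must verify that the DFS's ``stop updating $t_1,\dots,t_w$ once $p=2\Lambda$'' rule exactly realizes the ``delete every bit below level $\ell-2\Lambda$'' truncation in the accuracy Lemma --- which rests on the observation above that subtree-marking pins $\mathrm{lca}(x_i,x_j)$ to the level of the searched node --- and that the $2\Lambda$ zero-pads between the $d/w$ coordinates of each word stop borrows from leaking across coordinates, so that a single look-up $A[t_m,s_m]$ is genuinely the partial $\ell_1$ sum over its block of coordinates. Once these are checked, everything else is substitution of the parameter settings \eqref{def:Lambda} and \eqref{def:w}.
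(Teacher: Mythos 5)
Your proposal is correct and follows essentially the same route as the paper: it assembles the metric compression trees of Section~\ref{sec:metric_compression}, the word-packing preprocessing and DFS of Sections~\ref{sec:metric_preprocess}--\ref{sec:metric_algo}, the $O(n^2\Lambda w)$ runtime count of Section~\ref{sec:metric_runtime}, and the padded-point accuracy lemma, then reduces the $\ell_2$ case to $\ell_1$ via Theorem~\ref{thm:l2embed} with target dimension $O(\eps^{-2}\log n)$. Your explicit checks --- that subtree-marking pins the LCA to the level of the searched node, and that the zero-padding prevents carries from crossing coordinate blocks --- are details the paper leaves implicit but are handled consistently with its construction.
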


\subsection{A Faster Algorithm for \texorpdfstring{$\ell_{\infty}$}{L-Infinity} Distance Matrix Construction Over Bounded Alphabet}
In this section, we show how to create the $\ell_{\infty}$ distance matrix. Recall from Section \ref{sec:l_infinity_lowerbound} that there exists no $o(n^2)$ time algorithm to compute a matrix-vector query for the $\ell_{\infty}$ distance matrix, assuming SETH, even for $n$ points in $\{0,1,2\}^d$. This suggests that any algorithm for computing a matrix-vector query needs to initialize the distance matrix. However, there is still a gap between a $\Omega(n^2)$ lower bound for matrix-vector queries and the naive $O(n^2d)$ time needed to compute the $\ell_{\infty}$ distance matrix. We make progress towards showing that this gap is not necessary. Our main result is that surprisingly, we can \emph{initialize} the $\ell_{\infty}$ distance matrix in time much faster than the naive $O(n^2d)$ time. 

\begin{theorem}
Given $n$ points over $\{0,1,\ldots, M\}^d$, we can initialize the exact $\ell_{\infty}$ distance matrix in time $O(M^{w-1} n^2 (d \log M)^{w-2})$ where $w$ is the matrix multiplication constant. We can also initialize a  $n \times n$ matrix $B$ such that each $(i,j)$ entry $B_{ij}$ of $B$ satisfies
\[ (1-\eps)\|x_i - x_j\|_{\infty} \le B_{ij} \le   (1+\eps)\|x_i - x_j\|_{\infty}\]
in time $\tilde{O}(\eps^{-1}n^2(dM)^{w-2})$.
\end{theorem}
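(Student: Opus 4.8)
The plan is to reduce the $\ell_\infty$ distance matrix over the bounded alphabet $\{0,1,\ldots,M\}^d$ to a small number of Boolean-type matrix products, exploiting the fact that $\|x_i-x_j\|_\infty$ takes only $M+1$ possible values $0,1,\ldots,M$. The key observation is that $\|x_i-x_j\|_\infty \ge t$ if and only if there exists a coordinate $k$ with $|x_i(k)-x_j(k)| \ge t$, and $\|x_i - x_j\|_\infty = \sum_{t=1}^{M} \mathbf{1}[\|x_i-x_j\|_\infty \ge t]$. So it suffices to compute, for each threshold $t \in \{1,\ldots,M\}$, the $n\times n$ indicator matrix $N^{(t)}$ with $N^{(t)}_{i,j} = \mathbf{1}[\exists k: |x_i(k)-x_j(k)|\ge t]$, and then sum the $M$ matrices entrywise in $O(Mn^2)$ time.

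For a fixed $t$, I would compute the complementary count $P^{(t)}_{i,j} = \#\{k : |x_i(k)-x_j(k)| < t\}$ via a rectangular matrix product, and then set $N^{(t)}_{i,j} = \mathbf{1}[P^{(t)}_{i,j} < d]$. To realize $P^{(t)}$ as a product, encode each point $x_i$ by a $0/1$ vector $\phi_t(x_i) \in \{0,1\}^{d \cdot (M+1)}$ (or more efficiently $\{0,1\}^{O(dM)}$), where for each coordinate $k$ we place a length-$(M+1)$ indicator of the value $x_i(k)$; and encode $x_j$ by a vector $\psi_t(x_j)$ in the same dimension where the block for coordinate $k$ is the indicator of the \emph{interval} of values within distance $t-1$ of $x_j(k)$. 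Then $\langle \phi_t(x_i), \psi_t(x_j)\rangle = P^{(t)}_{i,j}$, so $P^{(t)} = \Phi_t \Psi_t^\top$ where $\Phi_t, \Psi_t$ are $n \times O(dM)$ Boolean matrices. This product is computed by blocking into $O(n/(dM)) \times O(n/(dM))$ grids of $(dM)\times(dM)$ square multiplications, for a cost of $O((n/(dM))^2 (dM)^w) = O(n^2 (dM)^{w-2})$ per threshold, hence $O(M \cdot n^2(dM)^{w-2})$ total after summing over $t$. To match the stated bound $O(M^{w-1} n^2 (d\log M)^{w-2})$ one replaces the unary length-$M$ interval encoding by a binary/prefix-sum encoding of size $O(\log M)$ per coordinate (so the inner product computes prefix counts rather than raw interval counts), trading the $M$ inside the product for a $\log M$ while the factor $M^{w-1}$ collects the per-threshold overhead $M$ times the $(dM)\mapsto (d\log M)$ adjustment — I would set this encoding up carefully so that one product per threshold recovers $P^{(t)}$ exactly.

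For the approximate version with runtime $\tilde O(\eps^{-1} n^2 (dM)^{w-2})$, instead of computing all $M$ thresholds we only need enough thresholds to pin down $\|x_i-x_j\|_\infty$ up to a $(1\pm\eps)$ factor; it suffices to use the $O(\eps^{-1}\log M)$ thresholds of the form $t = \lceil (1+\eps)^s \rceil$ for $s = 0,1,\ldots, O(\eps^{-1}\log M)$, and for each pair output (say) the largest such $t$ with $N^{(t)}_{i,j} = 1$, which approximates the true max to within a multiplicative $(1+\eps)$. This yields $O(\eps^{-1}\log M)$ rectangular products of cost $O(n^2(dM)^{w-2})$ each, giving the claimed bound up to the $\log M$ hidden in $\tilde O$.

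The main obstacle is the careful bookkeeping of the encoding: making sure that a single Boolean rectangular product with blocks of size $O(dM)$ (resp.\ $O(d\log M)$) exactly computes the threshold indicator $N^{(t)}$, handling the boundary/clamping at values $0$ and $M$, and verifying that the block-decomposition bound $O((n/(d\log M))^2 (d\log M)^w)$ indeed simplifies to $O(n^2(d\log M)^{w-2})$ (which needs $n \ge d\log M$, otherwise one simply uses naive $O(n^2 d)$ computation and the bound holds trivially). Everything else — summing the threshold matrices, and the $(1\pm\eps)$ approximation argument for the geometric grid of thresholds — is routine.
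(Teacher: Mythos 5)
Your proposal is correct, and it shares the paper's overall skeleton---decompose $\|x_i-x_j\|_{\infty}=\sum_{t=1}^{M}\mathbf{1}[\|x_i-x_j\|_{\infty}\ge t]$, realize each threshold indicator via one rectangular matrix product evaluated by blocking into square fast multiplications, and for the approximate version restrict to a geometric grid of $O(\eps^{-1}\log M)$ thresholds---but the key step is implemented by a genuinely different reduction. The paper separates $\|x-y\|_{\infty}\le i$ from $>i$ by computing $\|x-y\|_p^p$ with $p=O(M\log d)$ (so that $d\,i^p\le (i+1)^p$) and linearizing this power into an inner product in $O(pd)$ coordinates, giving cost $O(n^2(pd)^{w-2})$ per threshold; you instead use a unary one-hot encoding of each coordinate value together with an interval indicator of radius $t-1$, so that a single Boolean product of $n\times O(dM)$ matrices exactly counts $\#\{k:|x_i(k)-x_j(k)|<t\}$, and the indicator is read off by comparing to $d$. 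Your route avoids the polynomial/threshold-separation argument entirely, computes an exact count rather than a separated quantity, and in fact yields $O(M\cdot n^2(dM)^{w-2})=O(M^{w-1}n^2d^{w-2})$, which already implies (indeed slightly improves) the stated exact bound, so your digression about a binary/prefix-sum encoding to ``match'' $(d\log M)^{w-2}$ is unnecessary and is the one muddled part of the write-up; the paper's own choice of $p$ actually produces a $\log d$ rather than $\log M$ factor, so nothing is lost by dropping that refinement. The approximate argument (largest grid threshold below the true value is within a $1+\eps$ factor) matches the paper's and gives the claimed $\tilde{O}(\eps^{-1}n^2(dM)^{w-2})$. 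The only caveat, which you correctly flag and the paper leaves implicit, is that the blocking step assumes $n$ is at least the inner dimension ($dM$ for you, $O(dM\log d)$ for the paper); your claim that the naive $O(n^2d)$ fallback ``trivially'' meets the bound in the opposite regime is not airtight, but since the paper makes the same implicit assumption this is not a gap relative to its proof.
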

Thus for $M = O(1)$, which is the setting of the lower bound, we can initialize the distance matrix in time $O(n^2 d^{w-2})$ and thus compute a matrix-vector query in that time as well. 

\begin{proof}
The starting point of the proof is to first design an algorithm which constructs a matrix with $(i,j)$ entry an indicator vector for $\|x-y\|_{\infty} \le i$ or  $\|x-y\|_{\infty} > i$. Given this, we can then sum across all $M$ choices and construct the full distance matrix. Thus it suffices to solve this intermediate task.

Pick a sufficiently large $p$ such that $di^p \le (i+1)^p$. A choice of $p = O(M \log d)$ suffices. Now in the case that $\|x-y\|_{\infty} \le i$, we have $\|x-y\|_p^p \le di^p$ and otherwise, $\|x-y\|_p^p \ge (i+1)^p$. Thus, the matrix $C$ with the $(i,j)$ entry being $\|x_i-x_j\|_p^p$ is able to distinguish the two cases so it suffices to create such a matrix. Now we can write $\|x-y\|_p^p$ as an inner product in $O(pd)$ variables, i.e., it is a gram matrix. Thus computing $C$ can be done by computing a product of $n \times O(pd)$ matrix by a $O(pd) \times n$ matrix, which can be done in 
\[ O\left( \left( \frac{n}{pd} \right)^2 (pd)^{w-2} \right) = O(n^2 (pd)^{w-2}).\]
time by partitioning each matrix into square submatrices of dimension $O(pd)$. Plugging in the bound for $p$ and considering all possible choices of $i$ results in the final runtime bound of $O(M^{w-1} n^2 (d \log M)^{w-2})$, as desired.

Now if we only want to approximate each entry up to a multiplicative $1\pm \eps$ factor, it suffices to only loop over $i$'s which are increasing by powers of $1+c\eps$ for a small constant $c$. This replaces an $O(M)$ factor by an $O(\eps^{-1} \log M)$ factor.
\end{proof}

\end{document}